\documentclass[journal]{IEEEtran}
\usepackage{flushend}
\usepackage{graphicx} 
\usepackage{subcaption}
\usepackage{tabularray}

\usepackage{tikz}
\usetikzlibrary{arrows.meta,decorations.pathreplacing,calc,positioning}

\usepackage[ruled,vlined,linesnumbered]{algorithm2e}
\usepackage{cite}
\usepackage{graphicx} 
\usepackage{subcaption}
\usepackage{cite}
\usepackage{hyperref}
\hypersetup{
    colorlinks,
    linkcolor={red!50!black},
    citecolor={blue!50!black},
    urlcolor={blue!50!black}
}
\usepackage{amsmath, amssymb, amsfonts, enumerate, bbold, multirow,booktabs,bm}

\newtheorem{assumption}{Assumption}
\newtheorem{theorem}{Theorem}
\newtheorem{lemma}{Lemma}
\newtheorem{proposition}{Proposition}
\newtheorem{corollary}{Corollary}

\newtheorem{remark}{Remark}
\newtheorem{definition}{Definition}

\newcommand{\rev}[1]{{\color{black}#1}}
\newcommand{\revv}[1]{{\color{black}#1}}

\usepackage[usenames,dvipsnames,table]{xcolor}
\usepackage{array} 

\begin{document}

\title{
Extended Version: Storage-Based Strategic Manipulation of Constraint-Binding Patterns in Power Networks
}

\author{Mehdi Davoudi, Minghao Mou, and Junjie Qin
\thanks{M. Davoudi, M. Mou, and J. Qin are with the Elmore School of Electrical and Computer Engineering, Purdue University, West Lafayette, IN, US. Emails:
        {\tt\small \{mdavoudi,mmou,jq\}@purdue.edu}}%
}


\maketitle

\addtolength{\textfloatsep}{-2mm}

\begin{abstract}
	This paper studies the strategic market participation of a monopolistic energy
	storage aggregator (ESA) in a day-ahead electricity market. The ESA coordinates
	geographically distributed storage units, submits a coordinated bid for its
	portfolio, and may hold financial transmission rights (FTRs). The system
	operator clears the market through a network-constrained, multi-period economic
	dispatch, determining generation and load schedules, nodal prices,
	energy-market payments, and FTR payoffs. We formulate the ESA--system-operator
	interaction as a Stackelberg game and characterize its equilibrium through a
	constraint-binding-pattern decomposition of the market-clearing problem.
	Beyond enabling equilibrium computation, the framework reveals how the ESA can
	increase its profit by strategically inducing or avoiding particular
	constraint-binding patterns. It also establishes a novel welfare result: although strategic storage without
	FTRs is known to weakly improve social welfare relative to the no-storage case,
	certain FTR positions can overturn this guarantee by strengthening the ESA's
	incentive to induce particular patterns, causing social welfare to fall below
	the no-storage level. Motivated
	by these findings, we develop two system-operator mechanisms for limiting
	undesirable ESA behavior and its adverse effects on market outcomes and social
	welfare. Finally, a three-bus study illustrates the theoretical findings, while IEEE test systems
	demonstrate the scalability of the proposed method.
\end{abstract}

\section{Introduction}\label{sec:intro}

Energy storage is playing an expanding role in power systems by providing
flexibility to integrate renewable generation, balance supply and demand, and
support reliability services \cite{denholm2010role}. Storage is deployed across different scales, from utility-scale batteries, such as those
in Texas \cite{eolian} and California \cite{mosslandings}, to smaller
behind-the-meter systems at residential and commercial sites. Flexible loads can also provide storage-like capabilities by shifting electricity consumption across time. One way to monetize the operational flexibility provided by these storage and storage-like resources is through participation in electricity markets, which creates revenue opportunities for resource owners and encourages further investment.

Although policies such as FERC Order No.~841 have facilitated such market
participation \cite{FERCOrder841}, many smaller resources remain unable to
participate effectively on their own. Aggregation therefore provides an important mechanism for enabling their participation, as further supported by FERC Order No.~2222 \cite{FERCOrder2222}. An energy storage aggregator (ESA) can coordinate collections of distributed storage and storage-like resources and participate in the market as a single entity with greater effective capacity than the individual resources alone.

Once aggregated storage resources participate in wholesale electricity markets, the ESA’s submitted charging and discharging schedules for these resources become inputs to the market-clearing process and can affect market outcomes. This creates an opportunity for the ESA to coordinate its storage resources strategically. When these resources are distributed across multiple buses, the ESA can coordinate its actions across both time and network locations, introducing a \emph{spatial} dimension to its strategic behavior beyond the \emph{temporal} dimension of conventional storage arbitrage.

This spatial dimension matters because electricity markets are cleared over a
network in which decisions at one location affect flows, congestion, and prices
elsewhere. A prominent manifestation of this coupling is transmission
congestion: when a line binds, demand that could otherwise be served by
lower-cost remote generation may require more expensive local generation,
raising locational marginal prices (LMPs) and congestion costs. Indeed, U.S. transmission-congestion costs
exceeded $\$11.5$ billion in 2023~\cite{gridstrategies2023}. More broadly, line congestion is one example of a binding constraint within a
market-clearing \emph{constraint-binding pattern}, defined by the set of
constraints active at the market-clearing optimum. Such patterns may also involve generators or elastic
loads reaching their operating limits, thereby affecting dispatch, LMPs, and
social welfare. The system operator may therefore regard some patterns as more desirable than
others or seek to avoid certain patterns based on operational or market-design
criteria.

A strategic ESA, however, may prefer different patterns and coordinate its
charging and discharging across time and buses to induce them. For example, causing a transmission line to become congested may create
favorable spatial or intertemporal price differences for charging and
discharging, even though the resulting congestion may be undesirable from the
system operator's perspective. We refer to such a divergence as
\emph{constraint-binding-pattern manipulation}
(see Definition~\ref{def:binding_pattern_manipulation} for the formal
definition). Characterizing this strategy becomes particularly challenging when the ESA
coordinates storage units spatiotemporally across multiple buses. This raises the central open question addressed in this paper:
\emph{How can an ESA use spatiotemporal storage coordination to influence
	market outcomes through constraint-binding-pattern manipulation, and how can
	system operators limit such behavior?}

\subsection{Organization and Contributions}
To answer this question, we study the strategic participation of a monopolistic
ESA that coordinates geographically distributed storage resources in a
day-ahead electricity market. 

Section~\ref{sec:model} formulates the interaction between the ESA and the
independent system operator (ISO), which clears the market, as a Stackelberg
game, where the ESA earns revenues from both energy arbitrage and financial transmission right (FTR) positions.
Section~\ref{parametric} exploits the
multiparametric structure of the market-clearing problem to characterize how
ESA schedules map to dispatch and LMPs across different constraint-binding
patterns. Building on this structure,
Section~\ref{sec:esa_binding_pattern_decomposition} decomposes the ESA's
strategic problem into convex subproblems associated with distinct
constraint-binding patterns and recovers the ESA--ISO equilibrium by comparing
their optimal solutions.
Section~\ref{sec:pattern_manipulation_ftr} formally defines
constraint-binding-pattern manipulation and characterizes how FTR positions can alter the ESA's
preferred binding patterns and the welfare implications of strategic storage
participation. Building on these findings,
Section~\ref{sec:iso_binding_pattern_control} develops two alternative ISO-side
tools for limiting undesirable binding patterns: restricting the submitted
charging and discharging schedules of an ESA with given bus-specific storage
capacities, or limiting the storage capacity that the ESA may aggregate at
each bus. Finally, Section~\ref{numerical} illustrates the theoretical findings through
numerical studies, and Section~\ref{sec:conclusion} concludes the paper.

We contribute to the literature in the following ways.

\begin{enumerate}

	\item To the best of our knowledge, this is the \emph{first} study to incorporate
	FTR payoffs into an ESA's strategic market-participation model. While prior
	work shows that strategically operated storage weakly improves welfare relative
	to the no-storage benchmark~\cite{contreras2017participation}, we show that this
	guarantee can fail when ESA holds FTR positions.
	
	\item Our constraint-binding-pattern decomposition explicitly characterizes
	how submitted ESA schedules map to dispatch quantities, LMP trajectories, and
	the ESA's profit across distinct binding regimes. In contrast, this relationship remains implicit in mathematical program with
	equilibrium constraints (MPEC) reformulations, commonly used to compute
	ESA--ISO Stackelberg equilibria, e.g.,
	\cite{mohsenian2015coordinated,hartwig2015impact,huang2017market,
		bjorndal2023energy}.
	This explicit schedule-to-market-outcome characterization explains why the ESA
	may strategically induce particular binding patterns, identifies its preferred
	alternatives across patterns, and allows the ISO to assess how restrictions may
	alter the ESA's strategies.
	
	\item Building on this framework, we develop ex ante ISO-side tools that limit
	undesirable constraint-binding patterns while preserving the ESA's autonomy to
	choose among admissible schedules. Because these
	tools operate through participation constraints rather than changes to
	settlement prices, they can be incorporated into existing market frameworks
	more readily than approaches based on specialized pricing rules,
	e.g.,~\cite{contreras2017participation}.
\end{enumerate}

This paper extends our conference work~\cite{davoudi2026strategic} by
(a) generalizing line-congestion to constraint-binding-pattern manipulation,
(b) replacing the MPEC with an explicit binding-pattern decomposition of the
Stackelberg equilibrium, and (c) developing ISO-side tools to limit undesirable
ESA strategies.

\subsection{Related Literature}

A predominant stream of literature studies the strategic behavior of energy storage resources from a temporal-arbitrage perspective at a single location. For example, Williams and Green \cite{williams2022electricity} compare storage and generator market power, Karaduman \cite{karaduman2021economics} analyzes price-making storage in a dynamic equilibrium framework, and Wu et al.~\cite{wu2024market} examine strategic capacity withholding and its effects on
social welfare.

In contrast, relatively few studies investigate the strategic behavior of ESAs
that coordinate geographically distributed storage resources in electricity markets.
In this regard, Mohsenian-Rad \cite{mohsenian2015coordinated} develops a
framework for a price-making ESA coordinating geographically distributed
storage and shows that such coordination can significantly affect nodal market
outcomes. Hartwig and Kockar \cite{hartwig2015impact} study strategic storage operation
under different ownership arrangements, including portfolios combining storage
and conventional generation, and show that profit-maximizing storage owners may
withhold capacity, reducing social welfare and limiting the benefits storage can
provide to the system. Contreras-Ocaña et al. \cite{contreras2017participation} examine ESA
participation in wholesale electricity markets and propose a nonuniform pricing
method to mitigate the ESA's market power and better align its decisions with
social welfare.
Huang et al. \cite{huang2017market} compare centralized, semi-centralized, and deregulated market mechanisms for a price-maker ESA, which differ in the degree of operational control assigned to the ESA versus the system operator. Extending the analysis to strategic competition among multiple ESAs, Huang et al. \cite{9406390} model storage participation as a networked Cournot game and show that while strategic storage operation can reduce welfare relative to the social optimum, welfare losses diminish as market competition increases. Bj{\o}rndal et al. \cite{bjorndal2023energy} develop a multistage Stackelberg
framework for a monopolistic storage aggregator participating in day-ahead and
real-time markets and examine how nodal, zonal, and single-zone pricing affect
storage incentives and market outcomes.

Despite these contributions, existing studies largely evaluate strategic ESA
behavior through equilibrium prices, profits, and social welfare. While these
outcomes quantify the consequences of strategic behavior, they provide limited
guidance for ex ante mitigation of its potential adverse impacts. Existing
operator-side approaches are also limited, with proposed methods such as
nonuniform pricing not readily aligned with prevailing wholesale-market pricing
structures. This limitation partly arises because prior studies have primarily
considered capacity withholding as the key strategic mechanism and have not explicitly
examined constraint-binding-pattern manipulation as a distinct strategic
mechanism. Characterizing such manipulation links ESA actions to the binding
patterns they can induce, providing a basis for targeted ex ante restrictions.
Moreover, the role of FTR positions in strategic ESA
behavior has not been examined. Because FTR payoffs depend on congestion and
nodal price differences, they may create additional incentives to induce
particular binding patterns \cite{rosellon2013financial}. Thus, the interaction
between FTR positions and constraint-binding-pattern manipulation also remains
unexplored.

\section{Problem Formulation}\label{sec:model}

\subsection{Problem Setup}

We consider a day-ahead wholesale electricity market over a finite discrete-time
horizon $\mathcal{T}:=\{1,2,\ldots,T\}$, indexed by $t$, with each period
representing one hour, and operated on a transmission network represented by a
set of buses $\mathcal{N}$, indexed by $i$, and a set of directed transmission
lines $\mathcal{L}$. Each transmission line connecting buses $i$ and $j$ is denoted by
$(i,j)\in\mathcal L$, with $i<j$, where the direction from $i$ to $j$ is chosen as the positive reference direction for line flow.
Each bus may host generators, loads, and storage resources. Generators and loads are modeled as non-strategic participants that submit their cost and utility parameters truthfully, respectively.

A strategic ESA controls storage resources distributed across multiple network locations. To isolate the fundamental incentives of strategic storage operation, we consider a single monopolistic ESA and leave competition among multiple ESAs for future work. The storage resources controlled by the ESA at each bus are represented by an equivalent aggregate storage resource with aggregate capacity and operating limits.

The ESA participates in the market through \emph{self-schedule bids}, rather than
\emph{economic bids}, which specify both quantities and explicit price components
\cite{mohsenian2015coordinated}. Specifically, prior to market clearing, the ESA
submits a quantity-only bid $u_i(t)$ for each bus $i\in\mathcal{N}$ and time period
$t\in\mathcal{T}$, where $u_i(t)>0$ corresponds to charging and $u_i(t)<0$
corresponds to discharging.

\begin{remark}[Self-schedule bids]
The self-schedule formulation treats the ESA's submitted charging and discharging quantities as effectively inelastic inputs to the ISO's market-clearing problem. Equivalently, the self-scheduled quantities can be interpreted as arising from
economic bids submitted at sufficiently favorable prices, so that the
corresponding schedule is cleared whenever it is feasible for market clearing.
\end{remark}

\textbf{Notation:} For a bus- and time-indexed variable $x_i(t)$, we use $\mathbf{x}$ to denote the collection of all values $\{x_i(t)\}_{i\in\mathcal{N},\,t\in\mathcal{T}}$. The vector $\mathbf{x}_i$ collects the values associated with bus $i$ over all time periods, while $\mathbf{x}(t)$ collects the values across all buses at time $t$. For a time-indexed variable $y(t)$, we use $\mathbf{y}$ to denote the vector of values over $\mathcal{T}$. We use $\mathbf{0}$ and $\mathbf{1}$ to denote all-zero and all-one vectors of appropriate dimensions, respectively, $\mathrm{diag}(\cdot)$ to denote the diagonal matrix formed from its argument, and $|\mathcal S|$ to denote the cardinality of a set $\mathcal S$.

\subsection{ISO's Market Clearing Problem}

Given the submitted ESA bids, together with the cost and utility parameters
provided by generators and loads, the ISO clears the market by solving a
multi-period economic dispatch problem subject to network constraints. Before
presenting the complete market-clearing model, we introduce its three main
components: generation costs, demand utilities, and transmission-network
constraints.

\subsubsection{Generators}

Generator production costs are modeled using quadratic functions. Let $\bm{\alpha}^{\mathrm{g}}$ and $\bm{\beta}^{\mathrm{g}}$ denote the matrices collecting the quadratic and linear cost coefficients, respectively. For each time period $t\in\mathcal{T}$, the aggregate generation cost is given by
\begin{equation}
	C_t\left(\mathbf{g}(t)\right)
	:= \frac{1}{2}\mathbf{g}(t)^\top
	\mathrm{diag}\!\left(\bm{\alpha}^{\mathrm{g}}(t)\right)\mathbf{g}(t)
	+ \bm{\beta}^{\mathrm{g}}(t)^\top \mathbf{g}(t),\nonumber
\end{equation}
where $\mathbf g(t)\in\mathbb R_{+}^{|\mathcal N|}
$ denotes the vector of generation outputs. Generator dispatch is constrained by
\begin{equation}\label{DCOPF:generator}
	\underline{\mathbf{g}}(t) \leq \mathbf{g}(t) \leq \overline{\mathbf{g}}(t),
	\quad \forall t \in \mathcal{T},
\end{equation}
where
$
\underline{\mathbf g}(t),
\overline{\mathbf g}(t)
\in
\mathbb R_{+}^{|\mathcal N|}
$
denote the minimum and maximum generation limits, respectively. Renewables are fixed at their available output by setting both generation
bounds equal to that output; both bounds are zero at nongenerator buses.

\subsubsection{Loads}

Elastic demand is represented through quadratic utility functions. Let $\bm{\alpha}^{\mathrm{d}}$ and $\bm{\beta}^{\mathrm{d}}$ collect the corresponding utility coefficients. The aggregate utility obtained from served demand at time $t$ is
\begin{equation}
	V_t\left(\mathbf{d}(t)\right)
	:= -\frac{1}{2}\,\mathbf{d}(t)^\top
	\mathrm{diag}\!\left(\bm{\alpha}^{\mathrm{d}}(t)\right)\mathbf{d}(t)
	+ \bm{\beta}^{\mathrm{d}}(t)^\top \mathbf{d}(t),\nonumber
\end{equation}
where $ \mathbf d(t)\in\mathbb R_{+}^{|\mathcal N|} $ denotes the vector of served demand quantities. Demand is restricted by
\begin{equation}\label{DCOPF:elastic_load}
	\underline{\mathbf{d}}(t)
	\leq
	\mathbf{d}(t)
	\leq
	\overline{\mathbf{d}}(t),
	\quad \forall t \in \mathcal{T},
\end{equation}
where $ \underline{\mathbf d}(t), \overline{\mathbf d}(t) \in \mathbb R_{+}^{|\mathcal N|} $ denote the lower and upper consumption limits, respectively. Inelastic demand is modeled by setting these bounds equal to the prescribed demand level, whereas buses without demand have both limits set to zero.

\subsubsection{Network constraints}

The net injection at each bus is
\begin{equation}
	\mathbf{p}(t)
	=
	\mathbf{g}(t)
	-
	\mathbf{d}(t)
	-
	\mathbf{u}(t),
	\quad \forall t \in \mathcal{T},
	\label{DCOPF:net_power}
\end{equation}
where positive values correspond to net supply and negative values correspond to net consumption. Since transmission losses are neglected, the system-wide power balance requires
\begin{equation}
	\mathbf{1}^\top \mathbf{p}(t)
	=
	0,
	\quad \forall t \in \mathcal{T}.
	\label{DCOPF:lossless}
\end{equation}

Line flows are represented through the Power Transfer Distribution Factor (PTDF) matrix $\mathbf{H}$ and constrained by
\begin{equation}
	-\overline{\mathbf{f}}
	\leq
	\mathbf{H}\mathbf{p}(t)
	\leq
	\overline{\mathbf{f}},
	\quad \forall t \in \mathcal{T},
	\label{DCOPF:line}
\end{equation}
where $\overline{\mathbf f}\in\mathbb R_{+}^{|\mathcal L|}$ denotes the vector
of thermal line limits, and the corresponding PTDF matrix is denoted by
$\mathbf H\in\mathbb R^{|\mathcal L|\times|\mathcal N|}$.

Given the submitted bids, the ISO clears the market by minimizing generation
costs minus consumer utility, subject to the generator, demand, and network
constraints. This is equivalent to maximizing social welfare. Combining the
preceding components yields the following market-clearing problem:

\begin{subequations}\label{DCOPF}
	\begin{alignat}{2}
		\min_{\mathbf{g},\,\mathbf{d},\,\mathbf{p}}
		\quad
		& & &
		\sum_{t \in \mathcal{T}}
		\left[
		C_t\left(\mathbf{g}(t)\right)
		-
		V_t\left(\mathbf{d}(t)\right)
		\right]
		\label{DCOPF:obj}
		\\
		\text{s.t.}
		\quad \,\,
		&\bm{\lambda}(t)\!\!:\,
		& &
		\mathbf{p}(t)
		=
		\mathbf{g}(t)
		-
		\mathbf{d}(t)
		-
		\mathbf{u}(t),
		\quad
		\forall t \in \mathcal{T},
		\label{DCOPF:net_power:full}
		\\
		& & &
\eqref{DCOPF:generator},
~
\eqref{DCOPF:elastic_load},
~
\eqref{DCOPF:lossless},
~
\eqref{DCOPF:line},
		\label{DCOPF:Constraints}
	\end{alignat}
\end{subequations}
where $\bm{\lambda}(t)$ denotes the Lagrange multiplier associated with the
nodal power-balance constraint~\eqref{DCOPF:net_power:full}. These multipliers
are taken as the cleared market prices, namely LMPs, at time $t$.

\subsection{ESA's Market Participation Problem}
\label{market:participation}

The ESA's self-schedule bids affect dispatch and LMPs through the ISO
market-clearing problem~\eqref{DCOPF}, while the resulting LMPs determine its
net energy-market payment and FTR payoff. \revv{We model this strategic interdependence as a Stackelberg game, where the ESA is the leader because it submits its schedule before market clearing, while the ISO is the follower because it clears the market in response to that schedule.}
To formulate the game, we first
introduce the ESA's objective components and operational constraints and then
present the resulting ESA--ISO Stackelberg problem.

\subsubsection{ESA objective components}

The ESA's objective consists of two components: its net energy-market payment
and its FTR payoff.

\paragraph{Net energy-market payment}
Under the sign convention $u_i(t)>0$ for charging and $u_i(t)<0$ for
discharging, the ESA's net energy-market payment is
$\sum_{t\in\mathcal T}\bm{\lambda}(t)^\top\mathbf u(t)$.

\paragraph{FTR payoff}
Financial transmission rights (FTRs) are financial instruments that allow market participants to hedge against congestion-related price differences in the transmission network \cite{rosellon2013financial}. FTRs can be formulated as point-to-point rights or flowgate rights. Point-to-point FTRs are defined between a source bus and a sink bus, whereas flowgate rights are associated with congestion on a specific transmission line \cite{rosellon2013financial,sarkar2008comprehensive}.
Point-to-point FTRs are commonly structured as either obligations or options. An FTR obligation entitles its holder to the nodal price difference between the sink and source buses and may therefore result in either a positive or negative payoff. In contrast, an FTR option pays only when the sink price exceeds the source price and therefore cannot create a financial liability for its holder \cite{li2005risk}. In this paper, we consider point-to-point FTR obligations.

Let $\mathcal{F}\subseteq\mathcal{N}\times\mathcal{N}$ denote the set of source-sink FTR pairs, where $(i,j)\in\mathcal{F}$ represents an FTR from source bus $i$ to sink bus $j$. Furthermore, let $\kappa_{(i,j)}(t)$ denote the quantity of FTR obligations held on path $(i,j)$ at time $t$, and $
\bm{\kappa}_{(i,j)}
:=
\big[\kappa_{(i,j)}(t)\big]_{t\in\mathcal T}.
$
The corresponding aggregate FTR payoff is then given by
$
\sum_{(i,j)\in\mathcal F}
\bm{\kappa}_{(i,j)}^\top
\left(
\bm{\lambda}_j-\bm{\lambda}_i
\right).
$

In practice, FTR positions are obtained through ISO-administered auctions subject to simultaneous-feasibility and revenue-adequacy requirements \cite{sarkar2008comprehensive}. We abstract from the auction-clearing process and treat $\bm{\kappa}$ as an exogenous parameter. The no-FTR case is recovered by setting $\bm{\kappa}=\mathbf 0$.

\subsubsection{Storage constraints}

Storage is assumed ideal, with unit charging and discharging efficiencies and
zero initial stored energy. Let $B_i$ denote the aggregated energy capacity at
bus $i$, with $B_i=0$ at buses without storage. The corresponding
state-of-charge trajectory satisfies
\begin{equation}
	\mathbf{0}
	\le
	\mathbf{L}\mathbf{u}_i
	\le
	B_i\mathbf{1},
	\qquad
	\forall i\in\mathcal N,
	\label{ESA:bilevel:dynamic}
\end{equation}
where $\mathbf{L}\in\mathbb{R}^{|\mathcal T|\times|\mathcal T|}$ is lower
triangular with ones on and below the diagonal.
To enforce cyclic operation over the study horizon, we impose
\begin{equation}
	\mathbf{1}^{\top}\mathbf{u}_i=0,
	\qquad
	\forall i\in\mathcal N,
	\label{ESA:bilevel:cyclic}
\end{equation}
which requires each storage resource to finish the horizon at its initial state of charge. 
Together, constraints~\eqref{ESA:bilevel:dynamic} and~\eqref{ESA:bilevel:cyclic} define the feasible-action set of the ESA
\begin{equation}
\!\!\mathcal U(\mathbf B)
	:=
	\left\{
	\mathbf u:\!
	\mathbf 0
	\leq
	\mathbf L\mathbf u_i
	\leq
	B_i\mathbf 1,
	\quad \!
	\mathbf 1^\top\mathbf u_i
	=
	0,
	\quad \! \!
	\forall i\in\mathcal N
	\right\}\!.\nonumber
\end{equation}

\subsubsection{ESA--ISO Stackelberg formulation}
Combining the ESA's objective components and storage constraints yields the
following ESA--ISO Stackelberg problem:
\begin{subequations}
	\label{bilevel}
	\begin{align}
		\min_{\mathbf u}
		\quad
		&
		\sum_{t\in\mathcal T}
		\bm{\lambda}(t)^\top\mathbf u(t)
		-
		\sum_{(i,j)\in\mathcal F}
		\bm{\kappa}_{(i,j)}^\top
		\left(
		\bm{\lambda}_j
		-
		\bm{\lambda}_i
		\right)
		\label{bilevel:obj}
		\\
		\text{s.t.}
		\quad
		&
		\mathbf u
		\in
		\mathcal U(\mathbf B),
		\label{bilevel:storage}
		\\
		&
		\bm{\lambda}
		\text{ is an optimal Lagrange multiplier for~\eqref{DCOPF} given }
		\mathbf u.
		\label{bilevel:market_response}
	\end{align}
\end{subequations}

Problem~\eqref{bilevel} represents the ESA--ISO Stackelberg interaction: the ESA
selects its charging and discharging schedule while accounting for the ISO's
response and resulting prices. Constraint~\eqref{bilevel:market_response}
embeds this response as the lower-level problem, making~\eqref{bilevel} a
bilevel optimization problem. A conventional solution approach replaces the
lower-level problem with its optimality conditions, yielding a single-level
MPEC~\cite{davoudi2026strategic}. However, this reformulation leaves implicit how
the ESA's submitted schedule impacts the resulting market outcomes.
We instead explicitly characterize this schedule-to-market-response mapping by
treating the ESA's submitted schedule as a parameter of~\eqref{DCOPF} and
exploiting its multiparametric structure. This partitions the ESA schedule
space according to the ISO's constraint-binding patterns, characterizes market responses within each region, and decomposes
\eqref{bilevel} into pattern-specific subproblems. This characterization shows how binding constraints shape the ESA's objective
and pattern preferences. The next section develops this characterization.

\section{Market-Clearing Responses Across Constraint-Binding Patterns}
\label{parametric}

Let $\mathcal U_{\mathrm{mc}}\subseteq\mathbb R^{|\mathcal N|\times|\mathcal T|}$ denote
the feasible parameter set of the market-clearing problem, i.e., the set of
ESA schedules $\mathbf u$ for which~\eqref{DCOPF} is feasible. This set is
independent of any particular storage-capacity vector $\mathbf B$ and
characterizes the full range of schedules that can be cleared by the ISO. An
ESA with a given capacity vector $\mathbf B$, however, can submit only
schedules in
$
\mathcal U(\mathbf B)\cap\mathcal U_{\mathrm{mc}}. 
$
\revv{To characterize the ISO's market-clearing response over $\mathcal U_{\mathrm{mc}}$, we
adopt the following assumptions throughout the remainder of the paper.}

\begin{assumption}[Strict convexity]
	\label{assum1}
	For all $t\in\mathcal T$, the quadratic coefficients satisfy
	$
	\bm{\alpha}^{\mathrm g}(t)>\mathbf 0
	$
	and
	$
	\bm{\alpha}^{\mathrm d}(t)>\mathbf 0,
	$
	componentwise.
\end{assumption}

\begin{assumption}[LICQ]
	\label{assum2}
	For every $\mathbf u\in\mathcal U_{\mathrm{mc}}$, the active constraints at the
	corresponding market-clearing optimum satisfy the Linear Independence
	Constraint Qualification (LICQ).\footnote{Although we use equal upper and lower bounds in
		~\eqref{DCOPF:generator} and~\eqref{DCOPF:elastic_load} to fix renewable
		generation and inelastic demand, these constraints are excluded from the LICQ
		test because the corresponding quantities are fixed inputs to~\eqref{DCOPF}
		and are modeled as variables only for notational consistency.}
\end{assumption}

\begin{assumption}[Market-clearing feasibility]
	\label{assum3}
	The submitted ESA schedule satisfies
	$
	\mathbf u
	\in
	\mathcal U(\mathbf B)
	\cap
	\mathcal U_{\mathrm{mc}}.
	$
\end{assumption}

Assumption~\ref{assum1} ensures uniqueness of the market-clearing primal
solution, while Assumption~\ref{assum2} ensures uniqueness of the corresponding
Lagrange multipliers and a regular Karush--Kuhn--Tucker (KKT) representation.
Assumption~\ref{assum3} reflects rational ESA behavior: the ESA has no incentive
to submit a schedule that the ISO cannot feasibly clear.

With the market-clearing response uniquely defined under
Assumptions~\ref{assum1}--\ref{assum2}, for each
$\mathbf u\in\mathcal U_{\mathrm{mc}}$ and $t\in\mathcal T$, define
\begin{equation}
	\mathbf x^\star(t;\mathbf u)
	:=
	\left[
	\mathbf g^\star(t;\mathbf u)^\top,
	\mathbf d^\star(t;\mathbf u)^\top,
	\mathbf p^\star(t;\mathbf u)^\top
	\right]^\top,\nonumber
\end{equation}
and let
$
\bm{\lambda}^\star(t;\mathbf u)
$
denote the corresponding LMP vector. The market-clearing response is
characterized by the mapping
$
\mathbf u
\mapsto
\left\{
\mathbf x^\star(t;\mathbf u),
\bm{\lambda}^\star(t;\mathbf u)
\right\}_{t\in\mathcal T}.
$
Since the ESA's objective depends on the resulting LMPs, the price-response
component of this mapping is of particular interest. To characterize this mapping, we first formalize the constraint-binding patterns
induced by ESA schedules and then derive the associated responses.

\subsection{Critical Regions and Constraint-Binding Patterns}
\label{subsec:critical_regions}

Different submitted ESA schedules $\mathbf u$ may induce different sets of
binding constraints at the market-clearing optimum. We formalize this
dependence through the following definitions.

\begin{definition}[Market-clearing constraint-binding pattern]
	\label{def:market_clearing_binding_pattern}
	For a given ESA schedule $\mathbf u\in\mathcal U_{\mathrm{mc}}$, the
	\emph{market-clearing constraint-binding pattern}, denoted by
	$\mathcal A(\mathbf u)$, is the index set of inequality constraints
	in~\eqref{DCOPF} that are active at the market-clearing optimum.
\end{definition}

In particular, each pattern identifies which transmission lines are congested
and which generators and elastic loads operate at their lower or upper limits.

\begin{definition}[Critical region]
	\label{def:critical_region}
A \emph{critical region} $\mathcal C_r\subseteq\mathcal U_{\mathrm{mc}}$ is the closure
of a maximal connected subset of $\mathcal U_{\mathrm{mc}}$ over whose relative interior
$\mathcal A(\mathbf u)$ remains constant.
\end{definition}

Let $R$ denote the total number of critical regions and
$\mathcal R:=\{1,\ldots,R\}$ their index set. These regions collectively cover
the parameter space,
$
\mathcal U_{\mathrm{mc}}=\bigcup_{r\in\mathcal R}\mathcal C_r,
$
and adjacent regions may share boundaries.

\revv{To formulate the ESA problem separately over each critical region, we need an
	explicit description of the schedules associated with that region. The
	following result shows that each critical region can be represented by linear
	inequalities in the submitted ESA schedule. For compact notation, define
\[
\operatorname{vec}(\mathbf u)
:=
\bigl[
\mathbf u_1^\top
~
\cdots
~
\mathbf u_{|\mathcal N|}^\top
\bigr]^\top
\in
\mathbb R^{|\mathcal N||\mathcal T|}.
\]}

\revv{\begin{lemma}[Polyhedral representation of a critical region]
		\label{lem:critical_region_polyhedron}
		For each $r\in\mathcal R$, there exist a matrix $\mathbf F_r$ and a vector
		$\mathbf h_r$ such that
		\begin{equation}
			\mathcal C_r
			=
			\left\{
			\mathbf u:
			\mathbf F_r
			\operatorname{vec}(\mathbf u)
			\leq
			\mathbf h_r
			\right\}.
			\label{eq:critical_region_polyhedral}
		\end{equation}
	\end{lemma}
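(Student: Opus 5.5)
The plan is the standard multiparametric quadratic programming (mpQP) argument. Problem~\eqref{DCOPF} is a strictly convex QP in $\mathbf x=(\mathbf g,\mathbf d,\mathbf p)$ by Assumption~\ref{assum1}. Eliminating $\mathbf p$ via~\eqref{DCOPF:net_power:full} leaves a strictly convex quadratic objective in $(\mathbf g,\mathbf d)$. The constraints are linear, and the parameter $\operatorname{vec}(\mathbf u)$ enters only the right-hand sides, affinely.

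Fix a region $\mathcal C_r$ and let $\mathcal A_r$ be the constant binding pattern on its relative interior. I will write the KKT system with only the constraints in $\mathcal A_r$ enforced as equalities: stationarity, primal equalities, and the active inequalities held at equality. This is a linear system in the primal variables and the multipliers. Its coefficient matrix is built from the positive definite Hessian and the active constraint Jacobian. By Assumption~\ref{assum2} that Jacobian has full row rank, so the coefficient matrix is nonsingular.

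Solving the system gives affine maps
\[
\mathbf x^\star=\mathbf K_r\operatorname{vec}(\mathbf u)+\mathbf k_r,\qquad \bm\mu^\star=\mathbf M_r\operatorname{vec}(\mathbf u)+\mathbf m_r,
\]
where $\bm\mu$ collects the multipliers of the active inequalities, and the LMPs $\bm\lambda^\star$ are likewise affine. These are the unique KKT solution for every $\mathbf u$ at which the pattern $\mathcal A_r$ is optimal.

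Next I impose the two remaining KKT conditions, which are affine inequalities in $\operatorname{vec}(\mathbf u)$:
\begin{itemize}
\item primal feasibility of the inactive inequalities, i.e., those of \eqref{DCOPF:generator}, \eqref{DCOPF:elastic_load} and \eqref{DCOPF:line} not in $\mathcal A_r$, evaluated at $\mathbf x^\star(\mathbf u)$;
\item dual feasibility $\bm\mu^\star(\mathbf u)\ge \mathbf 0$.
\end{itemize}
Stacking them defines $\mathbf F_r$ and $\mathbf h_r$, and I set $P_r:=\{\mathbf u:\mathbf F_r\operatorname{vec}(\mathbf u)\le\mathbf h_r\}$. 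By convexity, the KKT conditions are sufficient for optimality. So every $\mathbf u\in P_r$ has optimum $\mathbf x^\star(\mathbf u)$ and lies in $\mathcal U_{\mathrm{mc}}$, and conversely every $\mathbf u$ whose optimal active set is $\mathcal A_r$ lies in $P_r$. Since $P_r$ is closed, the inclusion $\mathcal C_r\subseteq P_r$ follows by taking closures.

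The main obstacle is the reverse inclusion $P_r\subseteq\mathcal C_r$. On $P_r$, the optimal active set always contains $\mathcal A_r$ and may gain extra constraints (degenerate zero multipliers or newly binding constraints) only where some inequality defining $P_r$ holds with equality, that is, on faces of $P_r$. I would argue as follows:
\begin{itemize}
\item The relative interior of $P_r$ has active set exactly $\mathcal A_r$ and is convex, hence connected. It therefore lies in one maximal connected set of constant pattern, whose closure is $\mathcal C_r$.
\item $P_r$ is the closure of its relative interior, since it is convex with nonempty relative interior; the interior is nonempty because $\mathcal C_r$ itself has a nonempty relative interior in $P_r$.
\item Maximality then forces the two sets to coincide.
\end{itemize}
Care is needed to phrase the argument in terms of relative interiors and to discard redundant rows, which does not change the set. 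If desired, the equalities from $\mathcal U(\mathbf B)$ need not be included, since $\mathcal C_r$ is defined within $\mathcal U_{\mathrm{mc}}$ and is independent of $\mathbf B$.
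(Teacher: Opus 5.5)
Your proof is correct, but it follows a different route from the paper's. The paper never solves the fixed-active-set KKT system. It notes that the KKT conditions for pattern $\mathcal A_r$ (stationarity, the equalities, the active rows at equality, the inactive rows as inequalities, and $\bm\mu_{\mathcal A_r}\ge\mathbf 0$) are jointly linear in $(\operatorname{vec}(\mathbf u),\mathbf x,\bm\lambda,\bm\nu,\bm\mu_{\mathcal A_r})$, so they define a polyhedron in the joint space. It then uses the fact that a linear projection of a polyhedron is a polyhedron (Rockafellar, Thm.~19.3). That route needs only convexity and linearity; neither LICQ nor nonsingularity of the KKT matrix is required. It is, however, purely existential: actually producing $\mathbf F_r,\mathbf h_r$ would require a projection step such as Fourier--Motzkin elimination.

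You instead use Assumptions~\ref{assum1}--\ref{assum2} to eliminate the primal--dual variables explicitly. This gives $\mathbf F_r,\mathbf h_r$ in closed form from the same affine maps that Theorem~\ref{thm:affine_mapping} and Algorithm~\ref{alg:on_demand_reachable_regions} rely on. Your argument is also more complete on the identification step. The paper simply asserts that the projection ``is'' $\mathcal C_r$ under its boundary convention, whereas you prove both inclusions.

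Three points in that step deserve one line each:
\begin{enumerate}
\item Say why $\operatorname{relint}(P_r)$ has pattern exactly $\mathcal A_r$. In the relative interior of a polyhedron only implicit equalities are tight. No inactive-primal row can be an implicit equality of $P_r$, because the relative-interior points of $\mathcal C_r$ lie in $P_r$ and have pattern $\mathcal A_r$. Tight dual rows ($\mu_j=0$) are harmless, since $\mathcal A(\mathbf u)$ is defined by primal activity.
\item Make the maximality step explicit. The connected set generating $\mathcal C_r$ lies in $P_r=\overline{\operatorname{relint}(P_r)}$, so its union with $\operatorname{relint}(P_r)$ is connected and has the same relative interior as $P_r$. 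Maximality then puts $\operatorname{relint}(P_r)$ inside that set.
\item ``Nonempty interior'' should read ``nonempty relative interior.'' This holds automatically for any nonempty convex set.
\end{enumerate}
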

}

\revv{The representation in~\eqref{eq:critical_region_polyhedral} follows by fixing
the active set associated with region $r$, solving the corresponding KKT
system, and imposing primal and dual feasibility. \emph{
	Detailed proofs of the analytical results are provided in the appendices.} In the following subsection, we
characterize the dispatch and LMP responses within and across critical regions.}

\subsection{Market-Clearing Response Characterization}
\label{subsec:affine_market_clearing_response}

\revv{Having partitioned the feasible parameter space according to
	constraint-binding patterns, we now characterize the ISO's response within each
	critical region, where the pattern is fixed. The following theorem shows how
	dispatch and LMPs depend on the submitted ESA schedule and how these responses
	connect across region boundaries.}

\begin{theorem}[Continuous piecewise-affine market-clearing response]
	\label{thm:affine_mapping}
	Under Assumptions~\ref{assum1}--\ref{assum3}, the following hold:
	\begin{enumerate}
		\item For each critical region $\mathcal C_r$, $r\in\mathcal R$,
		and time period $t\in\mathcal T$, there exist matrices
		$\bm{\Gamma}_r(t)$ and $\mathbf K_r(t)$, and vectors
		$\bm{\gamma}_r(t)$ and $\mathbf k_r(t)$, such that
		\begin{align}
			\mathbf x^\star(t;\mathbf u)
			&=
			\bm{\Gamma}_r(t)\mathbf u(t)
			+
			\bm{\gamma}_r(t),
			\qquad
			\forall \mathbf u\in\mathcal C_r,
			\label{eq:primal_affine}
			\\
			\bm{\lambda}^\star(t;\mathbf u)
			&=
			\mathbf K_r(t)\mathbf u(t)
			+
			\mathbf k_r(t),
			\qquad
			\forall \mathbf u\in\mathcal C_r.
			\label{eq:lambda_affine}
		\end{align}
		
		\item For every $t\in\mathcal T$, the market-clearing response maps
		\[
		\mathbf u
		\mapsto
		\mathbf x^\star(t;\mathbf u),
		\qquad
		\mathbf u
		\mapsto
		\bm{\lambda}^\star(t;\mathbf u),
		\]
		are continuous and piecewise affine on $\mathcal U_{\mathrm{mc}}$. In particular,
		the regional affine representations agree on the boundaries of
		adjacent critical regions.
		
		\item For every $r\in\mathcal R$ and $t\in\mathcal T$, the matrix
		$\mathbf K_r(t)$ is symmetric positive semidefinite.
	\end{enumerate}
\end{theorem}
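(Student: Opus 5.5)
The argument rests on one observation: \eqref{DCOPF} decouples across time periods. The objective is a sum of per-period terms. The constraints \eqref{DCOPF:generator}, \eqref{DCOPF:elastic_load}, \eqref{DCOPF:net_power:full}, \eqref{DCOPF:lossless}, and \eqref{DCOPF:line} each involve only period-$t$ variables, and the parameter $\mathbf u$ enters period $t$ only through $\mathbf u(t)$. So for each $t$ I will work with the per-period strictly convex QP
\[
\min_{\mathbf g(t),\mathbf d(t),\mathbf p(t)} \; C_t(\mathbf g(t))-V_t(\mathbf d(t)),
\]
whose right-hand side is affine in $\mathbf u(t)$. I then apply standard multiparametric QP arguments (Bemporad--Morari type) to this problem.

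\textbf{Part 1.} Fix $r$ and its active set $\mathcal A_r$, restricted to period $t$. Within $\mathcal C_r$ the KKT system reduces to a linear system of three kinds of equations:
\begin{itemize}
\item stationarity, $\mathbf Q\mathbf x+\mathbf c+\mathbf E^\top\bm\nu+\mathbf G_{\mathcal A}^\top\bm\mu_{\mathcal A}=\mathbf 0$;
\item equality feasibility, $\mathbf E\mathbf x=\mathbf e_0+\mathbf S\mathbf u(t)$, which stacks \eqref{DCOPF:net_power:full} and \eqref{DCOPF:lossless};
\item active inequalities holding with equality, $\mathbf G_{\mathcal A}\mathbf x=\mathbf w_{\mathcal A}$.
\end{itemize}
Here $\bm\lambda(t)$ is the multiplier block of \eqref{DCOPF:net_power:full} inside $\bm\nu$.

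The variable $\mathbf p$ carries no quadratic cost, so $\mathbf Q$ is not positive definite in the full variable. I will therefore eliminate $\mathbf p=\mathbf g-\mathbf d-\mathbf u(t)$ first. In the reduced variable $\mathbf y=(\mathbf g,\mathbf d)$ the Hessian $\mathbf Q_y=\mathrm{diag}(\bm\alpha^{\mathrm g},\bm\alpha^{\mathrm d})$ is positive definite by Assumption~\ref{assum1}. Assumption~\ref{assum2} gives full row rank of the reduced active-constraint matrix $\mathbf A$, with fixed quantities handled as in the footnote. Hence the reduced KKT matrix is nonsingular, and both $\mathbf y$ and the multipliers are affine in $\mathbf u(t)$:
\[
\mathbf y=\mathbf Q_y^{-1}\!\left(\mathbf A^\top\mathbf M^{-1}(\mathbf b_0+\mathbf B_u\mathbf u(t))+\cdots\right),\qquad \mathbf M=\mathbf A\mathbf Q_y^{-1}\mathbf A^\top\succ 0.
\]
The quantity $\mathbf p$ and the LMPs are then recovered affinely. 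For the LMPs I use the reduced stationarity form $\bm\lambda(t)=\nabla C_t(\mathbf g)+(\text{line-multiplier terms})$, or equivalently $\bm\lambda(t)=\lambda_0\mathbf 1-\mathbf H^\top(\bm\mu^+-\bm\mu^-)+\dots$. This yields $\bm\Gamma_r(t),\bm\gamma_r(t),\mathbf K_r(t),\mathbf k_r(t)$. Since the problem decouples across periods, only $\mathbf u(t)$ appears in these formulas.

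\textbf{Part 2.} Uniqueness of the primal solution (Assumption~\ref{assum1}) and of the multipliers (Assumption~\ref{assum2}) makes $\mathbf u\mapsto(\mathbf x^\star,\bm\lambda^\star)$ a well-defined single-valued function on $\mathcal U_{\mathrm{mc}}$. On each closed $\mathcal C_r$ the regional affine formula is valid up to the boundary. The reason is that the KKT conditions with active set $\mathcal A_r$ are closed conditions (the polyhedral description of Lemma~\ref{lem:critical_region_polyhedron}), so at a boundary point the affine formula still produces a KKT point, and by uniqueness it is the solution. Two adjacent regions therefore give the same value on their shared face. A function that is affine on each of finitely many closed polyhedra covering $\mathcal U_{\mathrm{mc}}$, with these pieces agreeing on overlaps, is continuous and piecewise affine.

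\textbf{Part 3, the main obstacle.} I expect symmetry and positive semidefiniteness of $\mathbf K_r(t)$ to be the hardest part. The approach is a value-function argument.

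Let $J_t(\mathbf u(t))$ be the optimal per-period value. By the envelope theorem, which applies under LICQ and uniqueness, $\nabla_{\mathbf u(t)}J_t=-\bm\lambda^\star(t)$. The sign follows from how $\mathbf u$ enters \eqref{DCOPF:net_power:full} as $\mathbf p-\mathbf g+\mathbf d+\mathbf u=\mathbf 0$; I will check it carefully. Inside $\mathcal C_r$ the value function is quadratic, so $-\mathbf K_r(t)$ is its Hessian, and symmetry follows immediately.

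For the sign, I will verify directly that increasing the charging load $\mathbf u(t)$ raises costs convexly. Concretely, after the elimination above I will compute
\[
\mathbf K_r(t)=\mathbf P^\top\mathbf M^{-1}\mathbf P,
\]
where $\mathbf P$ selects the balance constraints' dependence on $\mathbf u(t)$. This product is manifestly symmetric PSD. It serves as an independent check that $J_t$ is convex in the right-hand-side perturbation, a standard fact for convex programs with the parameter entering the constraint right-hand sides. The pitfall I expect is sign bookkeeping, specifically the minus sign on $\mathbf u(t)$ together with the sign convention for the multiplier. I will resolve it with the explicit Schur-complement formula rather than the envelope argument alone, and fall back on convexity of the perturbation function as a sanity check.
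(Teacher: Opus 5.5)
Your proposal is correct and follows essentially the paper's route. You solve the per-period fixed-active-set KKT system, obtain a Schur-complement matrix $\mathbf M\succ\mathbf 0$ from Assumptions~\ref{assum1}--\ref{assum2}, get $\mathbf K_r(t)=\mathbf P^\top\mathbf M^{-1}\mathbf P$, and derive boundary extension and continuity from uniqueness of the primal--dual solution. Eliminating $\mathbf p$ and keeping the active bound rows inside $\mathbf M$ is equivalent, after a Schur complement on those rows, to the paper's $\mathbf N_r^\top\mathbf M_r^{-1}\mathbf N_r$ with $\mathbf M_r=\mathbf N_r\mathbf E_r^{\mathrm e}\mathbf N_r^\top$.

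One correction concerns the envelope sign. Because $\mathbf u(t)$ enters as added load, your per-period value function satisfies $\nabla_{\mathbf u(t)}J_t=+\bm\lambda^\star(t;\mathbf u)$, so its regional Hessian is $+\mathbf K_r(t)$, which is consistent with convexity. With the sign you wrote, the envelope argument would give $\mathbf K_r(t)\preceq\mathbf 0$ on full-dimensional regions and contradict your explicit formula, so part~3 should rest on that formula, as in the paper.
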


The proof follows by solving the KKT system of~\eqref{DCOPF} under each fixed
active set. Theorem~\ref{thm:affine_mapping} transforms the
ISO's market-clearing response from an implicit lower-level outcome into an
explicit region-by-region mapping. Within each critical region, the associated
constraint-binding pattern determines how the submitted ESA schedule affects
dispatch quantities, LMPs, and, ultimately, the ESA's profitability. Moving to
another critical region changes this mapping by activating a different set of
constraints, thereby explaining why the ESA may strategically induce one
binding pattern over another. Continuity prevents jumps at shared
boundaries, while the positive semidefiniteness of $\mathbf K_r(t)$ yields a
convex ESA objective within each region. These properties support the critical-region decomposition of~\eqref{bilevel}
and the subsequent operator-side tools.

\section{Constraint-Binding-Pattern Decomposition of the ESA Strategic Market-Participation Problem}
\label{sec:esa_binding_pattern_decomposition}

This section develops a constraint-binding-pattern decomposition of the
ESA strategic market-participation problem and uses it to recover an
equilibrium of the ESA--ISO Stackelberg interaction. We proceed in three
steps. First, building on the piecewise-affine market-clearing response
established in Theorem~\ref{thm:affine_mapping}, we characterize the ESA's
objective within each critical region. Second, we formulate the corresponding
pattern-specific subproblems. Finally, we recover the ESA's globally optimal
strategy by comparing their optimal values.

\subsection{Critical-Region Characterization of the ESA Objective}
\label{subsec:regional_esa_objective}

To replace the implicit price-response condition in~\eqref{bilevel:market_response} with an explicit regional representation, we use the mapping established in Theorem~\ref{thm:affine_mapping}. To this end, we first write the ESA objective as a function of the submitted schedule:
\begin{equation}
	\!J(\mathbf u)
	\!:=\!\!
	\sum_{t\in\mathcal T}
	\!\left[
	\!\bm{\lambda}^{\star}\!\left(t;\mathbf u\right)^{\!\top}
	\!\!\mathbf u(t)
	\!-\!\!\!\!\!
	\sum_{(i,j)\in\mathcal F}
	\!\!\!\!\!\kappa_{(i,j)}(t)
	\!\left(
	\lambda_j^{\star}\!\left(t;\mathbf u\right)
	-
	\lambda_i^{\star}\!\left(t;\mathbf u\right)
	\right)
\!\right]\!\!.
	\label{eq:esa_objective_global}
\end{equation}

\revv{To formulate the pattern-specific ESA subproblems, we need an explicit
	expression for $J(\mathbf u)$ within each critical region. The following lemma
	provides this characterization. Let
	$\mathbf e_i\in\mathbb R^{|\mathcal N|}$ denote the vector with a one in its
	$i$-th entry and zeros elsewhere.}

\begin{lemma}[Convex quadratic ESA objective within each critical region]
	\label{lem:regional_esa_objective}
	For each critical region $\mathcal C_r$, $r\in\mathcal R$, the ESA
	objective admits the convex quadratic representation
	\begin{equation}
		J_r(\mathbf u)
		\!:=\!
		\sum_{t\in\mathcal T}
		\!\left[
		\mathbf u(t)^{\!\top}
		\mathbf K_r(t)
		\mathbf u(t)
		\!+
		\bm{\psi}_r(t)^{\!\top}
		\mathbf u(t)
		+
		\zeta_r(t)
		\right]\!,
		\forall \mathbf u\in\mathcal C_r,
		\label{eq:regional_esa_objective_quadratic}
	\end{equation}
	where
	\revv{\begin{align}
		\bm{\psi}_r(t)
		&:=
		\mathbf k_r(t)
		-
		\mathbf K_r(t)^{\!\top}
		\sum_{(i,j)\in\mathcal F}
		\kappa_{(i,j)}(t)
		\left(
		\mathbf e_j-\mathbf e_i
		\right),
		\label{eq:regional_linear_term}
		\\
		\zeta_r(t)
		&:=
		-
		\sum_{(i,j)\in\mathcal F}
		\kappa_{(i,j)}(t)
		\left(
		\mathbf e_j-\mathbf e_i
		\right)^{\!\top}
		\mathbf k_r(t).
		\label{eq:regional_constant_term}
	\end{align}}
\end{lemma}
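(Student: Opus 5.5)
The plan is to substitute the regional affine price response of Theorem~\ref{thm:affine_mapping} directly into the objective $J(\mathbf u)$ in~\eqref{eq:esa_objective_global}. After that, the claim reduces to collecting terms and then invoking part~3 of that theorem for convexity. Throughout, fix $r\in\mathcal R$ and $\mathbf u\in\mathcal C_r$. By~\eqref{eq:lambda_affine}, for each $t$ we have $\bm{\lambda}^\star(t;\mathbf u)=\mathbf K_r(t)\mathbf u(t)+\mathbf k_r(t)$.

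\emph{Energy-payment term.} Substituting the affine price into the payment term gives
\[
\bm{\lambda}^\star(t;\mathbf u)^\top\mathbf u(t)=\mathbf u(t)^\top\mathbf K_r(t)^\top\mathbf u(t)+\mathbf k_r(t)^\top\mathbf u(t).
\]
Since a quadratic form depends only on the symmetric part of its matrix, and $\mathbf K_r(t)$ is symmetric by part~3 of Theorem~\ref{thm:affine_mapping}, this equals $\mathbf u(t)^\top\mathbf K_r(t)\mathbf u(t)+\mathbf k_r(t)^\top\mathbf u(t)$.

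\emph{FTR term.} First write each price difference as $\lambda_j^\star-\lambda_i^\star=(\mathbf e_j-\mathbf e_i)^\top\bm{\lambda}^\star(t;\mathbf u)$. Then define the aggregate vector $\mathbf w(t):=\sum_{(i,j)\in\mathcal F}\kappa_{(i,j)}(t)(\mathbf e_j-\mathbf e_i)$, so the FTR term becomes
\[
\mathbf w(t)^\top\big(\mathbf K_r(t)\mathbf u(t)+\mathbf k_r(t)\big)=\big(\mathbf K_r(t)^\top\mathbf w(t)\big)^\top\mathbf u(t)+\mathbf w(t)^\top\mathbf k_r(t).
\]
Subtracting this from the energy-payment term and summing over $t$ yields a quadratic term $\mathbf u(t)^\top\mathbf K_r(t)\mathbf u(t)$. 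The linear coefficient is $\mathbf k_r(t)-\mathbf K_r(t)^\top\mathbf w(t)$, which is exactly $\bm\psi_r(t)$ in~\eqref{eq:regional_linear_term}. The constant is $-\mathbf w(t)^\top\mathbf k_r(t)$, which is $\zeta_r(t)$ in~\eqref{eq:regional_constant_term}. This establishes~\eqref{eq:regional_esa_objective_quadratic} on $\mathcal C_r$.

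\emph{Convexity.} The function $J_r$ is a sum over $t$ of quadratics in the disjoint blocks $\mathbf u(t)$. In $\operatorname{vec}(\mathbf u)$ coordinates, its Hessian is therefore a permutation-similar block-diagonal matrix with blocks $2\mathbf K_r(t)$. Each block is positive semidefinite by Theorem~\ref{thm:affine_mapping}(3), so $J_r$ is convex on $\mathbb R^{|\mathcal N||\mathcal T|}$, and in particular on the polyhedron $\mathcal C_r$ from Lemma~\ref{lem:critical_region_polyhedron}.

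The only real content lies in part~3 of Theorem~\ref{thm:affine_mapping}, namely the symmetry and positive semidefiniteness of $\mathbf K_r(t)$, which we may assume. Everything else is bookkeeping. The point needing care is the FTR cross term: because it is linear in $\bm\lambda^\star$, it contributes only linear and constant terms and leaves the Hessian unchanged. For this reason convexity holds for any FTR position $\bm\kappa$, including positions of either sign.
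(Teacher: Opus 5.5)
Your proposal is correct and follows essentially the same route as the paper. You substitute $\bm\lambda^\star(t;\mathbf u)=\mathbf K_r(t)\mathbf u(t)+\mathbf k_r(t)$ into~\eqref{eq:esa_objective_global}, write each FTR spread as $(\mathbf e_j-\mathbf e_i)^\top\bm\lambda^\star(t;\mathbf u)$, collect the quadratic, linear, and constant terms, and obtain convexity from $\mathbf K_r(t)\succeq\mathbf 0$ in Theorem~\ref{thm:affine_mapping}. One minor point: the identity $\mathbf u(t)^\top\mathbf K_r(t)^\top\mathbf u(t)=\mathbf u(t)^\top\mathbf K_r(t)\mathbf u(t)$ holds for any square matrix because the expression is a scalar, so you do not need symmetry at that step.
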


\revv{Lemma~\ref{lem:regional_esa_objective} follows by substituting
	\eqref{eq:lambda_affine} into~\eqref{eq:esa_objective_global} and collecting
	the quadratic, linear, and constant terms in $\mathbf u$. The positive
	semidefiniteness of $\mathbf K_r(t)$ makes the objective within each critical
	region convex, which is key to the tractability of the pattern-specific ESA
	subproblems. Because adjacent critical regions may share boundaries, we must also verify
	that their quadratic representations agree wherever they overlap, as
	established by the following corollary.}

\begin{corollary}[Continuous piecewise-quadratic ESA objective]
	\label{cor:esa_objective_continuity}
	Under Assumptions~\ref{assum1}--\ref{assum3}, the ESA objective $J(\mathbf u)$ is a continuous
	piecewise-quadratic function on $\mathcal U_{\mathrm{mc}}$. More precisely,
	\begin{equation}
		J(\mathbf u)
		=
		J_r(\mathbf u),
		\qquad
		\forall \mathbf u\in\mathcal C_r,
		\quad
		r\in\mathcal R.\nonumber
	\end{equation}
	Furthermore, for any two adjacent critical regions
	$\mathcal C_r$ and $\mathcal C_s$,
	\begin{equation}
		J_r(\mathbf u)
		=
		J_s(\mathbf u),
		\qquad
		\forall \mathbf u
		\in
		\mathcal C_r
		\cap
		\mathcal C_s.\nonumber
	\end{equation}
\end{corollary}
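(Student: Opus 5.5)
The plan is to derive both claims directly from Theorem~\ref{thm:affine_mapping} and Lemma~\ref{lem:regional_esa_objective}, using that $J$ is a fixed, globally defined function of $\bm{\lambda}^\star(\cdot;\mathbf u)$ and $\mathbf u$.

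\textbf{Step 1: $J$ equals $J_r$ on each region.} Fix $r\in\mathcal R$ and $\mathbf u\in\mathcal C_r$. By part~1 of Theorem~\ref{thm:affine_mapping}, $\bm{\lambda}^\star(t;\mathbf u)=\mathbf K_r(t)\mathbf u(t)+\mathbf k_r(t)$ for every $t$. Here $\lambda^\star_j-\lambda^\star_i=(\mathbf e_j-\mathbf e_i)^\top\bm{\lambda}^\star$. Substituting into \eqref{eq:esa_objective_global} and expanding reproduces exactly the computation behind Lemma~\ref{lem:regional_esa_objective}. The result is the quadratic term $\mathbf u(t)^\top\mathbf K_r(t)\mathbf u(t)$, the linear term $\bm{\psi}_r(t)^\top\mathbf u(t)$ from \eqref{eq:regional_linear_term}, and the constant $\zeta_r(t)$ from \eqref{eq:regional_constant_term}. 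Hence $J(\mathbf u)=J_r(\mathbf u)$ on $\mathcal C_r$.

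The regions cover $\mathcal U_{\mathrm{mc}}$ and, by Lemma~\ref{lem:critical_region_polyhedron}, are polyhedra. Each $J_r$ is quadratic. Therefore $J$ is piecewise quadratic on $\mathcal U_{\mathrm{mc}}$.

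\textbf{Step 2: agreement on overlaps.} Take $\mathbf u\in\mathcal C_r\cap\mathcal C_s$. Step~1 applies to $\mathbf u$ as a member of $\mathcal C_r$ and again as a member of $\mathcal C_s$, giving $J_r(\mathbf u)=J(\mathbf u)=J_s(\mathbf u)$.

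The point needing care is that the $J(\mathbf u)$ in both equalities is the same number. This holds because $\bm{\lambda}^\star(t;\mathbf u)$ is uniquely defined under Assumptions~\ref{assum1}--\ref{assum2}: strict convexity gives a unique primal solution, and LICQ gives a unique multiplier. Part~2 of Theorem~\ref{thm:affine_mapping} says the same thing explicitly. At a boundary point,
\[
\mathbf K_r(t)\mathbf u(t)+\mathbf k_r(t)=\mathbf K_s(t)\mathbf u(t)+\mathbf k_s(t)=\bm{\lambda}^\star(t;\mathbf u).
\]
So the two regional formulas evaluate to the same price vector, and hence to the same objective value. Notably, this works even though the coefficients $(\mathbf K_r,\mathbf k_r)$ and $(\mathbf K_s,\mathbf k_s)$ may differ.

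\textbf{Step 3: continuity.} By part~2 of Theorem~\ref{thm:affine_mapping}, $\mathbf u\mapsto\bm{\lambda}^\star(t;\mathbf u)$ is continuous. The map $J$ is a polynomial in $\mathbf u$ and $\bm{\lambda}^\star$: bilinear terms $\bm{\lambda}^{\star\top}\mathbf u$ plus terms linear in $\bm{\lambda}^\star$ with fixed $\bm{\kappa}$. A composition of continuous maps is continuous, so $J$ is continuous on $\mathcal U_{\mathrm{mc}}$.

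Alternatively, continuity follows from the pasting lemma. There are finitely many closed regions $\mathcal C_r$, and the continuous pieces $J_r$ agree on intersections by Step~2.

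\textbf{Main obstacle.} The only delicate step is Step~2: ensuring the regional affine price formulas truly coincide on shared boundaries. I would rely directly on the uniqueness of multipliers (via LICQ) together with part~2 of Theorem~\ref{thm:affine_mapping}, rather than comparing the coefficient matrices $\mathbf K_r$ and $\mathbf K_s$, which in general need not be equal.
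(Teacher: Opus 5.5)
Your proposal is correct and follows essentially the same route as the paper. Both establish $J=J_r$ on each $\mathcal C_r$ through the substitution behind Lemma~\ref{lem:regional_esa_objective}, obtain agreement on $\mathcal C_r\cap\mathcal C_s$ from part~2 of Theorem~\ref{thm:affine_mapping} (the two regional affine LMP formulas coincide there), and derive continuity from the finite closed cover with agreeing quadratic pieces. Your alternative of composing the continuous map $\mathbf u\mapsto\bm{\lambda}^\star(t;\mathbf u)$ with the bilinear expression for $J$ is a harmless shortcut to the same conclusion.
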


\revv{Corollary~\ref{cor:esa_objective_continuity} follows directly from the
continuity of the price-response mapping established in
Theorem~\ref{thm:affine_mapping}. Thus, the functions
$\{J_r\}_{r\in\mathcal R}$ are not independent objectives; rather, they form
the quadratic pieces of a single continuous ESA objective. This structure
allows the ESA problem to be formulated separately within each critical
region, as done next.
}

\subsection{ESA Subproblems within Critical Regions}
\label{subsec:regional_esa_problems}

Building on Lemma~\ref{lem:regional_esa_objective} and
Corollary~\ref{cor:esa_objective_continuity}, we decompose the ESA problem into
one subproblem per critical region. For each $r\in\mathcal R$, restricting the
submitted ESA schedule to $\mathcal C_r$ yields:
\begin{subequations}
	\label{eq:regional_esa_qp}
	\begin{align}
		\min_{\mathbf u}\quad
		& J_r(\mathbf u)
		\label{eq:regional_esa_qp_obj}
		\\
		\text{s.t.}\quad
		& \mathbf u
		\in
		\mathcal U(\mathbf B),
		\label{eq:regional_esa_qp_storage}
		\\
		& \mathbf F_r
		\operatorname{vec}(\mathbf u)
		\leq
		\mathbf h_r.
		\label{eq:regional_esa_qp_region}
	\end{align}
\end{subequations}

\revv{Constraint~\eqref{eq:regional_esa_qp_region} ensures that the submitted
	ESA schedule lies in the closed critical region $\mathcal C_r$. If
	$\mathcal U(\mathbf B)\cap\mathcal C_r=\varnothing$, then
	problem~\eqref{eq:regional_esa_qp} is infeasible, and its optimal value is
	taken to be $+\infty$. By Lemma~\ref{lem:regional_esa_objective} and the linear constraints in
	\eqref{eq:regional_esa_qp_storage}--\eqref{eq:regional_esa_qp_region},
	problem~\eqref{eq:regional_esa_qp} is a convex quadratic program. Its objective,
	however, need not be strictly convex, so a feasible problem may admit multiple
	optimal schedules. This raises the concern that the corresponding LMP trajectory
	may not be uniquely determined within each critical region. The following lemma
	shows that, despite this nonuniqueness in the optimal schedule, the LMP trajectory remains unique.}

\revv{\begin{lemma}[LMP uniqueness within a critical-region subproblem]
	\label{lem:regional_price_uniqueness}
	For a fixed storage-capacity vector $\mathbf B$ and critical region
	$r\in\mathcal R$, define the optimizer set as
\begin{equation}
	\mathcal U_r^\star(\mathbf B)
	:=
	\operatorname*{arg\,min}_{\substack{
			\mathbf u\in\mathcal U(\mathbf B)\\
			\mathbf F_r\operatorname{vec}(\mathbf u)\leq\mathbf h_r
	}}
	J_r(\mathbf u).\nonumber
\end{equation}
If $\mathcal U_r^\star(\mathbf B)$ is nonempty, then all
$\mathbf u\in\mathcal U_r^\star(\mathbf B)$ induce the same cleared LMP
trajectory.
\end{lemma}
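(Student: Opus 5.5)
The proof rests on a standard fact about convex quadratic programs: a convex quadratic objective is constant along the quadratic part over the optimal set. We apply it to problem~\eqref{eq:regional_esa_qp} and then use the affine price map of Theorem~\ref{thm:affine_mapping}. Fix $\mathbf B$ and $r$, and take two optimizers $\mathbf u,\mathbf v\in\mathcal U_r^\star(\mathbf B)$. The feasible set of~\eqref{eq:regional_esa_qp} is an intersection of polyhedra, hence convex. By Lemma~\ref{lem:regional_esa_objective}, $J_r$ is convex on it, so the optimizer set is convex. Therefore $\mathbf w_\theta:=\theta\mathbf u+(1-\theta)\mathbf v$ is optimal for every $\theta\in[0,1]$, and $J_r(\mathbf w_\theta)=J_r^\star$ is constant in $\theta$.

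Next we expand $J_r(\mathbf w_\theta)$ as a polynomial in $\theta$. With $\bm\delta(t):=\mathbf u(t)-\mathbf v(t)$, the coefficient of $\theta^2$ is $\sum_{t}\bm\delta(t)^\top\mathbf K_r(t)\bm\delta(t)$. Since the function is constant on $[0,1]$, this coefficient must vanish. Each $\mathbf K_r(t)$ is symmetric positive semidefinite by Theorem~\ref{thm:affine_mapping}(3), so every summand is nonnegative, and hence $\bm\delta(t)^\top\mathbf K_r(t)\bm\delta(t)=0$ for each $t$. For a symmetric PSD matrix this gives $\mathbf K_r(t)\bm\delta(t)=\mathbf 0$, since $\|\mathbf K_r(t)^{1/2}\bm\delta(t)\|^2=0$.

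Finally, $\mathbf u$ and $\mathbf v$ both lie in $\mathcal C_r$, so the affine representation~\eqref{eq:lambda_affine} of Theorem~\ref{thm:affine_mapping}(1) applies to each. It gives
\[
\bm\lambda^\star(t;\mathbf u)-\bm\lambda^\star(t;\mathbf v)=\mathbf K_r(t)\bm\delta(t)=\mathbf 0
\]
for all $t$, so the two cleared LMP trajectories coincide. LMPs are uniquely defined for each schedule under Assumptions~\ref{assum1}--\ref{assum2}, so the regional affine formula gives the actual cleared price.

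The main obstacle is to avoid assuming more than the earlier results provide. The argument needs the same pair $(\mathbf K_r(t),\mathbf k_r(t))$ to represent prices throughout the closed region $\mathcal C_r$, including its boundary, where optimizers typically lie. Theorem~\ref{thm:affine_mapping}(1)--(2) gives exactly this: the representation holds on all of $\mathcal C_r$ and agrees with neighboring regions on shared boundaries. The remaining care is in the vanishing-curvature step. There we should note that the linear coefficient $\bm\psi_r(t)$ does not enter the $\theta^2$ coefficient, so FTR positions play no role in the conclusion.
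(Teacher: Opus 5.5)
Your proposal is correct and follows essentially the same argument as the paper. The paper works with the midpoint of two optimizers and the identity $\tfrac12\bigl(J_r(\mathbf u^1)+J_r(\mathbf u^2)\bigr)-J_r(\overline{\mathbf u})=\tfrac14\sum_{t}\Delta\mathbf u(t)^\top\mathbf K_r(t)\Delta\mathbf u(t)$, whereas you show that the $\theta^2$ coefficient vanishes along the whole segment. Both then use positive semidefiniteness to get $\mathbf K_r(t)\Delta\mathbf u(t)=\mathbf 0$ and conclude through the affine price map on $\mathcal C_r$.
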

}

\revv{Lemma~\ref{lem:regional_price_uniqueness} establishes that each feasible
	critical-region subproblem has a well-defined optimal LMP trajectory, even when
	its optimal ESA schedule is not unique. Thus, the outcome of each subproblem can
	be characterized unambiguously in terms of its optimal value and associated LMP
	trajectory. 
	
	The next subsection recovers the global ESA--ISO Stackelberg
	equilibrium by comparing the optimal values across critical regions.}

\subsection{ESA--ISO Stackelberg Equilibrium}
\label{subsec:global_esa_decomposition}

\revv{Having characterized the ESA problem within each critical region, we now
	recover the global solution to the ESA problem and the resulting ESA--ISO
	Stackelberg equilibrium by comparing the optimal values of the critical-region
	subproblems, as stated in the following theorem.}

\revv{\begin{theorem}[Stackelberg equilibrium via critical-region decomposition]
	\label{thm:critical_region_decomposition}
	Under Assumptions~\ref{assum1}--\ref{assum3}, any ESA schedule solving
	\eqref{best:ESA:nonrestrict} is globally optimal for~\eqref{bilevel} and,
	together with its corresponding ISO market-clearing response, yields an
	ESA--ISO Stackelberg equilibrium:
	\begin{equation}
		\min_{r\in\mathcal R}
		\;
		\min_{\substack{
				\mathbf u\in\mathcal U(\mathbf B)\\
				\mathbf F_r\operatorname{vec}(\mathbf u)
				\leq
				\mathbf h_r
		}}
		J_r(\mathbf u).
		\label{best:ESA:nonrestrict}
	\end{equation}
\end{theorem}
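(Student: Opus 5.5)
The argument reduces the bilevel problem~\eqref{bilevel} to a single-level problem over $\mathcal U(\mathbf B)\cap\mathcal U_{\mathrm{mc}}$ and then splits that feasible set along the critical-region cover.

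\emph{Reduction to a single level.} Under Assumptions~\ref{assum1}--\ref{assum2}, for every $\mathbf u\in\mathcal U_{\mathrm{mc}}$ the market-clearing primal solution and its multiplier are unique, so constraint~\eqref{bilevel:market_response} forces $\bm\lambda=\bm\lambda^\star(\cdot;\mathbf u)$. Assumption~\ref{assum3} restricts attention to $\mathbf u\in\mathcal U_{\mathrm{mc}}$; schedules outside this set admit no optimal multiplier and are therefore infeasible for~\eqref{bilevel} anyway. Hence~\eqref{bilevel} is equivalent to $\min\{J(\mathbf u):\mathbf u\in\mathcal U(\mathbf B)\cap\mathcal U_{\mathrm{mc}}\}$, with $J$ as in~\eqref{eq:esa_objective_global}.

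\emph{Splitting along critical regions.} We use the cover $\mathcal U_{\mathrm{mc}}=\bigcup_{r}\mathcal C_r$ together with Lemma~\ref{lem:critical_region_polyhedron}, which identifies each $\mathcal C_r$ with $\{\mathbf u:\mathbf F_r\operatorname{vec}(\mathbf u)\le\mathbf h_r\}$. This gives
\[
\mathcal U(\mathbf B)\cap\mathcal U_{\mathrm{mc}}=\bigcup_{r\in\mathcal R}\bigl(\mathcal U(\mathbf B)\cap\mathcal C_r\bigr).
\]
A minimum over a finite union equals the minimum over $r$ of the minima over the pieces. By Corollary~\ref{cor:esa_objective_continuity}, $J=J_r$ on $\mathcal C_r$, so on each piece we may replace $J$ by $J_r$. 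Infeasible pieces are assigned the value $+\infty$, which does not affect the outer minimum. Therefore the optimal value of~\eqref{bilevel} equals~\eqref{best:ESA:nonrestrict}.

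\emph{Attainment.} Each piece is a convex QP by Lemma~\ref{lem:regional_esa_objective}, and its feasible set is compact because $\mathcal U(\mathbf B)$ is bounded: $0\le\mathbf L\mathbf u_i\le B_i\mathbf 1$ bounds the partial sums and hence each $u_i(t)$. Each feasible piece is also closed, so every subproblem attains its minimum. Since $\mathcal R$ is finite, the outer minimum is attained as well.

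\emph{Global optimality of any minimizer of~\eqref{best:ESA:nonrestrict}.} Let $\mathbf u^\star$ attain~\eqref{best:ESA:nonrestrict} at index $r^\star$. Then $\mathbf u^\star$ is feasible for~\eqref{bilevel}, and $J(\mathbf u^\star)=J_{r^\star}(\mathbf u^\star)$. Now take any feasible $\mathbf u$ for~\eqref{bilevel}. It lies in some $\mathcal C_s$, so
\[
J(\mathbf u)=J_s(\mathbf u)\ge\min_{\mathcal U(\mathbf B)\cap\mathcal C_s}J_s\ge J_{r^\star}(\mathbf u^\star).
\]
Hence $\mathbf u^\star$ is globally optimal for~\eqref{bilevel}.

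\emph{Equilibrium.} Pairing $\mathbf u^\star$ with the unique ISO response $(\mathbf x^\star,\bm\lambda^\star)(\cdot;\mathbf u^\star)$ yields a Stackelberg equilibrium. The follower best-responds by construction. The leader has no profitable deviation, since any deviation is evaluated through the same unique response map and is covered by the inequality above.

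\emph{Main obstacle.} The delicate step is the boundary bookkeeping. A boundary point lies in several $\mathcal C_r$, so one must confirm that it is neither double-counted with conflicting values nor omitted. Continuity from Theorem~\ref{thm:affine_mapping}, via Corollary~\ref{cor:esa_objective_continuity}, guarantees that all $J_r$ agree there. Because the critical regions are closed and exhaust $\mathcal U_{\mathrm{mc}}$, no boundary point is lost, and every boundary minimizer is correctly valued in each region that contains it.
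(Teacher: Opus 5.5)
Your proposal is correct and follows essentially the same route as the paper. Both arguments reduce the problem to minimizing $J$ over $\mathcal U(\mathbf B)\cap\mathcal U_{\mathrm{mc}}$, write this set as the finite union of the polyhedral pieces $\mathcal U(\mathbf B)\cap\mathcal C_r$, replace $J$ by $J_r$ on each piece via Corollary~\ref{cor:esa_objective_continuity}, use compactness for attainment, and rely on Assumptions~\ref{assum1}--\ref{assum2} for the unique follower response. Your explicit no-profitable-deviation inequality, and your observation that invertibility of $\mathbf L$ makes $\mathcal U(\mathbf B)$ bounded, simply spell out steps the paper states more briefly.
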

}

Theorem~\ref{thm:critical_region_decomposition} turns the ESA's strategic
market-participation problem into an exact and transparent comparison across
critical regions and their associated constraint-binding patterns. Each
critical-region subproblem determines the ESA's best attainable objective value
and corresponding LMP trajectory. Comparing these optima ranks the attainable
critical regions according to the ESA's preference and reveals its preferred
binding patterns, while the associated objective terms explain why the ESA may
prefer one pattern over another.

\begin{remark}[ESA-reachable critical regions]
	\label{rem:reachable_critical_regions}
	Computing~\eqref{best:ESA:nonrestrict} requires identifying and solving
	only the critical regions that intersect $\mathcal U(\mathbf B)$, since
	the remaining regions contain no feasible ESA schedule. Thus, even when
	$\mathcal U_{\mathrm{mc}}$ contains many critical regions, the ESA-reachable subset
	may be much smaller. The on-demand algorithm presented in Appendix~\ref{app:reachable_region_exploration} discovers and
	constructs these regions as needed, without first constructing the full
	critical-region decomposition of $\mathcal U_{\mathrm{mc}}$ or enumerating all
	potential intersections with $\mathcal U(\mathbf B)$.
\end{remark}

\section{Constraint-Binding-Pattern Manipulation and FTR Effects}
\label{sec:pattern_manipulation_ftr}

\revv{The preceding section characterized how the ESA may increase its profit by
inducing particular constraint-binding patterns. We now formally define
constraint-binding-pattern manipulation and examine how FTR positions may
alter the ESA's preferred pattern and the associated welfare outcome.}

\subsection{Constraint-Binding-Pattern Manipulation}
\label{subsec:manipulation_benchmark}

\revv{The constraint-binding pattern induced by an ESA schedule solving
	\eqref{best:ESA:nonrestrict} does not by itself establish manipulation. Storage
	participation can alter the binding pattern relative to the network without
	storage even under nonstrategic operation. To isolate changes attributable to
	the ESA's strategic price-making behavior, we therefore require a reference
	that preserves the same storage capacity while removing strategic price effects.
	We use a \emph{centrally dispatched-storage benchmark}, in which the ISO jointly
	dispatches storage, generation, and demand to maximize social welfare, subject
	to the same bus-specific storage-capacity vector $\mathbf B$ as the strategic
	ESA:}

\begin{subequations}
	\label{eq:centralized_storage_benchmark}
	\begin{align}
		\mathrm{SW}^{\mathrm{cen}}(\mathbf B)
		:=
		\max_{\mathbf u,\,\mathbf g,\,\mathbf d,\,\mathbf p}
		\quad
		&
		\sum_{t\in\mathcal T}
		\left[
		V_t\!\left(\mathbf d(t)\right)
		-
		C_t\!\left(\mathbf g(t)\right)
		\right]
		\label{eq:centralized_storage_benchmark_obj}
		\\
		\text{s.t.}\quad
		&
		\eqref{DCOPF:net_power:full},
		~
		\eqref{DCOPF:Constraints},
		\label{eq:centralized_storage_benchmark_market}
		\\
		&
		\mathbf u\in\mathcal U(\mathbf B).
		\label{eq:centralized_storage_benchmark_storage}
	\end{align}
\end{subequations}

\revv{Although the storage schedule in
	\eqref{eq:centralized_storage_benchmark} is determined through centralized
	dispatch, the following proposition shows that the same schedule is also
	optimal for a price-taking ESA that treats the associated LMPs as given and
	does not strategically account for its effect on prices.}

\begin{proposition}[Price-taking interpretation of centralized storage dispatch]
	\label{prop:price_taking_implementation}
	For any optimal solution of
	\eqref{eq:centralized_storage_benchmark}, let
	$\mathbf u^{\mathrm{cen}}$ denote its storage schedule and
	$\bm\lambda^{\mathrm{cen}}$ the associated LMP trajectory. Then, for a price-taking ESA facing these LMPs,
	\begin{equation}
		\mathbf u^{\mathrm{cen}}
		\!\in
		\!\!\operatorname*{arg\,min}_{\mathbf u\in
			\mathcal U(\mathbf B)\cap\mathcal U_{\mathrm{mc}}}
		\!
		\sum_{t\in\mathcal T}
		\bm\lambda^{\mathrm{cen}}(t)^\top\mathbf u(t)
		\!-\!\!\!\!
		\sum_{(i,j)\in\mathcal F}
		\!\!\!\bm\kappa_{(i,j)}^\top
		\!\left(
		\bm\lambda_j^{\mathrm{cen}}
		-
		\bm\lambda_i^{\mathrm{cen}}
		\right)
		\!.
		\label{eq:price_taking_implementation}
	\end{equation}
\end{proposition}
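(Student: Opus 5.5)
The plan is to read the claim off the KKT conditions of the centralized benchmark~\eqref{eq:centralized_storage_benchmark}. The key observation is that $\mathbf u$ enters the benchmark only through the nodal balance~\eqref{DCOPF:net_power:full} and the storage set $\mathcal U(\mathbf B)$. The FTR term in~\eqref{eq:price_taking_implementation} is evaluated at the fixed prices $\bm\lambda^{\mathrm{cen}}$, so it is a constant in $\mathbf u$ and can be dropped. The claim therefore reduces to showing that $\mathbf u^{\mathrm{cen}}$ minimizes the linear function $\sum_t \bm\lambda^{\mathrm{cen}}(t)^\top\mathbf u(t)$ over $\mathcal U(\mathbf B)\cap\mathcal U_{\mathrm{mc}}$.

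First, write~\eqref{eq:centralized_storage_benchmark} as a minimization of $\sum_t[C_t-V_t]$. This is a convex QP with only linear constraints, so KKT multipliers exist at any optimum without a further constraint qualification. Attach $\bm\lambda(t)$ to~\eqref{DCOPF:net_power:full} with the same sign convention as in~\eqref{DCOPF}. Attach the usual multipliers to the constraints in~\eqref{DCOPF:Constraints}, and attach $\underline{\bm\mu}_i,\overline{\bm\mu}_i\ge\mathbf 0$ and $\nu_i$ to $\mathbf 0\le\mathbf L\mathbf u_i\le B_i\mathbf 1$ and $\mathbf 1^\top\mathbf u_i=0$.

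The KKT system then splits into two blocks. The first block, stationarity in $(\mathbf g,\mathbf d,\mathbf p)$ together with the complementarity conditions of~\eqref{DCOPF:Constraints}, is exactly the KKT system of the market-clearing problem~\eqref{DCOPF} with $\mathbf u=\mathbf u^{\mathrm{cen}}$ fixed. By Assumptions~\ref{assum1}--\ref{assum2} the primal solution and multipliers of~\eqref{DCOPF} are unique. Hence the $\bm\lambda$ block of any KKT multiplier of the benchmark coincides with the cleared LMP $\bm\lambda^\star(\cdot;\mathbf u^{\mathrm{cen}})$, which is what is meant by $\bm\lambda^{\mathrm{cen}}$. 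In the same step I note $\mathbf u^{\mathrm{cen}}\in\mathcal U_{\mathrm{mc}}$, since the benchmark is feasible with this $\mathbf u$.

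The second block is stationarity in $\mathbf u_i$. The balance constraint contributes $\bm\lambda_i^{\mathrm{cen}}$ up to the chosen sign, and the storage constraints contribute $\mathbf L^\top(\overline{\bm\mu}_i-\underline{\bm\mu}_i)+\nu_i\mathbf 1$. Together with feasibility and complementarity, this is precisely the KKT system of the LP $\min_{\mathbf u\in\mathcal U(\mathbf B)}\sum_t\bm\lambda^{\mathrm{cen}}(t)^\top\mathbf u(t)$. For an LP, KKT conditions are sufficient, so $\mathbf u^{\mathrm{cen}}$ is optimal over $\mathcal U(\mathbf B)$. Since $\mathbf u^{\mathrm{cen}}$ also lies in the smaller set $\mathcal U(\mathbf B)\cap\mathcal U_{\mathrm{mc}}$, it remains optimal there. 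Adding back the constant FTR term gives~\eqref{eq:price_taking_implementation}.

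The step I expect to need the most care is the sign bookkeeping in this $\mathbf u$-stationarity. Because $\mathbf u$ appears as $-\mathbf u(t)$ in~\eqref{DCOPF:net_power:full}, I must check that the balance multiplier enters the $\mathbf u$-block as $+\bm\lambda(t)$. That is what makes the price-taker's net payment $\bm\lambda^\top\mathbf u$, rather than $-\bm\lambda^\top\mathbf u$, the quantity being minimized. To settle this I would write the full Lagrangian explicitly, $\sum_t[C_t-V_t+\bm\lambda(t)^\top(\mathbf g(t)-\mathbf d(t)-\mathbf u(t)-\mathbf p(t))]$ with the convention matching~\eqref{DCOPF}, and adjust signs consistently. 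As a cross-check, the same conclusion follows from the separability of this partial Lagrangian, with the balance constraint relaxed, via a saddle-point argument.
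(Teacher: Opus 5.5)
Your proposal is correct and follows essentially the same argument as the paper. The $\mathbf u_i$-stationarity of the centralized benchmark, together with the storage feasibility and complementarity conditions, is exactly the KKT system of the price-taking problem over $\mathcal U(\mathbf B)$; KKT is sufficient by convexity; optimality then carries over to the subset $\mathcal U(\mathbf B)\cap\mathcal U_{\mathrm{mc}}$, which contains $\mathbf u^{\mathrm{cen}}$; and the FTR term is constant at fixed prices.

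On the sign you flagged: the paper's convention, with LMP equal to marginal cost and $\nabla_{\mathbf u(t)}\Phi=\bm\lambda^\star(t;\mathbf u)$, corresponds to the Lagrangian term $\bm\lambda(t)^\top(\mathbf p(t)-\mathbf g(t)+\mathbf d(t)+\mathbf u(t))$, not the one you wrote. This term gives exactly $+\bm\lambda_i^{\mathrm{cen}}$ in the $\mathbf u_i$-stationarity. Your extra step, identifying the benchmark's balance multiplier with the cleared LMP $\bm\lambda^\star(\cdot;\mathbf u^{\mathrm{cen}})$ via Assumptions~\ref{assum1}--\ref{assum2}, is a useful clarification that the paper leaves implicit.
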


\revv{Proposition~\ref{prop:price_taking_implementation} shows that the
	centralized-optimal storage schedule is also an
	optimal choice for a price-taking ESA facing the same LMPs. Thus, the benchmark has an economic interpretation consistent with a price-taking ESA without strategic price impact, providing a natural reference against which to identify
	constraint-binding-pattern manipulation.}

\revv{Let $\mathcal U_{\mathrm{cen}}^\star(\mathbf B)$ denote the storage
	schedules appearing in optimal solutions of
	\eqref{eq:centralized_storage_benchmark}. Because $\mathbf u$ does not enter
	the objective directly, this set may contain multiple schedules. Although
	generation and demand are unique under Assumption~\ref{assum1}, different
	centralized-optimal storage schedules may change net injections and line flows
	and thus induce different active sets and critical regions. We therefore
	compare the binding pattern arising from the strategic ESA equilibrium
	in~\eqref{best:ESA:nonrestrict} with the patterns induced by all
	centralized-optimal storage schedules rather than selecting a single benchmark
	solution.}
	
\revv{To perform this comparison, we need to specify which active constraints
	should determine whether the strategic-equilibrium and benchmark patterns match. The Lagrange multiplier of an active operating limit measures the
	marginal change in the ISO's optimal objective value resulting from a relaxation
	of that limit. Hence, an active inequality with a zero multiplier is binding in
	the primal solution but has no marginal economic value to the ISO. Using such
	constraints to distinguish patterns could therefore classify two outcomes as
	different solely because of a weakly active limit with no marginal effect on
	the market-clearing objective. To exclude such distinctions, define}
\begin{equation}
	\mathcal A^+(\mathbf u)
	:=
	\left\{
	m\in\mathcal A(\mathbf u):
	\mu_m^\star(\mathbf u)>0
	\right\},
	\label{eq:positive_multiplier_binding_pattern}
\end{equation}
\revv{where $m$ indexes the operating-limit inequalities in~\eqref{DCOPF}, and
	$\mu_m^\star(\mathbf u)$ is the corresponding optimal Lagrange multiplier.
	Using~\eqref{eq:positive_multiplier_binding_pattern}, define the set of
	positive-multiplier binding patterns induced by centralized-optimal storage
	schedules as}
\begin{equation}
	\mathcal K_{\mathrm{cen}}^+(\mathbf B)
	:=
	\left\{
	\mathcal A^+(\mathbf u):
	\mathbf u\in\mathcal U_{\mathrm{cen}}^\star(\mathbf B)
	\right\}.\nonumber
\end{equation}
\revv{Thus, $\mathcal K_{\mathrm{cen}}^+(\mathbf B)$ collects the benchmark
	binding patterns after excluding weakly active constraints. We can now
	formally compare the positive-multiplier binding pattern arising from the
	strategic ESA equilibrium with this benchmark set.}

\begin{definition}[Constraint-binding-pattern manipulation]
	\label{def:binding_pattern_manipulation}
	An equilibrium ESA schedule $\mathbf u^\star$ obtained from
	\eqref{best:ESA:nonrestrict} exhibits
	\emph{constraint-binding-pattern manipulation} relative to the centrally
	dispatched-storage benchmark if
	\begin{equation}
		\mathcal A^+(\mathbf u^\star)
		\notin
		\mathcal K_{\mathrm{cen}}^+(\mathbf B).\nonumber
	\end{equation}
\end{definition}

\revv{Thus, the equilibrium exhibits manipulation when its
	positive-multiplier binding pattern matches none of the centralized-optimal
	positive-multiplier patterns.}

\revv{Having characterized the ESA's incentives to induce particular
	constraint-binding patterns and formalized when such behavior constitutes
	manipulation, we now examine how FTR positions may alter the ESA's preferred
	pattern and the resulting social welfare outcome.}

\subsection{FTR Effects on Pattern Selection and Social Welfare}
\label{subsec:ftr_welfare_implications}

\revv{FTR positions can alter the ESA's ranking of attainable
	constraint-binding patterns by modifying the linear and constant terms of the
	objective within each critical region in~\eqref{eq:regional_linear_term}--\eqref{eq:regional_constant_term}.
		A pattern that produces an LMP spread aligned with an FTR position increases
		the associated FTR payoff, whereas an opposing spread reduces it and may make
		the payoff negative. Thus, FTR positions can change the relative
		profitability of attainable patterns and redirect the ESA's preferred pattern.}
	
\revv{To assess the welfare implications of such FTR-induced changes in pattern
	selection, define the social welfare associated with an ESA schedule $\mathbf u$
	as}
\begin{equation}
	\mathrm{SW}(\mathbf u)
	:=
	\sum_{t\in\mathcal T}
	\left[
	V_t\!\left(\mathbf d^\star(t;\mathbf u)\right)
	-
	C_t\!\left(\mathbf g^\star(t;\mathbf u)\right)
	\right].
	\label{eq:social_welfare}
\end{equation}
	
	\revv{In addition to the centrally dispatched-storage benchmark
		$\mathrm{SW}^{\mathrm{cen}}(\mathbf B)$ defined in
		Subsection~\ref{subsec:manipulation_benchmark}, we use
		$\mathrm{SW}^{\mathrm{cen}}(\mathbf 0)$ as the \emph{no-storage benchmark}.
	The following theorem compares $\mathrm{SW}(\mathbf u^\star)$ with these two
	benchmarks, first without FTR positions and then with nonzero FTR positions.}

\begin{theorem}[Welfare implications of FTR positions]
	\label{thm:ftr_welfare_implications}
Consider the ESA--ISO Stackelberg interaction~\eqref{bilevel} for a fixed
storage-capacity vector $\mathbf B$, assume
$\mathbf 0\in\mathcal U_{\mathrm{mc}}$, and let $\mathbf u^\star$ denote an
equilibrium ESA schedule obtained from~\eqref{best:ESA:nonrestrict}. Then the following hold:
	\begin{enumerate}
		\item In the absence of FTR positions, i.e.,
		$\bm{\kappa}=\mathbf 0$,
		\begin{equation}
			\mathrm{SW}^{\mathrm{cen}}(\mathbf B)
			\geq
			\mathrm{SW}(\mathbf u^\star)
			\geq
			\mathrm{SW}^{\mathrm{cen}}(\mathbf 0).
			\label{eq:welfare_without_ftr}
		\end{equation}
		
		\item There exist instances with $\bm{\kappa}\neq\mathbf 0$ for which
		\begin{equation}
			\mathrm{SW}^{\mathrm{cen}}(\mathbf B)
			\geq
			\mathrm{SW}^{\mathrm{cen}}(\mathbf 0)
			>
			\mathrm{SW}(\mathbf u^\star).
			\label{eq:welfare_with_ftr}
		\end{equation}
	\end{enumerate}
\end{theorem}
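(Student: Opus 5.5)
Part 1 splits into two inequalities. For the upper bound, note that $\mathbf u^\star\in\mathcal U(\mathbf B)\cap\mathcal U_{\mathrm{mc}}$ by Assumption~\ref{assum3}. Together with the unique market-clearing response $(\mathbf g^\star(t;\mathbf u^\star),\mathbf d^\star(t;\mathbf u^\star),\mathbf p^\star(t;\mathbf u^\star))$, it is therefore a feasible point of~\eqref{eq:centralized_storage_benchmark}. Its objective value there is exactly $\mathrm{SW}(\mathbf u^\star)$, which gives $\mathrm{SW}^{\mathrm{cen}}(\mathbf B)\ge\mathrm{SW}(\mathbf u^\star)$. The same argument with $\mathbf u=\mathbf 0$ shows $\mathrm{SW}^{\mathrm{cen}}(\mathbf 0)=\mathrm{SW}(\mathbf 0)$, because the storage schedule is then forced to zero. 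This identity also delivers the first inequality in~\eqref{eq:welfare_with_ftr}: $\mathcal U(\mathbf 0)=\{\mathbf 0\}\subseteq\mathcal U(\mathbf B)$ since $\mathbf B\ge\mathbf 0$, so $\mathrm{SW}^{\mathrm{cen}}(\mathbf B)\ge\mathrm{SW}^{\mathrm{cen}}(\mathbf 0)$.

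The lower bound $\mathrm{SW}(\mathbf u^\star)\ge\mathrm{SW}(\mathbf 0)$ with $\bm\kappa=\mathbf 0$ is the main step. I would use a duality or envelope argument on the market-clearing value function. Define
\[
\Phi(\mathbf u):=\min\sum_t\bigl[C_t(\mathbf g(t))-V_t(\mathbf d(t))\bigr],
\]
the optimal value of~\eqref{DCOPF} as a function of $\mathbf u$, so that $\mathrm{SW}(\mathbf u)=-\Phi(\mathbf u)$. Since $\mathbf u$ enters~\eqref{DCOPF} only through the right-hand side of the equality~\eqref{DCOPF:net_power:full}, $\Phi$ is convex on $\mathcal U_{\mathrm{mc}}$. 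Sensitivity theory then gives that the LMP vector $\bm\lambda^\star(\cdot;\mathbf u)$ is a subgradient of $\Phi$ at $\mathbf u$; the sign should be checked against the Lagrangian convention, and increasing $\mathbf u$ raises the cost at rate $\lambda$. Convexity yields
\[
\Phi(\mathbf 0)\ge\Phi(\mathbf u^\star)+\sum_t\bm\lambda^\star(t;\mathbf u^\star)^\top(\mathbf 0-\mathbf u^\star(t)),
\]
that is,
\[
\mathrm{SW}(\mathbf u^\star)-\mathrm{SW}(\mathbf 0)\ge-\sum_t\bm\lambda^\star(t;\mathbf u^\star)^\top\mathbf u^\star(t)=-J(\mathbf u^\star).
\]
Because $\mathbf 0\in\mathcal U(\mathbf B)\cap\mathcal U_{\mathrm{mc}}$ and $J(\mathbf 0)=0$ when $\bm\kappa=\mathbf 0$, optimality of $\mathbf u^\star$ from Theorem~\ref{thm:critical_region_decomposition} gives $J(\mathbf u^\star)\le 0$. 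Hence $\mathrm{SW}(\mathbf u^\star)\ge\mathrm{SW}(\mathbf 0)$. The delicate points are the subgradient identification, which needs LICQ for unique multipliers, and the sign conventions. If the subgradient claim needs support, I would instead verify it directly by weak duality. Take the Lagrangian of~\eqref{DCOPF} at $\mathbf u=\mathbf 0$ and evaluate it with the optimal multipliers of the $\mathbf u^\star$ problem; since the primal feasible set is polyhedral, this lower-bounds $\Phi(\mathbf 0)$ by $\Phi(\mathbf u^\star)-\sum_t\bm\lambda^{\star\top}\mathbf u^\star(t)$.

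Part 2 is existential, so I would construct an explicit small instance: two or three buses and $T=2$, with one line whose congestion is affected by storage. The design has three ingredients. First, with $\bm\kappa=\mathbf 0$ and at $\mathbf u=\mathbf 0$, the line is uncongested and the LMPs are equal. Second, a bounded storage unit at the importing bus can, by charging in one period and discharging in the other, congest the line in one period and create a price spread $\lambda_j-\lambda_i>0$. Third, a large FTR obligation $\kappa_{(i,j)}(t)$ is placed in that period. Using the affine maps of Theorem~\ref{thm:affine_mapping}, I would compute $J$ on the two relevant critical regions. For $\kappa$ large enough, the FTR term $-\kappa(\lambda_j-\lambda_i)$ dominates, so the congestion-inducing schedule yields $J<0=J(\mathbf 0)$ and is selected. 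The resulting dispatch is distorted away from the welfare optimum. By the same envelope inequality from the previous step, the welfare change is bounded above by the negative of the energy-arbitrage profit, which here is negative enough to give $\mathrm{SW}(\mathbf u^\star)<\mathrm{SW}(\mathbf 0)$. I would confirm this by evaluating the quadratic costs explicitly, and the paper's three-bus numerical study could serve as the concrete witness.
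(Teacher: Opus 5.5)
Your Part~1 is the paper's argument: the clearing value function $\Phi$ is convex, the LMPs are a subgradient of it, the zero schedule is an outside option giving $\sum_t\bm\lambda^\star(t;\mathbf u^\star)^\top\mathbf u^\star(t)\le 0$, and $\mathbf u^\star$ with its dispatch is feasible in the centralized problem. Your observation $\mathcal U(\mathbf 0)=\{\mathbf 0\}\subseteq\mathcal U(\mathbf B)$ also correctly gives the first inequality of Part~2.

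The gap is the strict inequality $\mathrm{SW}(\mathbf u^\star)<\mathrm{SW}^{\mathrm{cen}}(\mathbf 0)$. An existence claim needs a specific instance with its equilibrium and welfare computed, and you give only a design recipe. More seriously, the mechanism you invoke cannot work.

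The envelope inequality from Part~1 is taken at $\mathbf u^\star$ and reads $\mathrm{SW}(\mathbf u^\star)-\mathrm{SW}(\mathbf 0)\ge-\sum_t\bm\lambda^\star(t;\mathbf u^\star)^\top\mathbf u^\star(t)$. This is a \emph{lower} bound on the welfare change, equal to the ESA's energy-arbitrage profit. However negative that profit is, this inequality never certifies a loss.

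The only linearization that gives an upper bound is the one at $\mathbf 0$: $\mathrm{SW}(\mathbf u^\star)-\mathrm{SW}(\mathbf 0)\le-\sum_t\bm\lambda^\star(t;\mathbf 0)^\top\mathbf u^\star(t)$. In the natural design, where the ESA charges in the low-price period and discharges in the high-price one, this bound is positive. In the paper's example, $\lambda_2(\cdot;\mathbf 0)=(3,4)$ and $\mathbf u_2^\star=(2,-2)$ give an upper bound of $+2$. The lower bound there is $-22$, and the true change is $-6$.

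Two further steps fall short in the same way. ``Distorted away from the welfare optimum'' only yields $\mathrm{SW}(\mathbf u^\star)<\mathrm{SW}^{\mathrm{cen}}(\mathbf B)$. And beating $J(\mathbf 0)=0$ does not by itself make the congesting schedule the equilibrium; you must compare the regional minima.

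What is required is an exact evaluation of total cost as a function of the storage action, which is what the paper does with a two-bus, two-period instance. The data are: line limit $4$; fixed demand $3$ at bus~2; a cheap bus-1 generator and a bus-2 generator with marginal cost $4+9g$; $B_2=2$; and $\kappa_{(1,2)}(1)=4$.

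With $u_2=(x,-x)$, the ESA objective is $J(x)=2x^2-x$ on $[0,1]$ and $10x^2-45x+36$ on $[1,2]$. The equilibrium is therefore $x=2$, with $J=-14<-1/8$. Total generation cost is $12$ at $x=0$, $18$ at $x=2$, and $47/4$ at the centralized optimum $x=1/2$. This gives $-18<-12\le-47/4$.

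Your fallback to the three-bus study also works, but only after computing the no-storage welfare, which Table~\ref{tab:three_bus_results} does not report. There the no-storage cost is $20169$, against about $20579$ at the FTR equilibrium $\mathbf u_3\approx(111.7,-111.7)$. That is a loss of roughly \$410 relative to no storage.
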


\revv{Theorem~\ref{thm:ftr_welfare_implications} separates the welfare benefit
	of available storage capacity from the effects of its strategic operation.
	Without FTR positions, strategically operated storage weakly improves social
	welfare relative to the no-storage benchmark, while centralized dispatch
	provides an upper bound, consistent with prior findings such
	as~\cite{contreras2017participation}. With FTR positions, however, the
	no-storage lower bound can fail. By altering the ESA's ranking of attainable
	binding patterns through congestion-dependent FTR payoffs, FTR positions can
	strengthen the ESA's incentive to induce a pattern that yields lower welfare.
	This result highlights the need for market rules that jointly account for
	physical market participation and financial positions such as FTRs when
	assessing market participants' strategic behavior.}

\section{ISO Tools for Limiting Constraint-Binding-Pattern Manipulation}
\label{sec:iso_binding_pattern_control}

\revv{Having characterized the ESA's incentives to induce particular
	constraint-binding patterns, we now develop two ISO-side tools for limiting
	undesirable outcomes. Suppose the ISO identifies a set of desirable patterns
	based on system-operation or market criteria. A natural starting point is the
	set of patterns induced by the centrally dispatched-storage benchmark. The ISO
	may broaden this set by admitting additional constraint-binding patterns, for
	example, those that avoid congestion on specified transmission lines.
	Although some additionally permitted patterns may still constitute
	constraint-binding-pattern manipulation under
	Definition~\ref{def:binding_pattern_manipulation}, they are nevertheless
	acceptable under the ISO's criteria. Admitting such patterns provides the ESA with greater scheduling flexibility while limiting the range of undesirable patterns it can induce. Given this
	set, one tool restricts the admissible schedules of an ESA with a given capacity
	vector $\mathbf B$, whereas the other imposes bus-specific limits on the storage
	capacity that the ESA may aggregate. Both tools preserve the ESA's freedom to
	choose among schedules that induce permitted patterns, without requiring it to
	follow a prescribed schedule.}

Throughout this section, let
$
\mathcal R_{\mathrm{des}}
\subseteq
\mathcal R
$
denote the indices of critical regions whose relative interiors%
\footnote{\textcolor{black}{Critical regions are closed, so boundaries shared by approved and
		unapproved neighboring regions also belong to
		$\mathcal C_{\mathrm{des}}$. This boundary overlap does not alter the
		market-outcome interpretation because constraints active solely on such
		boundaries are weakly active, with zero Lagrange multipliers, and therefore
		have no marginal effect on the market-clearing objective.}}
are deemed desirable and therefore approved by the ISO, and define
$
\mathcal C_{\mathrm{des}}
:=
\bigcup_{r\in\mathcal R_{\mathrm{des}}}
\mathcal C_r.
$
To ensure feasibility under both tools, we assume that the no-action schedule
$\mathbf u=\mathbf 0$ is approved:
\begin{equation}
	\mathbf 0
	\in
	\mathcal C_{\mathrm{des}}.
	\label{eq:zero_schedule_approved}
\end{equation}

Since
$
\mathbf 0
\in
\mathcal U(\mathbf B)
$
for every
$
\mathbf B
\geq
\mathbf 0,
$
this condition ensures that the ESA retains at least one admissible schedule.

\subsection{Schedule Restrictions for a Given Storage-Capacity Vector}
\label{subsec:region_restricted_given_B}

We first consider an ESA participating with a given storage-capacity vector
$\mathbf B$. Existing market rules already impose quantity-side requirements
on market participants through mechanisms such as must-offer obligations and rules addressing physical withholding~\cite{isoNEMustOffer}. Motivated by this practice, we develop a schedule-based
tool tailored to strategic ESAs. Rather than imposing separate limits on
individual charging and discharging quantities, the proposed tool restricts the
ESA's admissible multi-period and multi-location schedules based on the
constraint-binding patterns they induce.

Knowing $\mathcal C_{\mathrm{des}}$, the ISO can restrict the ESA's admissible
schedules to
$
\mathcal U(\mathbf B)
\cap
\mathcal C_{\mathrm{des}}
$. The resulting schedule-restricted ESA problem is
\begin{equation}
	\min_{\mathbf u\in
		\mathcal U(\mathbf B)
		\cap
		\mathcal C_{\mathrm{des}}}
	J(\mathbf u).
	\label{eq:region_restricted_esa_problem}
\end{equation}

Using the critical-region decomposition developed in
Subsection~\ref{subsec:regional_esa_problems},
problem~\eqref{eq:region_restricted_esa_problem} can be solved by solving
the critical-region subproblem~\eqref{eq:regional_esa_qp} for each
$r\in\mathcal R_{\mathrm{des}}$ and then comparing their optimal values, as in
\eqref{best:ESA:nonrestrict}, with the comparison restricted to the approved
regions.

\subsection{Limits on the Aggregated Storage-Capacity Vector}
\label{subsec:capacity_limits_binding_pattern_control}

We next consider an alternative tool based on aggregation-capacity limits.
Existing market rules already impose limits on aggregation size to manage
network impacts. For example, CAISO limits a distributed energy resource
aggregation spanning multiple pricing nodes to 20~MW~\cite{caiso2026tariff}.
Building on this principle, we develop a more targeted, network-aware approach.
Rather than imposing a single limit on the ESA's total aggregated storage
capacity, the ISO sets a bus-specific maximum capacity $\overline B_i$ at each
bus $i$ for market participation. The ISO seeks a capacity-limit vector
$\overline{\mathbf B}$ satisfying the approved-set containment condition
\begin{equation}
	\mathcal U\!\left(\overline{\mathbf B}\right)
	\subseteq
	\mathcal C_{\mathrm{des}}.
	\label{eq:desired_set_containment}
\end{equation}

By~\eqref{eq:zero_schedule_approved},
$
\overline{\mathbf B}
=
\mathbf 0
$
always satisfies~\eqref{eq:desired_set_containment}, since
$
\mathcal U(\mathbf 0)
=
\{\mathbf 0\}
\subseteq
\mathcal C_{\mathrm{des}}.
$
Thus, a feasible capacity-limit vector always exists, although the amount of
positive storage capacity that can be safely admitted depends on $\mathcal C_{\mathrm{des}}$.
Because $\mathcal C_{\mathrm{des}}$ may consist of multiple critical regions,
it need not be convex. We first consider the case in which it admits a single
polyhedral representation, yielding explicit linear capacity-limit conditions,
and then discuss the general nonconvex case.

\subsubsection{Polyhedral approved set}
\label{subsubsec:polyhedral_approved_sets}

For this case, we adopt the following assumption on the geometry of the
approved set.
\begin{assumption}[Polyhedral representation of the approved set]
	\label{assump:polyhedral_desired_set}
	The ISO-approved set $\mathcal C_{\mathrm{des}}$ admits the representation
	\begin{equation}
		\mathcal C_{\mathrm{des}}
		=
		\left\{
		\mathbf u:
		\mathbf F_{\mathrm{des}}
		\operatorname{vec}(\mathbf u)
		\leq
		\mathbf h_{\mathrm{des}}
		\right\},
		\label{eq:desired_binding_pattern_set}
	\end{equation}
	where
	$
	\mathbf F_{\mathrm{des}}
	\in
	\mathbb R^{m_{\mathrm{des}}\times|\mathcal N||\mathcal T|}
	$
	and
	$
	\mathbf h_{\mathrm{des}}
	\in
	\mathbb R^{m_{\mathrm{des}}},
	$
	with $m_{\mathrm{des}}$ denoting the number of inequalities defining the
	approved set.
\end{assumption}

Using~\eqref{eq:desired_binding_pattern_set},
condition~\eqref{eq:desired_set_containment} is equivalent to
\begin{equation}
	\mathbf F_{\mathrm{des}}
	\operatorname{vec}(\mathbf u)
	\leq
	\mathbf h_{\mathrm{des}},
	\qquad
	\forall\mathbf u
	\in
	\mathcal U\!\left(\overline{\mathbf B}\right),
	\label{eq:universal_desired_conditions}
\end{equation}
which is stated over feasible schedules, but capacity-limit design requires a
condition directly in $\overline{\mathbf B}$. The following proposition
provides this characterization. To state it, let $\mathbf f_k^\top$ denote the
$k$-th row of $\mathbf F_{\mathrm{des}}$ and partition it by bus as
\begin{equation}
	\mathbf f_k^\top
	=
	\bigl[
	\mathbf f_{k,1}^\top
	~
	\cdots
	~
	\mathbf f_{k,|\mathcal N|}^\top
	\bigr],
	\qquad
	k=1,\ldots,m_{\mathrm{des}},
	\label{eq:desired_row_partition}
\end{equation}
where
$
\mathbf f_{k,i}\in\mathbb R^{|\mathcal T|}.
$

\begin{proposition}[Linear characterization of approved-set containment]
	\label{prop:linear_safe_capacity_limits}
	Under Assumption~\ref{assump:polyhedral_desired_set}, for each
	$
	k=1,\ldots,m_{\mathrm{des}}
	$
	and
	$
	i\in\mathcal N,
	$
	define
		\begin{align}
			\xi_{k,i}
			:=
			\max_{\mathbf w_i\in\mathbb R^{|\mathcal T|}}
			\quad
			&
			\mathbf f_{k,i}^\top
			\mathbf w_i\nonumber
			\\
			\text{s.t.}\quad \,\,\,\,
			&
			\mathbf 0
			\leq
			\mathbf L\mathbf w_i
			\leq
			\mathbf 1,\nonumber
			\\
			&
			\mathbf 1^\top
			\mathbf w_i
			=
			0.\nonumber
		\end{align}
	Let
	$
	\bm{\Xi}
	\in
	\mathbb R^{m_{\mathrm{des}}\times|\mathcal N|}
	$
	denote the matrix whose $(k,i)$-th entry is $\xi_{k,i}$. Then, for any
	$
	\overline{\mathbf B}
	\geq
	\mathbf 0,
	$
	\begin{equation}
		\mathcal U\!\left(\overline{\mathbf B}\right)
		\subseteq
		\mathcal C_{\mathrm{des}}
		\quad
		\Longleftrightarrow
		\quad
		\bm{\Xi}
		\overline{\mathbf B}
		\leq
		\mathbf h_{\mathrm{des}}.
		\label{eq:capacity_limit_linear_characterization}
	\end{equation}
\end{proposition}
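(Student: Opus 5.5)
The containment condition~\eqref{eq:universal_desired_conditions} is a family of semi-infinite linear constraints. I will reduce each row to a support-function evaluation of $\mathcal U(\overline{\mathbf B})$ and then use the separable, scale-homogeneous structure of that set across buses. Concretely, $\mathcal U(\overline{\mathbf B})\subseteq\mathcal C_{\mathrm{des}}$ holds if and only if, for every $k=1,\ldots,m_{\mathrm{des}}$,
\begin{equation}
	\max_{\mathbf u\in\mathcal U(\overline{\mathbf B})}\mathbf f_k^\top\operatorname{vec}(\mathbf u)\le h_{\mathrm{des},k}.\nonumber
\end{equation}
The maximum is attained because $\mathcal U(\overline{\mathbf B})$ is a nonempty compact polytope. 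Nonemptiness holds since $\mathbf 0$ belongs to it. Compactness follows because $\mathbf L$ is invertible, so $\mathbf u_i=\mathbf L^{-1}(\mathbf L\mathbf u_i)$ is bounded whenever $\mathbf L\mathbf u_i\in[0,\overline B_i]^{|\mathcal T|}$.

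The first step is separability. The constraints defining $\mathcal U(\overline{\mathbf B})$ are imposed bus by bus, with no coupling across $i$. Hence $\mathcal U(\overline{\mathbf B})=\prod_{i\in\mathcal N}\mathcal U_i(\overline B_i)$, where $\mathcal U_i(b):=\{\mathbf u_i:\mathbf 0\le\mathbf L\mathbf u_i\le b\mathbf 1,\ \mathbf 1^\top\mathbf u_i=0\}$. Using the partition~\eqref{eq:desired_row_partition}, the objective splits as $\sum_i\mathbf f_{k,i}^\top\mathbf u_i$. The maximum over the product set is therefore the sum of the per-bus maxima $\sum_i\sigma_{k,i}(\overline B_i)$, where $\sigma_{k,i}(b):=\max_{\mathbf u_i\in\mathcal U_i(b)}\mathbf f_{k,i}^\top\mathbf u_i$.

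The second step is positive homogeneity, which I expect to be the only point needing care. For $b>0$, the substitution $\mathbf u_i=b\,\mathbf w_i$ maps $\mathcal U_i(b)$ bijectively onto $\mathcal U_i(1)$, because all constraints are linear and the right-hand sides scale with $b$ (the equality has a zero right-hand side). This gives $\sigma_{k,i}(b)=b\,\xi_{k,i}$. For $b=0$, invertibility of $\mathbf L$ gives $\mathcal U_i(0)=\{\mathbf 0\}$, so $\sigma_{k,i}(0)=0=0\cdot\xi_{k,i}$. Here $\xi_{k,i}$ is finite, since $\mathcal U_i(1)$ is a nonempty compact polytope, so the product is well defined. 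This is exactly where $\overline{\mathbf B}\ge\mathbf 0$ is used: for negative $b$ the set $\mathcal U_i(b)$ is empty and the scaling argument breaks down.

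Combining the two steps, row $k$ of the containment condition is equivalent to $\sum_i\xi_{k,i}\overline B_i\le h_{\mathrm{des},k}$, which is row $k$ of $\bm\Xi\overline{\mathbf B}\le\mathbf h_{\mathrm{des}}$. Both directions of~\eqref{eq:capacity_limit_linear_characterization} follow simultaneously, because every step above is an equivalence.
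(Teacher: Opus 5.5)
Your proposal is correct and follows essentially the same argument as the paper. Both reduce containment to a row-wise support-function bound, split the maximum bus by bus using the separability of $\mathcal U(\overline{\mathbf B})$, and use the substitution $\mathbf u_i=\overline B_i\mathbf w_i$ for $\overline B_i>0$, with invertibility of $\mathbf L$ handling $\overline B_i=0$; your added justification that the maxima are attained and $\xi_{k,i}$ is finite (compactness via invertibility of $\mathbf L$) is correct and slightly tightens the paper's presentation.
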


 Equipped with Proposition~\ref{prop:linear_safe_capacity_limits}, the ISO can
 replace~\eqref{eq:universal_desired_conditions} with
 $m_{\mathrm{des}}$ linear inequalities in $\overline{\mathbf B}$.
Here, each coefficient $\xi_{k,i}$ quantifies the maximum contribution of one unit
 of storage capacity at bus $i$ to the $k$-th approved-set inequality, thereby
 linking bus-specific capacity directly to pattern admissibility.

There may be multiple capacity-limit vectors satisfying
\eqref{eq:capacity_limit_linear_characterization}. A natural policy is to
maximize the total admitted capacity across the network, formulated as
\begin{subequations}
	\label{eq:capacity_limit_design_lp}
	\begin{align}
		\max_{\overline{\mathbf B}}\quad
		&
		\sum_{i\in\mathcal N}\overline{B}_i
		\label{eq:capacity_limit_design_lp_obj}
		\\
		\text{s.t.}\quad
		&
		\bm{\Xi}\overline{\mathbf B}
		\leq
		\mathbf h_{\mathrm{des}},
		\label{eq:capacity_limit_design_lp_containment}
		\\
		&
		\overline{\mathbf B}
		\geq
		\mathbf 0.
		\label{eq:capacity_limit_design_lp_nonnegative}
	\end{align}
\end{subequations}

\begin{corollary}[Approved-pattern guarantee under capacity limits]
	\label{cor:safe_participation_capacity_limits}
	Under Assumption~\ref{assump:polyhedral_desired_set}, let
	$\overline{\mathbf B}$ be any feasible solution of
	\eqref{eq:capacity_limit_design_lp}. Then, for every aggregated
	storage-capacity vector $\mathbf B$ satisfying
	$
	\mathbf 0
	\leq
	\mathbf B
	\leq
	\overline{\mathbf B}
	$
	componentwise, it holds that
	\begin{equation}
		\mathcal U(\mathbf B)
		\subseteq
		\mathcal U\!\left(\overline{\mathbf B}\right)
		\subseteq
		\mathcal C_{\mathrm{des}}.
		\label{eq:safe_capacity_guarantee}
	\end{equation}
\end{corollary}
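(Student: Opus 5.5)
The plan is to prove the two inclusions in \eqref{eq:safe_capacity_guarantee} separately. The first, $\mathcal U(\mathbf B)\subseteq\mathcal U(\overline{\mathbf B})$, is a direct monotonicity argument on the storage feasible set. The second, $\mathcal U(\overline{\mathbf B})\subseteq\mathcal C_{\mathrm{des}}$, will follow from Proposition~\ref{prop:linear_safe_capacity_limits} applied to the feasible point $\overline{\mathbf B}$ of \eqref{eq:capacity_limit_design_lp}.

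\textbf{First inclusion.} Take any $\mathbf u\in\mathcal U(\mathbf B)$. By definition, for each bus $i$ we have $\mathbf 0\le\mathbf L\mathbf u_i\le B_i\mathbf 1$ and $\mathbf 1^\top\mathbf u_i=0$. Since $0\le B_i\le\overline B_i$ componentwise, it follows that $B_i\mathbf 1\le\overline B_i\mathbf 1$. Hence $\mathbf 0\le\mathbf L\mathbf u_i\le\overline B_i\mathbf 1$, while the cyclic constraint is unaffected by the capacity. Therefore $\mathbf u\in\mathcal U(\overline{\mathbf B})$. Only the upper state-of-charge bound depends on capacity, and it enlarges monotonically in $B_i$.

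\textbf{Second inclusion.} Feasibility of $\overline{\mathbf B}$ in \eqref{eq:capacity_limit_design_lp} gives $\overline{\mathbf B}\ge\mathbf 0$ and $\bm\Xi\overline{\mathbf B}\le\mathbf h_{\mathrm{des}}$. These are exactly the hypotheses of Proposition~\ref{prop:linear_safe_capacity_limits} under Assumption~\ref{assump:polyhedral_desired_set}. Its ``$\Leftarrow$'' direction then yields $\mathcal U(\overline{\mathbf B})\subseteq\mathcal C_{\mathrm{des}}$.

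If one wants to see why that direction holds, the argument is short. For $\mathbf u\in\mathcal U(\overline{\mathbf B})$ and each row $k$, write $\mathbf f_k^\top\operatorname{vec}(\mathbf u)=\sum_i\mathbf f_{k,i}^\top\mathbf u_i$. For a bus with $\overline B_i>0$, the rescaled vector $\mathbf u_i/\overline B_i$ is feasible for the problem defining $\xi_{k,i}$, so $\mathbf f_{k,i}^\top\mathbf u_i\le\overline B_i\xi_{k,i}$. For a bus with $\overline B_i=0$, the constraints force $\mathbf L\mathbf u_i=\mathbf 0$. Since $\mathbf L$ is invertible, this gives $\mathbf u_i=\mathbf 0$, and the bound holds trivially. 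Summing over buses gives $\mathbf f_k^\top\operatorname{vec}(\mathbf u)\le(\bm\Xi\overline{\mathbf B})_k\le h_{\mathrm{des},k}$.

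Chaining the two inclusions gives \eqref{eq:safe_capacity_guarantee}. There is no genuine obstacle, since the substantive work is already done in Proposition~\ref{prop:linear_safe_capacity_limits}. The only point needing care is the zero-capacity bus case, where $\mathbf u_i$ cannot be rescaled by $1/\overline B_i$. Invertibility of $\mathbf L$, which is lower triangular with unit diagonal, settles that case.
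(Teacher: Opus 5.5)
Your proof is correct and follows the paper's argument essentially verbatim. Monotonicity of $\mathcal U(\cdot)$ in the capacity vector gives the first inclusion, and feasibility of $\overline{\mathbf B}$ in \eqref{eq:capacity_limit_design_lp} combined with Proposition~\ref{prop:linear_safe_capacity_limits} gives the second; your optional sketch of the ``$\Leftarrow$'' direction (rescaling by $\overline B_i$, and invertibility of $\mathbf L$ at zero-capacity buses) likewise matches the paper's proof of that proposition.
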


Corollary~\ref{cor:safe_participation_capacity_limits} ensures that an ESA
whose aggregated storage-capacity vector remains below $\overline{\mathbf B}$
cannot submit a schedule outside the ISO-approved set
$\mathcal C_{\mathrm{des}}$. If
\eqref{eq:capacity_limit_design_lp} is bounded, an optimal solution provides
this guarantee while admitting the maximum total storage capacity.

\subsubsection{Nonconvex approved set}
\label{subsubsec:nonconvex_approved_sets}

\revv{If Assumption~\ref{assump:polyhedral_desired_set} does not hold,
	$\mathcal C_{\mathrm{des}}$ may be a nonconvex union of critical regions.
	A conservative approach is to construct polyhedral inner approximations
	$
	\{\mathcal P_{\mathrm{des},\rho}\}_{\rho=1}^{R_{\mathrm{in}}}
	$
	such that
	\[
\mathbf 0
\in
\mathcal P_{\mathrm{des},\rho}
\subseteq
\mathcal C_{\mathrm{des}},
\qquad
\forall \rho=1,\ldots,R_{\mathrm{in}}.
	\]

	For each $\rho$, the corresponding inner approximation can be represented as
$
		\mathcal P_{\mathrm{des},\rho}
		=
		\left\{
		\mathbf u:
		\mathbf F_{\mathrm{in},\rho}
		\operatorname{vec}(\mathbf u)
		\leq
		\mathbf h_{\mathrm{in},\rho}
		\right\},
$
	for some matrix $\mathbf F_{\mathrm{in},\rho}$ and vector
	$\mathbf h_{\mathrm{in},\rho}$.
	Applying Proposition~\ref{prop:linear_safe_capacity_limits} to each
	$\mathcal P_{\mathrm{des},\rho}$ yields the corresponding admissible-capacity set
\begin{equation}
		\mathcal B_{\mathrm{in},\rho}
		:=
		\left\{
		\overline{\mathbf B}\geq\mathbf 0:
		\mathcal U(\overline{\mathbf B})
		\subseteq
		\mathcal P_{\mathrm{des},\rho}
		\right\}.\nonumber
\end{equation}
	Hence, $\mathcal B_{\mathrm{in}}:=\bigcup_{\rho=1}^{R_{\mathrm{in}}}
	\mathcal B_{\mathrm{in},\rho}$ is a conservative set of capacity-limit vectors
	that the ISO may announce to the ESA, allowing it to choose any vector in
	$\mathcal B_{\mathrm{in}}$. Alternatively, if the ISO wishes to retain the
	maximum-capacity objective in~\eqref{eq:capacity_limit_design_lp}, it may solve
	the corresponding problem for each $\rho$. If these problems admit optimizers,
	let $\overline{\mathbf B}_{\rho}^{\star}$ denote an optimizer for each $\rho$.
	The ISO may then announce the set
	\begin{equation}
		\mathcal B_{\mathrm{in}}^\star
		:=
		\left\{
		\overline{\mathbf B}_{\rho}^{\star}:
		\rho=1,\ldots,R_{\mathrm{in}}
		\right\}
		\subseteq
		\mathcal B_{\mathrm{in}},\nonumber
	\end{equation}
	from which the ESA may select a capacity-limit vector.}

\section{Numerical Experiments}\label{numerical}

To keep the exposition and critical-region geometry simple and transparent, we
first illustrate the main ideas using a three-bus example and then demonstrate
the scalability of the framework on several IEEE standard test systems.

\subsection{Illustrative Three-Bus Example}\label{three:bus:example}

We consider the two-period system illustrated in
Fig.~\ref{fig:three_bus_system} and use the proposed constraint-binding-pattern
decomposition to obtain the ESA--ISO equilibrium. We compare three cases: the
centrally dispatched-storage benchmark, a strategic ESA without FTRs, and a
strategic ESA holding an FTR. The resulting outcomes are summarized in
Table~\ref{tab:three_bus_results}, while Fig.~\ref{fig:three_bus_regions}
shows the corresponding critical-region partition. In the table, the SW gap is defined as
$
\mathrm{SW}(\mathbf u)-\mathrm{SW}^{\mathrm{cen}}(\mathbf B).
$

\begin{figure}[h]
	\centering
	\begin{tikzpicture}[
		font=\scriptsize,
		>=Latex,
		bus1/.style={
			circle,
			draw=blue,
			fill=blue!5,
			minimum size=13mm,
			align=center
		},
		bus2/.style={
			circle,
			draw=orange!90!black,
			fill=orange!8,
			minimum size=13mm,
			align=center
		},
		bus3/.style={
			circle,
			draw=green!60!black,
			fill=green!5,
			minimum size=13mm,
			align=center
		},
		gen/.style={
			rectangle,
			draw=blue!70!black,
			fill=blue!5,
			rounded corners=1pt,
			align=center,
			inner sep=2pt
		},
		load/.style={
			rectangle,
			draw=gray!70!black,
			fill=gray!10,
			rounded corners=1pt,
			align=center,
			inner sep=2pt
		},
		storage/.style={
			rectangle,
			draw=magenta!70!black,
			fill=magenta!8,
			rounded corners=1pt,
			align=center,
			inner sep=2pt
		}
		]
		
		\node[bus1] (b1) at (0,0) {Bus 1};
		\node[bus2] (b2) at (2.45,0) {Bus 2};
		\node[bus3] (b3) at (4.90,0) {Bus 3};
		
		\draw[thick,->]
		(b1) --
		node[above] {$\overline f_{12}=300$}
		(b2);
		
		\draw[thick,->]
		(b2) --
		node[above] {$\overline f_{23}=200$}
		(b3);
		
		\node[gen,above=5mm of b1] (g1) {
			$C_1(g_{1})=0.05g_{1}^2+20g_{1}$\\
			$0\leq g_1\leq290$
		};
		\draw[->] (g1) -- (b1);
		
		\node[gen,above=5mm of b3] (g3) {
			$C_3(g_{3})=0.20g_{3}^2+40g_{3}$\\
			$0\leq g_3\leq200$
		};
		\draw[->] (g3) -- (b3);
		
		\node[load,below=5mm of b2] (d2) {
			$\mathbf d_2=(80,100)$
		};
		\draw[->] (b2) -- (d2);
		
		\node[load,below=5mm of b3] (d3) {
			$\mathbf d_3=(160,240)$
		};
		\draw[->] (b3) -- (d3);
		
		\node[storage,right=3mm of b3] (s3) {
			Storage\\
			$B_3=120$
		};
		\draw[<->] (b3) -- (s3);
		
\draw[
green!60!black,
dashed,
->,
out=22,
in=158
]
([xshift=1mm,yshift=-0.25mm]b1.north east) to
node[above=-1pt] {200-MW FTR at $t=1$}
([xshift=-1mm,yshift=-0.25mm]b3.north west);
		
	\end{tikzpicture}
	\caption{Illustrative two-period three-bus system. Each period is one hour; all
		power quantities shown are in MW, while storage capacity is in MWh. The shown
		generation-cost functions and transmission-line limits apply in both periods,
		and $\mathbf d_i=(d_i(1),d_i(2))$ gives the inelastic loads at bus $i$.}
	\label{fig:three_bus_system}
\end{figure}

\begin{table}[h]
	\centering
\caption{Three-bus outcomes, rounded to one decimal place}
	\label{tab:three_bus_results}
	\small
	\setlength{\tabcolsep}{2.5pt}
	\begin{tabular}{@{}ccccc@{}}
		\hline
		\textbf{Case} &
		\textbf{$\mathbf u_3$} &
		\textbf{Region} &
		\textbf{$\bm\lambda_3$} &
		\textbf{SW gap (\$)} \\
		\hline
		Centralized &
		$(50,-50)$ &
		$\mathcal C_2$ &
		$(47.2,47.2)$ &
		$0.0$ \\
		
		No FTR &
		$(17.5,-17.5)$ &
		$\mathcal C_1$ &
		$(44.6,53.0)$ &
		$-100.5$ \\
		
		FTR &
		$(111.7,-111.7)$ &
		$\mathcal C_3$ &
		$(68.7,42.3)$ &
		$-731.3$ \\
		\hline
	\end{tabular}
\end{table}
The results show that the centrally dispatched-storage benchmark lies strictly
inside $\mathcal C_2$, where no market-clearing inequality binds. Notably, this coincidence is example-specific;
the absence of binding inequalities does not generally imply maximum social
welfare. Without FTRs, the strategic ESA selects an outcome in $\mathcal C_1$, where
$g_1(2)$ reaches its upper limit, thereby constituting constraint-binding-pattern
manipulation and resulting in a \$100.5 social-welfare loss. The FTR further changes the ESA's preferred binding pattern:
congestion of line $(2,3)$ at $t=1$ creates a favorable FTR payoff, shifting
the equilibrium to the line-congested region $\mathcal C_3$ and increasing
the welfare loss to \$731.3. Thus, even in this simple case, a strategic ESA
can alter which operating limits become binding, highlighting
constraint-binding-pattern manipulation as a mechanism affecting market outcomes
and the role of FTRs in further shifting the ESA toward different binding
patterns, with substantially lower social welfare in this example.

These results also illustrate how the proposed ISO tools can restrict
undesirable binding patterns without prescribing the ESA's exact schedule.
In the FTR case, for example, the ISO can exclude the relative interior of
the line-congested region $\mathcal C_3$ by imposing $u_3(1)\leq60$ or, equivalently,
$\overline B_3\leq60$. A tighter limit, $u_3(1)\leq27.5$, confines the ESA
to $\mathcal C_1$, where it selects $(17.5,-17.5)$. Alternatively, restricting the schedule to
$27.5\leq u_3(1)\leq60$ confines the ESA to the closed region
$\mathcal C_2$, yielding $(27.5,-27.5)$ and reducing the welfare loss to
\$40.5, compared with \$731.3 under the unrestricted FTR outcome. Thus, the proposed ISO tools can substantially improve the market outcome while
preserving the ESA's freedom to choose within the admissible regions.

\begin{figure}[t]
	\centering
	\resizebox{\columnwidth}{!}{%
		\begin{tikzpicture}[
			x=0.055cm,
			y=0.028cm,
			font=\footnotesize,
			>=Latex
			]
			
			\fill[green!25]  (0,-27.5) rectangle (60,0);
			\fill[orange!30] (0,-120) rectangle (60,-27.5);
			\fill[purple!25] (60,-120) rectangle (120,-27.5);
			\fill[yellow!35] (60,-27.5) rectangle (120,0);
			
			\draw[thick] (0,0) -- (120,0) -- (120,-120) -- (0,-120) -- cycle;
			\draw[thick] (60,0) -- (60,-120);
			\draw[thick] (0,-27.5) -- (120,-27.5);
			
			\draw[->,thick]
			(0,0) -- (128,0)
			node[right] {$u_3(1)$};
			
			\draw[->,thick]
			(0,0) -- (0,-132)
			node[below] {$u_3(2)$};
			
			\draw (27.5,2) -- (27.5,-2);
			\node[above] at (27.5,0) {$27.5$};
			
			\draw (60,2) -- (60,-2);
			\node[above] at (60,0) {$60$};
			
			\draw (120,2) -- (120,-2);
			\node[above] at (120,0) {$120$};
			
			\draw (-2,-27.5) -- (2,-27.5);
			\node[left] at (0,-27.5) {$-27.5$};
			
			\draw (-2,-60) -- (2,-60);
			\node[left] at (0,-60) {$-60$};
			
			\draw (-2,-120) -- (2,-120);
			\node[left] at (0,-120) {$-120$};
			
			\node[align=center] at (36,-12) {
				$\mathcal C_1$\\[-0.2mm]
				$g_1(2)=\overline g_1$
			};
			
			\node[align=center] at (30,-62) {
				$\mathcal C_2$\\[-0.1mm]
				No binding inequalities
			};
			
			\node[align=center] at (90,-62) {
				$\mathcal C_3$\\[-0.2mm]
				$f_{23}(1)=\overline f_{23}$
			};
			
			\node[align=center] at (90,-12) {
				$\mathcal C_4$\\[-0.14mm]
				$g_1(2)=\overline g_1$,\,
				$f_{23}(1)=\overline f_{23}$
			};
			
			\draw[
			red!80!black,
			very thick
			]
			(0,0) -- (120,-120)
			node[
			pos=0.82,
			sloped,
			below=1pt,
			red!80!black
			]
			{$\mathcal U(\mathbf B)$};
			
			\node[
			green!50!black,
			font=\Large
			]
			at (50,-50)
			{$\star$};
			
			\node[
			blue!70!black,
			font=\Large
			]
			at (17.5,-17.5)
			{$\star$};
			
			\node[
			black,
			font=\Large
			]
			at (111.7,-111.7)
			{$\star$};
			
		\end{tikzpicture}
	}
	\caption{Relevant portion of $\mathcal U_{\mathrm{mc}}$ partitioned into critical regions
		$\mathcal C_1$--$\mathcal C_4$. The green, blue, and black stars mark the
		centrally dispatched-storage, no-FTR, and FTR outcomes, respectively.}
	\label{fig:three_bus_regions}
\end{figure}

\subsection{Scalability of the Critical-Region Decomposition Method}

We evaluate the scalability of the proposed critical-region decomposition
method for obtaining the equilibrium of the ESA--ISO Stackelberg game on larger
test systems. Specifically, we consider the IEEE 5-, 14-, 30-, and 57-bus
systems over a 24-hour horizon. The original nodal loads are scaled using a
common daily profile to construct a 24-hour demand profile for each system, and
storage is installed at every bus, with total energy capacity corresponding to
$20\%$ of system peak demand over one hour. This penetration level is
motivated by recent ERCOT storage deployment~\cite{ERCOT2025MembershipMeeting}.

All experiments are implemented in MATLAB using Gurobi and conducted on a
2024 MacBook Air with an Apple M3 processor and 24~GB of memory. The solution
times reported in Table~\ref{tab:scalability_comparison} indicate that the
proposed method remains computationally practical for the benchmark systems
considered.

\begin{table}[h]
	\centering
\caption{Computation times on IEEE test systems.}
	\label{tab:scalability_comparison}
	\begin{tabular}{c c c c c}
		\hline
		\textbf{IEEE system} & 5-bus & 14-bus & 30-bus & 57-bus \\
		\textbf{Time (s)}    & 4.53  & 6.92   & 11.92  & 48.00  \\
		\hline
	\end{tabular}
\end{table}

\section{Concluding Remarks}
\label{sec:conclusion}

This paper studies the strategic market participation of a monopolistic ESA
coordinating geographically distributed storage units. By treating the ESA
schedule as a parameter of the ISO market-clearing problem, we decompose the
schedule space into critical regions associated with distinct
constraint-binding patterns. Within each region, we establish an affine mapping
from the ESA schedule to the resulting dispatch and LMPs, allowing the ESA--ISO
Stackelberg equilibrium to be recovered from a collection of convex pattern-specific subproblems. We further show that the ESA can strategically induce profit-favorable
binding patterns and that FTR positions can alter these incentives and the
resulting pattern selection. In particular, FTR positions can eliminate the
welfare guarantee that strategic storage without FTRs provides relative to the
no-storage benchmark. Based on these results, we develop ISO-side tools that
restrict admissible ESA schedules or storage capacities to limit undesirable
binding patterns. Future work may consider competition among ESAs,
uncertainty, and less conservative treatments of nonconvex ISO-approved sets.
\bibliographystyle{IEEEtran}
\bibliography{bib}
\appendices

\section{Constructing ESA-Reachable Critical Regions}
\label{app:reachable_region_exploration}

The critical-region decomposition in Sections~\ref{parametric} and
\ref{sec:esa_binding_pattern_decomposition} reformulates the ESA problem as a
collection of regional convex quadratic programs. However, constructing and
solving~\eqref{eq:regional_esa_qp} for every possible constraint-binding
pattern may be computationally demanding because the number of such patterns
grows combinatorially.

\subsection{Combinatorial Growth of Constraint-Binding Patterns}
\label{app:critical_region_complexity}

Let $N_{\mathrm g}$ and $N_{\mathrm d}$ denote, respectively, the numbers of
dispatchable generators and elastic demands with nontrivial operating ranges.
Fixed renewable injections and inelastic demands are excluded because they do
not generate alternative binding statuses.

At each $t\in\mathcal T$, every dispatchable generator and elastic demand can
be at its lower limit, in the interior, or at its upper limit. Similarly, each
transmission line can be congested in either direction or uncongested. Hence,
the maximum number of single-period constraint-binding patterns is
\begin{equation}
	R_t^{\max}
	:=
	3^{N_{\mathrm g}+N_{\mathrm d}+|\mathcal L|}.\nonumber
\end{equation}
A full-horizon pattern selects one such pattern for every
$t\in\mathcal T$. Therefore, by the Cartesian-product structure, the maximum
number of full-horizon constraint-binding patterns, denoted by $R^{\max}$, is
\begin{equation}
	R^{\max}
	=
	\prod_{t\in\mathcal T}R_t^{\max}
	=
	3^{|\mathcal T|
		\left(
		N_{\mathrm g}
		+
		N_{\mathrm d}
		+
		|\mathcal L|
		\right)}.\nonumber
\end{equation}

Let $\mathcal Q_{\mathrm{mc}}(t)$ denote the index set of the lower and upper
market-clearing inequalities associated with dispatchable generators,
elastic demands, and transmission lines at time $t$. Its cardinality is
\begin{equation}
	|\mathcal Q_{\mathrm{mc}}(t)|
	=
	2
	\left(
	N_{\mathrm g}
	+
	N_{\mathrm d}
	+
	|\mathcal L|
	\right).\nonumber
\end{equation}

The count $R^{\max}$ includes all status combinations induced by these
lower--upper inequality pairs before market-clearing feasibility and
optimality are imposed. Thus, not every pattern defines a nonempty critical
region. Moreover, because the storage constraints in
$\mathcal U(\mathbf B)$ couple ESA actions across time, not every combination
of nonempty single-period regions corresponds to an ESA-feasible
full-horizon schedule.

To obtain the optimal solution of the ESA decision-making problem
in~\eqref{best:ESA:nonrestrict} without explicitly constructing these
exponentially many full-horizon combinations, we design
Algorithm~\ref{alg:on_demand_reachable_regions}. By exploiting the
time-separable structure of the market-clearing problem and its associated
parametric mappings, the algorithm first constructs and validates the
critical regions independently for each time interval. It then jointly
selects an ESA-feasible sequence of hourly regions and the corresponding
storage actions through a convex MIQP that enforces the intertemporal storage
constraints.

\subsection{Construction of Hourly Critical Regions}
\label{app:on_demand_region_exploration}

Because the study intervals have unit duration and no separate charging or
discharging power limits are imposed, every hourly ESA action associated with
a schedule in $\mathcal U(\mathbf B)$ lies within
\begin{equation}
	\mathcal U_t^{\mathrm{box}}
	:=
	\left\{
	\mathbf u(t)\in\mathbb R^{|\mathcal N|}:
	-\mathbf B
	\leq
	\mathbf u(t)
	\leq
	\mathbf B
	\right\},\nonumber
\end{equation}
where the inequalities are interpreted componentwise. This box is an outer
approximation of the hourly action space. The exact intertemporal storage
constraints remain represented by $\mathbf u\in\mathcal U(\mathbf B)$ and
are imposed in the intertemporal region-selection problem.

For each $t\in\mathcal T$, a safe screening procedure is first applied over
$\mathcal U_t^{\mathrm{box}}$ to remove market-clearing inequalities that
cannot become binding within the hourly action range of interest. Let
\begin{equation}
	\mathcal Q_{\mathrm{scr}}(t)
	\subseteq
	\mathcal Q_{\mathrm{mc}}(t)\nonumber
\end{equation}
denote the set of inequalities retained after screening. The screening
procedure is safe in the sense that no removed inequality can be active at an
optimal hourly market-clearing solution for any
$\mathbf u(t)\in\mathcal U_t^{\mathrm{box}}$. Screening is used only to reduce
the active-set enumeration; all original market-clearing inequalities remain
enforced when each candidate critical region is constructed and validated.

Although LICQ is assumed to hold over $\mathcal U_{\mathrm{mc}}$ in the theoretical
development, the computational construction explicitly verifies LICQ for
each candidate active set and retains only LICQ-valid regions. This ensures
that the affine parametric mappings are constructed only for valid active
sets in the realized market instance.

To this end, let $n_x$ denote the dimension of the hourly lower-level decision
vector after excluding fixed renewable generation and inelastic demand, and let
$\mathbf A_{\mathrm{eq}}(t)$ denote the corresponding equality-constraint
coefficient matrix. The number of degrees of freedom remaining after imposing
the equality constraints is
\begin{equation}
	n_t^{\mathrm{dof}}
	:=
	n_x
	-
	\operatorname{rank}
	\left(
	\mathbf A_{\mathrm{eq}}(t)
	\right).\nonumber
\end{equation}
Under LICQ, no active set can contain more than
$n_t^{\mathrm{dof}}$ linearly independent inequality constraints. The
algorithm therefore enumerates physically admissible candidate active-index
sets satisfying
\begin{equation}
	\mathcal A_s(t)
	\subseteq
	\mathcal Q_{\mathrm{scr}}(t),
	\qquad
	\left|\mathcal A_s(t)\right|
	\leq
	n_t^{\mathrm{dof}},\nonumber
\end{equation}
where each index in $\mathcal A_s(t)$ identifies a retained inequality
assumed to be binding in candidate hourly region $s$ at time $t$. Candidates
containing both indices in the lower--upper bound pair of the same nonfixed
quantity are discarded.

To state the LICQ test, let $\mathbf a_j(t)^\top$ denote the coefficient row
of inequality $j\in\mathcal Q_{\mathrm{scr}}(t)$, and define
\begin{equation}
	\mathbf A_{\mathcal A_s(t)}(t)
	:=
	\begin{bmatrix}
		\mathbf a_j(t)^\top
	\end{bmatrix}_{j\in\mathcal A_s(t)},\nonumber
\end{equation}
as the matrix formed by stacking the coefficient rows of the inequalities
indexed by $\mathcal A_s(t)$. The candidate active set satisfies LICQ if
\begin{equation}
	\operatorname{rank}
	\left(
	\begin{bmatrix}
		\mathbf A_{\mathrm{eq}}(t)\\
		\mathbf A_{\mathcal A_s(t)}(t)
	\end{bmatrix}
	\right)
	=
	\operatorname{rank}
	\left(
	\mathbf A_{\mathrm{eq}}(t)
	\right)
	+
	\left|\mathcal A_s(t)\right|.\nonumber
\end{equation}

For each LICQ-valid candidate active set, the corresponding fixed-active-set
KKT system is constructed. Applying the argument of
Theorem~\ref{thm:affine_mapping} to the hourly market-clearing problem yields
the affine price mapping
\begin{equation}
	\bm{\lambda}^{\star}(t;\mathbf u)
	=
	\mathbf K_s(t)\mathbf u(t)
	+
	\mathbf k_s(t)\nonumber
\end{equation}
and the associated hourly critical region
\begin{equation}
	\mathcal C_{s,t}
	=
	\left\{
	\mathbf u(t)\in\mathbb R^{|\mathcal N|}:
	\mathbf F_{s,t}\mathbf u(t)
	\leq
	\mathbf h_{s,t}
	\right\}.\nonumber
\end{equation}

A candidate is retained only if
$\mathcal C_{s,t}\cap\mathcal U_t^{\mathrm{box}}$
admits a strictly positive interior margin relative to the hourly
storage-action space. Let $\mathcal R_t^{\mathrm h}$ denote the resulting
index set of validated hourly critical regions.

Using Lemma~\ref{lem:regional_esa_objective}, the contribution of
$s\in\mathcal R_t^{\mathrm h}$ to the ESA objective is
\begin{equation}
	J_{s,t}\!\left(\mathbf u(t)\right)
	=
	\mathbf u(t)^{\!\top}
	\mathbf K_s(t)
	\mathbf u(t)
	+
	\bm{\psi}_s(t)^{\!\top}
	\mathbf u(t)
	+
	\zeta_s(t),\nonumber
\end{equation}
where $\bm{\psi}_s(t)$ and $\zeta_s(t)$ are defined analogously to
\eqref{eq:regional_linear_term} and
\eqref{eq:regional_constant_term}, respectively, and
$\mathbf K_s(t)\succeq\mathbf 0$.

\subsection{Intertemporal Selection of Hourly Critical Regions}
\label{app:intertemporal_region_selection}

After constructing $\mathcal R_t^{\mathrm h}$ for every $t\in\mathcal T$,
the algorithm selects one region at each time interval. The possible
full-horizon sequences belong to
\begin{equation}
	\prod_{t\in\mathcal T}\mathcal R_t^{\mathrm h},\nonumber
\end{equation}
but this Cartesian product is not enumerated explicitly.

Let $z_s(t)\in\{0,1\}$ indicate whether region
$s\in\mathcal R_t^{\mathrm h}$ is selected at time $t$. Exactly one region is
selected at each interval:
\begin{equation}
	\sum_{s\in\mathcal R_t^{\mathrm h}}
	z_s(t)
	=
	1,
	\qquad
	t\in\mathcal T.
	\label{eq:select_one_hourly_region}
\end{equation}

For each $s\in\mathcal R_t^{\mathrm h}$, introduce an auxiliary
region-specific action
$\mathbf v_s(t)\in\mathbb R^{|\mathcal N|}$ and define the actual hourly ESA
action as
\begin{equation}
	\mathbf u(t)
	=
	\sum_{s\in\mathcal R_t^{\mathrm h}}
	\mathbf v_s(t),
	\qquad
	t\in\mathcal T.
	\label{eq:hourly_action_disaggregation}
\end{equation}
The selected critical-region constraints are imposed through
\begin{align}
	\mathbf F_{s,t}\mathbf v_s(t)
	&\leq
	\mathbf h_{s,t}z_s(t),
	\label{eq:scaled_hourly_region}\\
	-\mathbf Bz_s(t)
	\leq
	\mathbf v_s(t)
	&\leq
	\mathbf Bz_s(t),
	\label{eq:scaled_hourly_action}
\end{align}
for every $s\in\mathcal R_t^{\mathrm h}$ and $t\in\mathcal T$.

The resulting intertemporal region-selection problem is formulated as
\begin{equation}
	\begin{aligned}
		\min_{\substack{
				\mathbf u,\,
				\mathbf v_s(t),\,
				z_s(t)\\
				s\in\mathcal R_t^{\mathrm h},\,
				t\in\mathcal T}}
		\quad
		&
		\sum_{t\in\mathcal T}
		\sum_{s\in\mathcal R_t^{\mathrm h}}
		\left[
		\mathbf v_s(t)^{\!\top}
		\mathbf K_s(t)
		\mathbf v_s(t)
		+
		\bm{\psi}_s(t)^{\!\top}
		\mathbf v_s(t)
		\right]
		\\[-1mm]
		&
		+
		\sum_{t\in\mathcal T}
		\sum_{s\in\mathcal R_t^{\mathrm h}}
		\zeta_s(t)z_s(t)
		\\
		\mathrm{s.t.}\quad
		&
		\eqref{eq:select_one_hourly_region},
		\eqref{eq:hourly_action_disaggregation},
		\eqref{eq:scaled_hourly_region},
		\eqref{eq:scaled_hourly_action},
		\\
		&
		\mathbf u\in\mathcal U(\mathbf B),
		\\
		&
		z_s(t)\in\{0,1\}.
	\end{aligned}
	\label{eq:intertemporal_region_selection}
\end{equation}

Because $\mathbf K_s(t)\succeq\mathbf0$ for all
$s\in\mathcal R_t^{\mathrm h}$ and $t\in\mathcal T$, the continuous
quadratic objective in~\eqref{eq:intertemporal_region_selection} is convex,
and the resulting formulation is a convex MIQP. The binary variables
implicitly select an element of the Cartesian product of the hourly region
sets without explicitly enumerating all full-horizon combinations. At the
same time, $\mathbf u\in\mathcal U(\mathbf B)$ enforces the exact
intertemporal storage constraints and excludes combinations of hourly regions
that do not admit a feasible ESA schedule.

Algorithm~\ref{alg:on_demand_reachable_regions} summarizes the complete
procedure.

\begin{algorithm}[t]
	\caption{Hourly Critical-Region Construction and Intertemporal ESA
		Optimization}
	\label{alg:on_demand_reachable_regions}
	
	\KwIn{Storage-capacity vector $\mathbf B$ and the market-clearing
		problem~\eqref{DCOPF}.}
	
	\tcp{Stage 1: Construct the hourly critical-region sets}
	\For{each $t\in\mathcal T$}{
		Construct $\mathcal U_t^{\mathrm{box}}$\;
		
		Apply safe inequality screening and obtain
		$\mathcal Q_{\mathrm{scr}}(t)$\;
		
		Compute $n_t^{\mathrm{dof}}$ and set
		$\mathcal R_t^{\mathrm h}\leftarrow\varnothing$\;
		
		Enumerate all physically admissible active-index sets
		$\mathcal A_s(t)\subseteq\mathcal Q_{\mathrm{scr}}(t)$
		satisfying
		$\left|\mathcal A_s(t)\right|\leq n_t^{\mathrm{dof}}$\;
		
		\For{each candidate active-index set $\mathcal A_s(t)$}{
			Construct $\mathbf A_{\mathcal A_s(t)}(t)$\;
			
			\If{$\mathcal A_s(t)$ satisfies LICQ}{
				Construct the fixed-active-set KKT system and obtain
				$\mathbf F_{s,t}$,
				$\mathbf h_{s,t}$,
				$\mathbf K_s(t)$,
				$\mathbf k_s(t)$,
				$\bm{\psi}_s(t)$,
				and $\zeta_s(t)$\;
				
			\If{$\mathcal C_{s,t}
				\cap\mathcal U_t^{\mathrm{box}}$
				admits a strictly positive interior margin relative to the hourly
				storage-action space}{
				Add $s$ to $\mathcal R_t^{\mathrm h}$\;
			}
			}
		}
	}
	
	\tcp{Stage 2: Jointly select the hourly regions and ESA actions}
	Construct and solve
	\eqref{eq:intertemporal_region_selection}\;
	
	\KwOut{A globally optimal ESA schedule and the selected hourly critical
		regions.}
\end{algorithm}

\begin{proposition}[Exactness of the intertemporal reformulation]
	\label{prop:on_demand_exploration_complete}
	The intertemporal region-selection problem
	\eqref{eq:intertemporal_region_selection} is equivalent to
	\eqref{best:ESA:nonrestrict} and returns a globally optimal ESA schedule.
\end{proposition}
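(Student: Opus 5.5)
The approach is to prove equivalence by showing that the feasible sets and objective values of~\eqref{eq:intertemporal_region_selection} and~\eqref{best:ESA:nonrestrict} correspond, and then to invoke Theorem~\ref{thm:critical_region_decomposition}. The first step uses time separability. Since~\eqref{DCOPF} has no intertemporal coupling (all constraints and the objective are indexed by $t$ separately, and $\mathbf u$ enters only through $\mathbf u(t)$), its KKT system decouples across periods. The full-horizon active set is therefore $\mathcal A(\mathbf u)=\bigcup_t \mathcal A_t(\mathbf u(t))$, and every full-horizon critical region is a Cartesian product $\prod_t \mathcal C_{s_t,t}$ of hourly critical regions. Conversely, every such product with nonempty interior is a full-horizon region. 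Theorem~\ref{thm:affine_mapping} applied period by period gives the affine maps $\mathbf K_s(t),\mathbf k_s(t)$. Lemma~\ref{lem:regional_esa_objective} then gives $J_r(\mathbf u)=\sum_t J_{s_t,t}(\mathbf u(t))$ on $\prod_t\mathcal C_{s_t,t}$.

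The second step shows that restricting to the validated hourly sets $\mathcal R_t^{\mathrm h}$ loses nothing. For any $\mathbf u\in\mathcal U(\mathbf B)$ we have $\mathbf u(t)\in\mathcal U_t^{\mathrm{box}}$, because $0\le \mathbf L\mathbf u_i\le B_i$ implies $|u_i(t)|\le B_i$. Safe screening removes only inequalities that are never active on the box. Under Assumption~\ref{assum2}, every realized active set satisfies LICQ and has cardinality at most $n_t^{\mathrm{dof}}$. Hence each $\mathbf u(t)$ lies in some enumerated, LICQ-valid hourly region.

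The main obstacle is the interior-margin filter, since it discards regions that meet the box only in lower-dimensional sets. I would handle it by the covering argument from Definition~\ref{def:critical_region}: the full-dimensional regions are closures whose union covers $\mathcal U_{\mathrm{mc}}\cap\mathcal U_t^{\mathrm{box}}$, so every point lies in the closure of some retained region. Corollary~\ref{cor:esa_objective_continuity} guarantees that the objective agrees on shared boundaries. Any optimal $\mathbf u^\star$ of~\eqref{best:ESA:nonrestrict} is therefore represented, with the same value, by some choice $s_t\in\mathcal R_t^{\mathrm h}$ for each $t$.

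The final step maps solutions between the two problems. Given such $\mathbf u$ and $(s_t)_t$, set $z_{s_t}(t)=1$, $\mathbf v_{s_t}(t)=\mathbf u(t)$, and all other $z_s(t)=0$, $\mathbf v_s(t)=\mathbf 0$. Constraints~\eqref{eq:scaled_hourly_region}--\eqref{eq:scaled_hourly_action} hold, and the objective of~\eqref{eq:intertemporal_region_selection} equals $J(\mathbf u)$. Conversely, given a feasible point of~\eqref{eq:intertemporal_region_selection}, \eqref{eq:select_one_hourly_region} and integrality select a unique $s_t$ in each period. For unselected $s$, \eqref{eq:scaled_hourly_action} forces $\mathbf v_s(t)=\mathbf 0$, so \eqref{eq:hourly_action_disaggregation} gives $\mathbf u(t)=\mathbf v_{s_t}(t)\in\mathcal C_{s_t,t}$. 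The objective thus equals $\sum_t J_{s_t,t}(\mathbf u(t))=J_r(\mathbf u)$ with $\mathbf u\in\mathcal U(\mathbf B)\cap\mathcal C_r$, where $\mathcal C_r=\prod_t\mathcal C_{s_t,t}$. The two optimal values therefore coincide and optimal schedules correspond. Theorem~\ref{thm:critical_region_decomposition} then yields global optimality for~\eqref{bilevel}.
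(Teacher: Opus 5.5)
Your proposal is correct and follows essentially the same route as the paper's proof. The paper first shows that every admissible schedule is covered by the validated hourly regions, using box containment, safe screening, LICQ-bounded enumeration, and the non-strict boundary convention for the interior-margin filter; it then maps feasible points in both directions with equal objective values via the selection and zero-forcing constraints, and your explicit time-separability step identifying full-horizon regions with Cartesian products of hourly regions, so that $J_r(\mathbf u)=\sum_t J_{s_t,t}(\mathbf u(t))$, simply makes precise what the paper uses implicitly when it says the objective ``recovers the corresponding hourly regional ESA objective.''
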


\begin{IEEEproof}
	We first show that every admissible ESA schedule in the original problem can
	be represented by the hourly region libraries. Consider any
	$\mathbf u\in\mathcal U(\mathbf B)\cap\mathcal U_{\mathrm{mc}}$. For each
	$t\in\mathcal T$, its hourly action satisfies
	$\mathbf u(t)\in\mathcal U_t^{\mathrm{box}}$. By
	Assumption~\ref{assum2}, the market-clearing optimum induced by this action
	has a LICQ-valid active set. Safe screening and exhaustive active-set
	enumeration retain all full-dimensional LICQ-valid hourly critical regions
	whose intersections with $\mathcal U_t^{\mathrm{box}}$ have a strictly
	positive interior margin relative to the hourly storage-action space. If
	$\mathbf u(t)$ lies on a shared boundary, the non-strict boundary convention
	assigns it to at least one adjacent full-dimensional region. Consequently, for
	every admissible schedule and every $t\in\mathcal T$, there exists a region
	in $\mathcal R_t^{\mathrm h}$ containing $\mathbf u(t)$. Thus, every admissible schedule of the original ESA problem admits at least
	one valid hourly-region selection in each period.
	
	We next show that the reformulation introduces no artificial ESA schedules.
	Consider any feasible solution of
	\eqref{eq:intertemporal_region_selection}. Constraint
	\eqref{eq:select_one_hourly_region} selects one region at each time, while
	\eqref{eq:scaled_hourly_action} sets the auxiliary actions of all unselected
	regions to zero. Hence, by~\eqref{eq:hourly_action_disaggregation}, the
	auxiliary action of the selected region equals $\mathbf u(t)$ and satisfies
	the corresponding regional constraints. The constraint
	$\mathbf u\in\mathcal U(\mathbf B)$ enforces the exact intertemporal storage
	constraints, and the objective recovers the corresponding hourly regional ESA
	objective. Thus, every feasible solution of
	\eqref{eq:intertemporal_region_selection} corresponds to a feasible solution
	of~\eqref{best:ESA:nonrestrict} with the same objective value.
	
	Finally, we show that the reformulation excludes no feasible ESA schedule.
	Consider any feasible solution of~\eqref{best:ESA:nonrestrict}. By the coverage
	property established above, at each $t$ there exists a region containing
	$\mathbf u(t)$. Select one such region, assign $\mathbf u(t)$ to its
	corresponding $\mathbf v_s(t)$, and set all other auxiliary actions to zero.
	This produces a feasible solution of
	\eqref{eq:intertemporal_region_selection} with the same objective value.
	Therefore, the two formulations have the same feasible ESA schedules,
	objective values, and globally optimal solutions.
\end{IEEEproof}

\section{Proofs for Section~\ref{parametric}}
\subsection{Proof of Lemma~\ref{lem:critical_region_polyhedron}}
\label{app:proof_critical_region_polyhedron}
\begin{IEEEproof}
Slightly overloading the notation, let \(\mathbf x\) stack
\(\{\mathbf g(t),\mathbf d(t),\mathbf p(t)\}_{t\in\mathcal T}\), and recall
that \(\operatorname{vec}(\mathbf u)\), defined immediately before
Lemma~\ref{lem:critical_region_polyhedron}, vectorizes the ESA schedule over
all buses and periods. The market-clearing problem can be written as the
	following convex parametric quadratic program
\begin{subequations}\label{compact:dcopf}
	\begin{align}
		\min_{\mathbf x}\quad&
		\tfrac12\mathbf x^\top\mathbf Q\mathbf x
		+\mathbf c^\top\mathbf x\\
		\text{s.t.}\quad&
		\bm\lambda:
		\mathbf A\mathbf x
		=
		\mathbf b+\mathbf D\operatorname{vec}(\mathbf u),
		\label{eq:equality_constraint_1}\\
		&
		\bm\nu:
		\mathbf R\mathbf x=\mathbf 0,
		\label{eq:equality_constraint_2}\\
		&
		\bm\mu:
		\mathbf G\mathbf x\leq\mathbf q.
		\label{eq:inequality_constraint}
	\end{align}
\end{subequations}
	Here, equality constraint \eqref{eq:equality_constraint_1} stacks the
	power-balance constraints \eqref{DCOPF:net_power:full}, with
	\(\mathbf D\operatorname{vec}(\mathbf u)\) capturing the affine dependence on
	the ESA schedule. Equality constraint \eqref{eq:equality_constraint_2} stacks the
	lossless system-balance constraints~\eqref{DCOPF:lossless}. The inequalities in
	\eqref{eq:inequality_constraint} stack the generator bounds~\eqref{DCOPF:generator}, demand
	bounds~\eqref{DCOPF:elastic_load}, and transmission line limits
	\eqref{DCOPF:line}.
	The associated dual variables are, respectively, the LMP vector
	\(\bm\lambda\), the multiplier \(\bm\nu\) of the lossless system-balance
	equalities, and the nonnegative multiplier \(\bm\mu\) of the operating-limit
	inequalities.
	
Fix a critical region $\mathcal C_r$, and let $\mathcal A_r$ denote the
market-clearing constraint-binding pattern common to its relative interior,
where $\operatorname{relint}(\mathcal C_r)$ denotes the interior of
$\mathcal C_r$ relative to its affine hull. Thus,
\[
\mathcal A(\mathbf u)=\mathcal A_r,
\qquad
\mathbf u\in\operatorname{relint}(\mathcal C_r).
\]
Let $\mathcal I_r$ denote the complementary inactive rows of $\mathbf G$. The KKT conditions for this pattern are
	\begin{align*}
		\mathbf Q\mathbf x+\mathbf c
		+\mathbf A^\top\bm\lambda
		+\mathbf R^\top\bm\nu
		+\mathbf G_{\mathcal A_r}^\top\bm\mu_{\mathcal A_r}
		&=\mathbf0,\\
		\mathbf A\mathbf x
		&=\mathbf b+\mathbf D\operatorname{vec}(\mathbf u),\\
		\mathbf R\mathbf x&=\mathbf0,\\
		\mathbf G_{\mathcal A_r}\mathbf x
		&=\mathbf q_{\mathcal A_r},\\
		\mathbf G_{\mathcal I_r}\mathbf x
		&\leq\mathbf q_{\mathcal I_r},\\
		\bm\mu_{\mathcal A_r}&\geq\mathbf0.
	\end{align*}
	These conditions are linear equalities and inequalities in
	\((\operatorname{vec}(\mathbf u),\mathbf x,\bm\lambda,\bm\nu,
	\bm\mu_{\mathcal A_r})\). The feasible set is
	polyhedral, so the first-order optimality condition and the normal-cone
	formula for polyhedra yield multipliers satisfying the KKT system at every
	optimizer; conversely, convexity makes the KKT conditions sufficient
	\cite[Secs.~23 and 28]{rockafellar1970convex}.  Thus, the displayed
	conditions are necessary and sufficient for optimality under the fixed
	binding pattern associated with region $r$.  
	
Consequently, the set of all tuples satisfying these conditions is a
polyhedron. Its projection onto the coordinates corresponding to
\(\operatorname{vec}(\mathbf u)\) is also a polyhedron
\cite[Thm.~19.3]{rockafellar1970convex}. This projection is the closed
critical region generated by the binding pattern $\mathcal A_r$ on its
relative interior; under the non-strict boundary convention used in the
paper, it is $\mathcal C_r$. Every polyhedron admits a finite half-space
representation \cite[Sec.~19]{rockafellar1970convex}; hence there exist
\(\mathbf F_r\) and \(\mathbf h_r\) such that
\[
\mathcal C_r
=
\left\{
\mathbf u:
\mathbf F_r\operatorname{vec}(\mathbf u)
\leq\mathbf h_r
\right\}.
\]
	The non-strict inequalities include ESA schedules on boundaries shared by
	adjacent critical regions.
\end{IEEEproof}
\subsection{Proof of Theorem~\ref{thm:affine_mapping}}
\label{app:proof_affine_mapping}

\begin{IEEEproof}
	We prove the three claims in order. Fix \(t\in\mathcal T\) and a critical
	region \(\mathcal C_r\). All quantities below correspond to this fixed
	pair \((t,r)\), so time arguments are suppressed where no ambiguity arises.
	
	\emph{Part 1: Regional affine representation.}
	Let \(\mathcal G_r^0\), \(\mathcal G_r^+\), and \(\mathcal G_r^-\) denote
	the generator indices that are, respectively, strictly interior, at their
	upper bounds, and at their lower bounds throughout the relative interior of
	\(\mathcal C_r\). Define \(\mathcal D_r^0\), \(\mathcal D_r^+\), and
	\(\mathcal D_r^-\) analogously for demand. When coincident lower
	and upper bounds represent a fixed quantity, we treat that quantity as an
	equality, equivalently deleting one of the two redundant bound inequalities,
	before applying LICQ.
	
	Stationarity with respect to generation and demand gives 
	\begin{equation}
		\mathbf g^\star
		=\rev{\mathbf E_r^{\mathrm g}}\bm\lambda^\star+\rev{\bm\eta_r^{\mathrm g}},
		\qquad
		\mathbf d^\star
		=-\rev{\mathbf E_r^{\mathrm d}}\bm\lambda^\star+\rev{\bm\eta_r^{\mathrm d}},
		\label{eq:appendix_gd_affine}
	\end{equation}
where these matrices are diagonal and, together with their affine terms, are
defined componentwise by
	\begin{align}
		[\rev{\mathbf E_r^{\mathrm g}}]_{ii}
		&=
		\begin{cases}
			1/\alpha_i^{\mathrm g},&i\in\mathcal G_r^0,\\
			0,&i\in\mathcal G_r^+\cup\mathcal G_r^-,
		\end{cases}
		\label{eq:appendix_Dg}\\
		[\rev{\bm\eta_r^{\mathrm g}}]_i
		&=
		\begin{cases}
			-\beta_i^{\mathrm g}/\alpha_i^{\mathrm g},&i\in\mathcal G_r^0,\\
			\overline g_i,&i\in\mathcal G_r^+,\\
			\underline g_i,&i\in\mathcal G_r^-,
		\end{cases}
		\label{eq:appendix_cg}\\
		[\rev{\mathbf E_r^{\mathrm d}}]_{ii}
		&=
		\begin{cases}
			1/\alpha_i^{\mathrm d},&i\in\mathcal D_r^0,\\
			0,&i\in\mathcal D_r^+\cup\mathcal D_r^-,
		\end{cases}
		\label{eq:appendix_Dd}\\
		[\rev{\bm\eta_r^{\mathrm d}}]_i
		&=
		\begin{cases}
			\beta_i^{\mathrm d}/\alpha_i^{\mathrm d},&i\in\mathcal D_r^0,\\
			\overline d_i,&i\in\mathcal D_r^+,\\
			\underline d_i,&i\in\mathcal D_r^-.
		\end{cases}
		\label{eq:appendix_cd}
	\end{align}
	If a lower and upper bound coincide, the fixed coordinate may be assigned
	to either bound set, without changing these formulas. The expressions in
	\eqref{eq:appendix_Dg}--\eqref{eq:appendix_cd} follow from the
	interior stationarity equations and from fixing every bound-active variable
	at the corresponding limit.
	
	Define
	\begin{equation}
		\rev{\mathbf E_r^{\mathrm e}}:=
		\rev{\mathbf E_r^{\mathrm g}}+\rev{\mathbf E_r^{\mathrm d}},
		\qquad
		\rev{\bm\eta_r^{\mathrm e}}:=
		\rev{\bm\eta_r^{\mathrm g}}-\rev{\bm\eta_r^{\mathrm d}}.\nonumber
	\end{equation}
	The nodal power-balance equations imply
	\begin{equation}
		\mathbf p^\star
		=\rev{\mathbf E_r^{\mathrm e}}\bm\lambda^\star
		+\rev{\bm\eta_r^{\mathrm e}}-\mathbf u.
		\label{eq:appendix_net_injection_affine}
	\end{equation}
	
	Define the active upper- and lower-limit index sets by
	\[
	\mathcal L_r^+
	:=\{\ell:\mathbf H_{\ell:}\mathbf p^\star=\overline f_\ell\},
	\qquad
	\mathcal L_r^-
	:=\{\ell:\mathbf H_{\ell:}\mathbf p^\star=-\overline f_\ell\}. 
	\]
	\rev{Here \(\mathbf H_{\ell:}\) denotes the \(\ell\)th row of \(\mathbf H\).}
	\rev{Let
		\(\mathbf P:=\mathbf I-|\mathcal N|^{-1}\mathbf1\mathbf1^\top\) and
		\(\widehat{\mathbf H}:=\mathbf H\mathbf P\). Since
		\(\mathbf1^\top\mathbf p^\star=0\), replacing \(\mathbf H\) by
		\(\widehat{\mathbf H}\) leaves line flows unchanged on feasible injections,
		while making the subsequent formula compatible with eliminating the
		system-balance degree of freedom. Moreover, each row of
		\(\widehat{\mathbf H}\) differs from the corresponding row of \(\mathbf H\)
		by a multiple of \(\mathbf1^\top\), so this replacement only shifts the
		system-balance multiplier in the stationarity equations.}	
	\rev{Construct the active-line matrix by stacking the rows indexed by
		\(\mathcal L_r^+\) and the negated rows indexed by \(\mathcal L_r^-\):}
	\begin{equation}
		\rev{\mathbf W_r}:=
		\begin{bmatrix}
			\widehat{\mathbf H}_{\mathcal L_r^+:}\\
			-\widehat{\mathbf H}_{\mathcal L_r^-:}
		\end{bmatrix},
		\qquad
		\rev{\bm\omega_r}:=
		\begin{bmatrix}
			\overline{\mathbf f}_{\mathcal L_r^+}\\
			\overline{\mathbf f}_{\mathcal L_r^-}
		\end{bmatrix}.\nonumber
	\end{equation}

	Thus the active line limits are precisely
	\(\rev{\mathbf W_r}\mathbf p^\star=\rev{\bm\omega_r}\). Let \(\bm\mu_r^\star\)
	stack their multipliers in the same row order. Stationarity with respect to
	\(\mathbf p\) yields
	\begin{equation}
		\bm\lambda^\star
		=-\nu^\star\mathbf1-\rev{\mathbf W_r}^\top\bm\mu_r^\star.
		\label{eq:appendix_lambda_duals}
	\end{equation}
	
	Define
	\begin{align}
		\rev{\mathbf N_r}:=
		\begin{bmatrix}
			\mathbf1^\top\\
			\rev{\mathbf W_r}
		\end{bmatrix},
		\qquad
		\rev{\bm\rho_r}:=
		\begin{bmatrix}
			0\\
			\rev{\bm\omega_r}
		\end{bmatrix},
		\nonumber\\
		\mathbf y_r:=
		\begin{bmatrix}
			\nu^\star\\
			\bm\mu_r^\star
		\end{bmatrix},
		\qquad
		\mathbf M_r
		:=\rev{\mathbf N_r}\rev{\mathbf E_r^{\mathrm e}}\rev{\mathbf N_r}^\top.\nonumber
	\end{align}
	Substituting \eqref{eq:appendix_lambda_duals} into
	\eqref{eq:appendix_net_injection_affine} and imposing
	\(\rev{\mathbf N_r}\mathbf p^\star=\rev{\bm\rho_r}\) gives
	\begin{equation}
		\mathbf M_r\mathbf y_r
		=\rev{\bm\chi_r}-\rev{\mathbf N_r}\mathbf u.
		\label{eq:appendix_M_system}
	\end{equation}
	The vector in \eqref{eq:appendix_M_system} is
	\begin{equation}
		\rev{\bm\chi_r}
		:=\rev{\mathbf N_r}\rev{\bm\eta_r^{\mathrm e}}-\rev{\bm\rho_r}
		=
		\begin{bmatrix}
			\mathbf1^\top\rev{\bm\eta_r^{\mathrm e}}\\
			\rev{\mathbf W_r}\rev{\bm\eta_r^{\mathrm e}}-\rev{\bm\omega_r}
		\end{bmatrix}.\nonumber
	\end{equation}
	
	\rev{\emph{Claim.}
		The restricted matrix \(\mathbf N_{r,\mathcal J_r}\) has full row rank, where
		\(\mathcal J_r:=\{i:[\mathbf E_r^{\mathrm e}]_{ii}>0\}\) and
		\(\mathbf N_{r,\mathcal J_r}\) denotes the restriction of \(\mathbf N_r\) to
		columns indexed by \(\mathcal J_r\).}
	
	\rev{\emph{Proof of claim.}}
	Suppose \(\rev{\mathbf N_{r,\mathcal J_r}}\) does not have full row rank. Then
	there exists a nonzero vector \(\mathbf z\) such that
	\(\rev{\mathbf N_{r,\mathcal J_r}}^\top\mathbf z=\mathbf0\). Set
	\(\mathbf w:=\rev{\mathbf N_r}^\top\mathbf z\), so \(w_i=0\) for every
	\(i\in\rev{\mathcal J_r}\). \rev{Since the rows of \(\mathbf W_r\) differ
		from the corresponding signed active rows of \(\mathbf H\) only by multiples
		of the system-balance row, this is equivalent to a linear combination of the
		original system-balance and active-line gradients.} Take the linear combination specified by
	\(\mathbf z\) of the system-balance and active-line gradients, whose
	\(\mathbf p\)-block is \(\mathbf w\), and add the nodal-balance gradients
	with coefficient vector \(-\mathbf w\). This cancels the
	\(\mathbf p\)-block and leaves nonzero entries only in the generation and
	demand coordinates indexed by \(i\notin\rev{\mathcal J_r}\). At each such bus,
	both generation and demand are fixed at active bounds, because
	\([\rev{\mathbf E_r^{\mathrm e}}]_{ii}=0\). Their active-bound gradients can
	therefore cancel the remaining entries. The resulting coefficients are
	not all zero because \(\mathbf z\ne\mathbf0\), so this is a nontrivial
	linear dependence among the active-constraint gradients, contradicting
	Assumption~\ref{assum2}. 
	\rev{This proves the claim.} Hence
	\(\rev{\mathbf N_{r,\mathcal J_r}}\) has full row rank. Assumption~\ref{assum1}
	implies that the positive principal submatrix
	\(\rev{\mathbf E_{r,\mathcal J_r}^{\mathrm e}}\) is positive definite, and thus
	\begin{equation}
		\mathbf M_r
		=
		\rev{\mathbf N_{r,\mathcal J_r}}
		\rev{\mathbf E_{r,\mathcal J_r}^{\mathrm e}}
		\rev{\mathbf N_{r,\mathcal J_r}}^\top
		\succ\mathbf0.
		\label{eq:appendix_M_positive_definite}
	\end{equation}
	
	Solving \eqref{eq:appendix_M_system} and substituting into
	\eqref{eq:appendix_lambda_duals} gives
	\[
	\bm\lambda^\star
	=\rev{\mathbf N_r}^\top\mathbf M_r^{-1}\rev{\mathbf N_r}\mathbf u
	-\rev{\mathbf N_r}^\top\mathbf M_r^{-1}\rev{\bm\chi_r}.
	\]
	Accordingly, define
	\begin{equation}
		\mathbf K_r(t):=\rev{\mathbf N_r}^\top\mathbf M_r^{-1}\rev{\mathbf N_r},
		\qquad
		\mathbf k_r(t):=-\rev{\mathbf N_r}^\top\mathbf M_r^{-1}\rev{\bm\chi_r}.
		\label{eq:appendix_K_k}
	\end{equation}
	Then
	\[
	\bm\lambda^\star(t;\mathbf u)
	=\mathbf K_r(t)\mathbf u(t)+\mathbf k_r(t).
	\]
	
	Substitution into \eqref{eq:appendix_gd_affine} and
	\eqref{eq:appendix_net_injection_affine} gives
	\[
	\mathbf x^\star(t;\mathbf u)
	=\bm\Gamma_r(t)\mathbf u(t)+\bm\gamma_r(t).
	\]
	For
	\(\mathbf x^\star=[(\mathbf g^\star)^\top,
	(\mathbf d^\star)^\top,(\mathbf p^\star)^\top]^\top\), the exact
	expressions are
	\begin{equation}
		\bm\Gamma_r(t)=
		\begin{bmatrix}
			\rev{\mathbf E_r^{\mathrm g}}\mathbf K_r(t)\\
			-\rev{\mathbf E_r^{\mathrm d}}\mathbf K_r(t)\\
			\rev{\mathbf E_r^{\mathrm e}}\mathbf K_r(t)-\mathbf I
		\end{bmatrix},
		\qquad
		\bm\gamma_r(t)=
		\begin{bmatrix}
			\rev{\mathbf E_r^{\mathrm g}}\mathbf k_r(t)+\rev{\bm\eta_r^{\mathrm g}}\\
			-\rev{\mathbf E_r^{\mathrm d}}\mathbf k_r(t)+\rev{\bm\eta_r^{\mathrm d}}\\
			\rev{\mathbf E_r^{\mathrm e}}\mathbf k_r(t)+\rev{\bm\eta_r^{\mathrm e}}
		\end{bmatrix}.\nonumber
	\end{equation}
	The derivation above applies on the relative interior of
	\(\mathcal C_r\). Since the regional formulas are affine and the
	\rev{market-clearing primal-dual solution is unique by the strict convexity in
		Assumption~\ref{assum1} together with the LICQ regularity in
		Assumption~\ref{assum2}}, taking limits from the relative
	interior extends both formulas to every boundary point of
	\(\mathcal C_r\).
	This proves part 1).

	\emph{Part 2: Continuity and piecewise-affine structure.}
	Only finitely many binding patterns are possible, and Part~1 gives an affine
	formula on every associated nonempty critical region. Under
	Assumptions~\ref{assum1}--\ref{assum2}, the lower-level primal solution and
	LMP vector are unique. Hence, at a point shared by adjacent critical
	regions, both regional formulas equal the same primal-dual solution and
	therefore agree. By definition of \(\mathcal U_{\mathrm{mc}}\), the lower-level problem
	is feasible for every \(\mathbf u\in\mathcal U_{\mathrm{mc}}\); Assumption~\ref{assum3}
	restricts the ESA's submitted schedules to this clearable set. Consequently,
	\(\mathbf x^\star(t;\mathbf u)\) and
	\(\bm\lambda^\star(t;\mathbf u)\) are continuous piecewise-affine functions
	on \(\mathcal U_{\mathrm{mc}}\), proving part 2).
	
	\emph{Part 3: Symmetry and positive semidefiniteness.}
	\rev{By \eqref{eq:appendix_M_positive_definite}, the reduced matrix
		\(\mathbf M_r=
		\mathbf N_{r,\mathcal J_r}\mathbf E_{r,\mathcal J_r}^{\mathrm e}
		\mathbf N_{r,\mathcal J_r}^{\top}\) is symmetric positive definite. Hence
		\(\mathbf M_r^{-1}\) is also symmetric positive definite. Using
		\eqref{eq:appendix_K_k},}
	\[
	\rev{\mathbf K_r(t)^\top
		=\left(\mathbf N_r^\top\mathbf M_r^{-1}\mathbf N_r\right)^\top
		=\mathbf N_r^\top\mathbf M_r^{-1}\mathbf N_r
		=\mathbf K_r(t).}
	\]

	\rev{Moreover,} for every
	\(\mathbf z\in\mathbb R^{|\mathcal N|}\),
	\[
	\mathbf z^\top\mathbf K_r(t)\mathbf z
	=(\rev{\mathbf N_r}\mathbf z)^\top\mathbf M_r^{-1}
	(\rev{\mathbf N_r}\mathbf z)\geq0,
	\]
	because \(\rev{\mathbf M_r^{-1}\succ\mathbf0}\). Thus
	\(\mathbf K_r(t)\succeq\mathbf0\), proving part 3). 
	
Since \(t\) was arbitrary, the results of all three parts hold for every
\(t\in\mathcal T\), completing the proof of
Theorem~\ref{thm:affine_mapping}.
\end{IEEEproof}

\section{Proofs for Section~\ref{sec:esa_binding_pattern_decomposition}}

\subsection{Proof of Lemma~\ref{lem:regional_esa_objective} and
	Corollary~\ref{cor:esa_objective_continuity}}
\label{app:proof_regional_objective}
\begin{IEEEproof}[Proof of Lemma~\ref{lem:regional_esa_objective}]
	Fix \(r\in\mathcal R\) and \(\mathbf u\in\mathcal C_r\).  For each
	\(t\in\mathcal T\), substitute
	\(\bm\lambda^\star(t;\mathbf u)
	=\mathbf K_r(t)\mathbf u(t)+\mathbf k_r(t)\)
	from~\eqref{eq:lambda_affine} into~\eqref{eq:esa_objective_global}.  Using the
	current source-to-sink FTR convention,
	\[
	\lambda_j^\star(t;\mathbf u)-\lambda_i^\star(t;\mathbf u)
	=
	(\mathbf e_j-\mathbf e_i)^\top
	\bigl(\mathbf K_r(t)\mathbf u(t)+\mathbf k_r(t)\bigr).
	\]
The contribution of period \(t\) to the ESA objective is therefore
\begin{align*}
	&\mathbf u(t)^\top\mathbf K_r(t)\mathbf u(t)
	+\mathbf k_r(t)^\top\mathbf u(t)\\
	&\quad-
	\sum_{(i,j)\in\mathcal F}
	\kappa_{(i,j)}(t)
	(\mathbf e_j-\mathbf e_i)^\top
	\bigl(\mathbf K_r(t)\mathbf u(t)+\mathbf k_r(t)\bigr)\\
	&=
	\mathbf u(t)^\top\mathbf K_r(t)\mathbf u(t)
	+\bm\psi_r(t)^\top\mathbf u(t)+\zeta_r(t),
\end{align*}
with \(\bm\psi_r(t)\) and \(\zeta_r(t)\) given by
\eqref{eq:regional_linear_term} and~\eqref{eq:regional_constant_term}.
Summing over \(t\) proves~\eqref{eq:regional_esa_objective_quadratic}.
Theorem~\ref{thm:affine_mapping} gives
\(\mathbf K_r(t)\succeq\mathbf0\) for every \(t\); hence \(J_r\) is convex.
\end{IEEEproof}

\begin{IEEEproof}[Proof of Corollary~\ref{cor:esa_objective_continuity}]
	Lemma~\ref{lem:regional_esa_objective} establishes
	\(J(\mathbf u)=J_r(\mathbf u)\) on each \(\mathcal C_r\). \textcolor{black}{If \(\mathbf u\in\mathcal C_r\cap\mathcal C_s\), then by part~2 of
		Theorem~\ref{thm:affine_mapping}, the regional affine LMP representations
		agree at \(\mathbf u\). Hence,
		\[
		\mathbf K_r(t)\mathbf u(t)+\mathbf k_r(t)
		=
		\mathbf K_s(t)\mathbf u(t)+\mathbf k_s(t),
		\qquad t\in\mathcal T.
		\]
	}
	Moreover, note that both regional ESA formulas are obtained by evaluating the same expression
	\eqref{eq:esa_objective_global} at this common schedule $\mathbf{u}$ and LMP trajectory.
	Thus \(J_r(\mathbf u)=J_s(\mathbf u)\) on every common boundary.  Since the
	finite collection \(\{\mathcal C_r\}_{r\in\mathcal R}\) covers
	\(\mathcal U_{\mathrm{mc}}\), and each \(J_r\) is quadratic and hence continuous, the
	boundary agreement proves that \(J\) is continuous and piecewise quadratic
	on \(\mathcal U_{\mathrm{mc}}\).
\end{IEEEproof}

\subsection{Proof of Lemma~\ref{lem:regional_price_uniqueness} and
	Theorem~\ref{thm:critical_region_decomposition}}
\label{app:proof_decomposition}
\begin{IEEEproof}[Proof of Lemma~\ref{lem:regional_price_uniqueness}]
Fix \(r\) and \(\mathbf B\), and let
\(\mathbf u^1,\mathbf u^2\in\mathcal U_r^\star(\mathbf B)\). The regional
	feasible set is convex, so
	\(\overline{\mathbf u}:=(\mathbf u^1+\mathbf u^2)/2\) is feasible.  Convexity
	of \(J_r\) and optimality of \(\mathbf u^1,\mathbf u^2\) imply that
	\(\overline{\mathbf u}\) is also optimal.  Let
	\(\rev{\Delta\mathbf u(t):=\mathbf u^1(t)-\mathbf u^2(t)}\). Direct expansion of the
	quadratic representation~\eqref{eq:regional_esa_objective_quadratic} gives
	\[
	\frac{J_r(\mathbf u^1)+J_r(\mathbf u^2)}{2}
	-J_r(\overline{\mathbf u})
	=
	\frac14\sum_{t\in\mathcal T}
	\rev{\Delta\mathbf u(t)}^\top\mathbf K_r(t)\rev{\Delta\mathbf u(t)}.
	\]
	The left-hand side is zero because all three schedules are optimal.  Every
	term on the right is nonnegative by
	\(\mathbf K_r(t)\succeq\mathbf0\), and therefore
	\[
	\rev{\Delta\mathbf u(t)}^\top\mathbf K_r(t)\rev{\Delta\mathbf u(t)}=0,
	\qquad t\in\mathcal T.
	\]
	For a symmetric positive-semidefinite matrix, this equality implies
	\(\mathbf K_r(t)\rev{\Delta\mathbf u(t)}=\mathbf0\).  Consequently,
	\[
	\mathbf K_r(t)\mathbf u^1(t)+\mathbf k_r(t)
	=
	\mathbf K_r(t)\mathbf u^2(t)+\mathbf k_r(t),
	\qquad t\in\mathcal T.
	\]
	By~\eqref{eq:lambda_affine}, the two schedules induce the same LMP trajectory.
Since the pair of optimizers was arbitrary, the conclusion holds for all
members of \(\mathcal U_r^\star(\mathbf B)\).
\end{IEEEproof}

\begin{IEEEproof}[Proof of Theorem~\ref{thm:critical_region_decomposition}]
	The critical regions cover the ISO-clearable parameter space:
	\[
	\mathcal U_{\mathrm{mc}}
	=
	\bigcup_{r\in\mathcal R}\mathcal C_r.
	\]
	Under Assumption~\ref{assum3}, the feasible ESA schedules in the Stackelberg
	problem are therefore
	\[
	\mathcal U(\mathbf B)\cap\mathcal U_{\mathrm{mc}}
	=
	\bigcup_{r\in\mathcal R}
	\left(
	\mathcal U(\mathbf B)\cap\mathcal C_r
	\right).
	\]
By Lemma~\ref{lem:critical_region_polyhedron}, the \(r\)-th set in this union
is described by
\(\mathbf u\in\mathcal U(\mathbf B)\) and
\(\mathbf F_r\operatorname{vec}(\mathbf u)\leq\mathbf h_r\). By
	Corollary~\ref{cor:esa_objective_continuity}, the global ESA objective equals
	\(J_r\) on that set.  It follows that
\begin{align*}
	\min_{\mathbf u\in\mathcal U(\mathbf B)\cap\mathcal U_{\mathrm{mc}}}J(\mathbf u)
	&=
	\min_{r\in\mathcal R}
	\;
	\min_{\substack{
			\mathbf u\in\mathcal U(\mathbf B)\\
			\mathbf F_r\operatorname{vec}(\mathbf u)\leq\mathbf h_r}}
	J_r(\mathbf u),
\end{align*}
	where an infeasible regional problem has value \(+\infty\).  This is exactly
	\eqref{best:ESA:nonrestrict}.  Hence any optimizer from a region attaining the
	outer minimum is globally optimal for the ESA.  \rev{Whenever a regional feasible
		set is nonempty, the corresponding minimum is attained because
		\(\mathcal U(\mathbf B)\cap\mathcal C_r\) is closed and bounded and
		\(J_r\) is continuous.} Together with the unique ISO market-clearing response, \rev{whose uniqueness
		follows from} Assumptions~\ref{assum1}--\ref{assum2}, such a schedule forms an
		equilibrium of the ESA--ISO Stackelberg interaction.
\end{IEEEproof}
\section{Proofs for Section~\ref{sec:pattern_manipulation_ftr}}
\subsection{Proof of Proposition~\ref{prop:price_taking_implementation}}

\begin{IEEEproof}
	Write the centralized-storage benchmark
	\eqref{eq:centralized_storage_benchmark} equivalently as the minimization of
	generation cost minus demand utility. For each bus $i\in\mathcal N$, let
	$\underline{\bm\sigma}_i^{\mathrm{cen}}\geq\mathbf 0$ and
	$\overline{\bm\sigma}_i^{\mathrm{cen}}\geq\mathbf 0$ denote the Lagrange
	multipliers associated with
	\[
	-\mathbf L\mathbf u_i\leq\mathbf 0,
	\qquad
	\mathbf L\mathbf u_i-B_i\mathbf 1\leq\mathbf 0,
	\]
	respectively, and let $\tau_i^{\mathrm{cen}}$ denote the multiplier associated
	with
	$
	\mathbf 1^\top\mathbf u_i=0.
	$
	
	At an optimal solution of
	\eqref{eq:centralized_storage_benchmark}, the KKT stationarity condition with
	respect to $\mathbf u_i$ is
	\begin{equation}
		\bm\lambda_i^{\mathrm{cen}}
		-\mathbf L^\top\underline{\bm\sigma}_i^{\mathrm{cen}}
		+\mathbf L^\top\overline{\bm\sigma}_i^{\mathrm{cen}}
		+\tau_i^{\mathrm{cen}}\mathbf 1
		=
		\mathbf 0,
		\qquad
		\forall i\in\mathcal N,
		\label{eq:centralized_u_stationarity}
	\end{equation}
	where $\bm\lambda_i^{\mathrm{cen}}$ is the LMP trajectory at bus $i$.
	The remaining KKT conditions associated with the storage variables are
	\begin{align}
		&
		\mathbf 0
		\leq
		\mathbf L\mathbf u_i^{\mathrm{cen}}
		\leq
		B_i\mathbf 1,
		\qquad
		\mathbf 1^\top\mathbf u_i^{\mathrm{cen}}
		=
		0,
		\label{eq:centralized_u_primal}
		\\
		&
		\underline{\bm\sigma}_i^{\mathrm{cen}}
		\geq
		\mathbf 0,
		\qquad
		\overline{\bm\sigma}_i^{\mathrm{cen}}
		\geq
		\mathbf 0,
		\label{eq:centralized_u_dual}
		\\
		&
		\left(\underline{\bm\sigma}_i^{\mathrm{cen}}\right)^\top
		\mathbf L\mathbf u_i^{\mathrm{cen}}
		=
		0,
		\qquad
		\left(\overline{\bm\sigma}_i^{\mathrm{cen}}\right)^\top
		\left(
		\mathbf L\mathbf u_i^{\mathrm{cen}}
		-B_i\mathbf 1
		\right)
		=
		0.
		\label{eq:centralized_u_cs}
	\end{align}
	
	Now consider a price-taking ESA facing the fixed LMP trajectory
	$\bm\lambda^{\mathrm{cen}}$:
	\begin{equation}
		\min_{\mathbf u\in\mathcal U(\mathbf B)}
		\sum_{t\in\mathcal T}
		\bm\lambda^{\mathrm{cen}}(t)^\top\mathbf u(t).
		\label{eq:price_taking_auxiliary}
	\end{equation}
	Its stationarity condition with respect to $\mathbf u_i$ is exactly
	\eqref{eq:centralized_u_stationarity}. Moreover, because
	\eqref{eq:price_taking_auxiliary} has the same storage constraints
	$\mathbf u\in\mathcal U(\mathbf B)$, its primal-feasibility,
	dual-feasibility, and complementary-slackness conditions are exactly
	\eqref{eq:centralized_u_primal}--\eqref{eq:centralized_u_cs}.
	Hence,
	$\mathbf u^{\mathrm{cen}}$, together with
	$\underline{\bm\sigma}^{\mathrm{cen}}$,
	$\overline{\bm\sigma}^{\mathrm{cen}}$, and
	$\bm\tau^{\mathrm{cen}}$, satisfies all KKT conditions of
	\eqref{eq:price_taking_auxiliary}. Since this problem is convex, the KKT
	conditions are sufficient for optimality. Therefore,
	\begin{equation}
		\mathbf u^{\mathrm{cen}}
		\in
		\operatorname*{arg\,min}_{\mathbf u\in\mathcal U(\mathbf B)}
		\sum_{t\in\mathcal T}
		\bm\lambda^{\mathrm{cen}}(t)^\top\mathbf u(t).
		\label{eq:price_taking_larger_set}
	\end{equation}
	
	Equation~\eqref{eq:price_taking_larger_set} establishes optimality over
	$\mathcal U(\mathbf B)$, whereas
	Proposition~\ref{prop:price_taking_implementation} additionally requires the
	schedule to be ISO-clearable, resulting in the feasible set
	$\mathcal U(\mathbf B)\cap\mathcal U_{\mathrm{mc}}$. Since
	$\mathbf u^{\mathrm{cen}}$ is part of a feasible solution of
	\eqref{eq:centralized_storage_benchmark}, it belongs to $\mathcal U_{\mathrm{mc}}$ and
	hence to $\mathcal U(\mathbf B)\cap\mathcal U_{\mathrm{mc}}$. Moreover, because
	$\mathcal U(\mathbf B)\cap\mathcal U_{\mathrm{mc}}\subseteq\mathcal U(\mathbf B)$,
	optimality over $\mathcal U(\mathbf B)$ implies optimality over this
	intersection.
	Thus,
	\[
	\mathbf u^{\mathrm{cen}}
	\in
	\operatorname*{arg\,min}_{\mathbf u\in
		\mathcal U(\mathbf B)\cap\mathcal U_{\mathrm{mc}}}
	\sum_{t\in\mathcal T}
	\bm\lambda^{\mathrm{cen}}(t)^\top\mathbf u(t).
	\]
	
	Finally, for fixed $\bm\lambda^{\mathrm{cen}}$, the FTR term in
	\eqref{eq:price_taking_implementation} is independent of $\mathbf u$ and
	therefore does not affect the optimizer. Hence,
	\eqref{eq:price_taking_implementation} follows.
\end{IEEEproof}

\subsection{Proof of Theorem~\ref{thm:ftr_welfare_implications}}
\label{app:proof_ftr_welfare_implications}
\begingroup
\begin{IEEEproof}
	We prove the two statements separately.
	
	\emph{Part 1: $\bm{\kappa}=\mathbf 0$.}
	For a given submitted ESA schedule \(\mathbf u\), define the optimal-value
	function of the ISO market-clearing problem as
	\begin{subequations}
		\label{eq:iso_value_function}
		\begin{align}
			\Phi(\mathbf u)
			:=
			\min_{\mathbf g,\mathbf d,\mathbf p}
			\quad
			&
			\sum_{t\in\mathcal T}
			\left[
			C_t\!\left(\mathbf g(t)\right)
			-
			V_t\!\left(\mathbf d(t)\right)
			\right]
			\label{eq:iso_value_function_obj}
			\\
			\text{s.t.}\quad
			&
			\eqref{DCOPF:net_power:full},
			~
			\eqref{DCOPF:Constraints}.
			\label{eq:iso_value_function_con}
		\end{align}
	\end{subequations}
	By the definition of social welfare in~\eqref{eq:social_welfare},
	\(\Phi(\mathbf u)=-\mathrm{SW}(\mathbf u)\). The function
	$\Phi$ is convex in $\mathbf u$ by the standard value-function result that
	partial minimization of a jointly convex function over affine constraints
	preserves convexity~\cite[Sec.~3.2.5]{boyd2004convex}: the objective in
	\eqref{eq:iso_value_function_obj} is convex in the ISO variables, and
	$\mathbf u$ enters only affinely in the nodal-balance constraints. Moreover,
	the sensitivity theorem for convex programs gives the Lagrange multiplier of
	\eqref{DCOPF:net_power:full} as a subgradient of the optimal-value function
	with respect to the right-hand-side perturbation~\cite[Sec.~5.6.3]{boyd2004convex}.
	With the sign convention in \eqref{DCOPF:net_power:full}, this gives
	\begin{equation}
		\nabla_{\mathbf u(t)}\Phi(\mathbf u)
		=
		\bm{\lambda}^{\star}(t;\mathbf u),
		\qquad t\in\mathcal T,\nonumber
	\end{equation}
	whenever the value function is differentiable. Under
	Assumption~\ref{assum2}, LICQ implies uniqueness of the KKT multipliers, and
	hence the LMP vector used in the sensitivity formula is well defined. Even
	without invoking differentiability, the same sensitivity theorem gives
	\(\{\bm\lambda^\star(t;\mathbf u)\}_{t\in\mathcal T}\) as a subgradient of
	\(\Phi\), which is sufficient for the supporting-hyperplane inequality used
	below.\textcolor{black}{\footnote{This argument is analogous to the convex under-estimator
			argument used in Proposition~2 of the proof of Theorem~1
			in~\cite{contreras2017participation}.}}
	
	The supporting-hyperplane inequality for $\Phi$, evaluated at an arbitrary
	feasible $\mathbf u$ and compared with $\mathbf 0$, therefore yields
	\begin{align}
		\Phi(\mathbf 0)
		&\geq
		\Phi(\mathbf u)
		+
		\sum_{t\in\mathcal T}
		\bm{\lambda}^{\star}(t;\mathbf u)^\top
		\left(\mathbf 0-\mathbf u(t)\right),
		\nonumber\\
		\mathrm{SW}(\mathbf u)
		&\geq
		\mathrm{SW}(\mathbf 0)
		-
		\sum_{t\in\mathcal T}
		\bm{\lambda}^{\star}(t;\mathbf u)^\top\mathbf u(t).
		\label{eq:welfare_payment_bound}
	\end{align}

	When $\bm{\kappa}=\mathbf 0$, the ESA objective in
	\eqref{bilevel:obj} is precisely its net energy-market payment. Since
	$\mathbf 0\in\mathcal U(\mathbf B)$ and the zero schedule is ISO-clearable,
	optimality of any equilibrium schedule $\mathbf u^\star$ gives
	\begin{equation}
		\sum_{t\in\mathcal T}
		\bm{\lambda}^{\star}(t;\mathbf u^\star)^\top
		\mathbf u^\star(t)
		\leq
		\sum_{t\in\mathcal T}
		\bm{\lambda}^{\star}(t;\mathbf 0)^\top\mathbf 0
		=
		0.
		\label{eq:zero_schedule_outside_option}
	\end{equation}
	Applying \eqref{eq:welfare_payment_bound} to $\mathbf u^\star$ and then using
	\eqref{eq:zero_schedule_outside_option} gives
	\[
	\mathrm{SW}(\mathbf u^\star)
	\geq
	\mathrm{SW}(\mathbf 0)
	=
	\mathrm{SW}^{\mathrm{cen}}(\mathbf 0).
	\]
	The equality holds because fixing $\mathbf u=\mathbf 0$ reproduces exactly
	the no-storage market-clearing problem. On the other hand, the ISO dispatch
	induced by $\mathbf u^\star$ is feasible for the centralized problem
	\eqref{eq:centralized_storage_benchmark}. Since that problem maximizes welfare
	over every $\mathbf u\in\mathcal U(\mathbf B)$ and every associated feasible
	market dispatch,
	\[
	\mathrm{SW}^{\mathrm{cen}}(\mathbf B)
	\geq
	\mathrm{SW}(\mathbf u^\star).
	\]
	Combining the preceding inequalities proves \eqref{eq:welfare_without_ftr}.
	
	\emph{Part 2: a counterexample with a nonzero FTR position.}
	Consider a two-bus, two-period system with one lossless line directed from bus
	1 to bus 2 and having capacity $\overline f=4$. In each period, bus 2 has a
	fixed demand of $3$, bus 1 has no demand, and the utility of the fixed demand
	is normalized to zero. The generator costs are
	\begin{align*}
		C_{1,1}(g)&=\tfrac12 g^2,
		&
		C_{2,1}(g)&=4g+\tfrac92 g^2,\\
		C_{1,2}(g)&=g+\tfrac12 g^2,
		&
		C_{2,2}(g)&=4g+\tfrac92 g^2,
	\end{align*}
	with nonnegative outputs and sufficiently large generation capacities. The
	ESA controls one storage unit at bus 2 with energy capacity $B_2=2$ and
	$B_1=0$. Its feasible schedules are therefore
	\[
	u_2(1)=x,\qquad u_2(2)=-x,\qquad 0\leq x\leq2,
	\]
	with $u_1(1)=u_1(2)=0$.
	
	Figure~\ref{fig:counterexample_two_period} illustrates the resulting
	equilibrium dispatch and prices.
\begin{center}
	\begin{minipage}{\columnwidth}
		\centering
		\resizebox{\columnwidth}{!}{%
			\begin{tikzpicture}[
				>=Latex,
				bus/.style={
					circle,
					minimum size=12mm,
					line width=0.8pt,
					align=center,
					font=\small
				},
				storage/.style={
					rounded corners=1.5pt,
					minimum width=14mm,
					minimum height=8mm,
					draw=purple!70!black,
					fill=purple!12,
					line width=0.8pt,
					align=center,
					font=\scriptsize
				},
				load/.style={
					rounded corners=1.5pt,
					minimum width=13mm,
					minimum height=7mm,
					draw=gray!75!black,
					fill=gray!10,
					line width=0.8pt,
					align=center,
					font=\scriptsize
				}
				]
				
				\filldraw[
				rounded corners=3pt,
				fill=blue!3,
				draw=blue!45!black,
				line width=0.8pt
				]
				(0,0) rectangle (5.05,5.20);
				\node[font=\bfseries\small] at (2.525,4.8)
				{Period 1: charging};
				
				\node[
				bus,
				draw=blue!65!black,
				fill=blue!12
				] (b11) at (1.00,2.55)
				{Bus 1\\$\lambda_1(1)=4$};
				\node[
				bus,
				draw=orange!75!black,
				fill=orange!15
				] (b21) at (3.85,2.55)
				{Bus 2\\$\lambda_2(1)=13$};
				
				\node[font=\small, align=center] at (0.75,3.7)
				{$g_1(1)=4$};
				\node[font=\small, align=center] at (4,3.7)
				{$g_2(1)=1$};
				
				\draw[
				->,
				gray!80!black,
				line width=1.2pt
				]
				(b11.east) -- node[
				below,
				yshift=-2mm,
				font=\scriptsize,
				text=gray!80!black,
				align=center
				] {$f(1)=\overline f$}
				(b21.west);
				
				\draw[
				->,
				dashed,
				green!55!black,
				line width=1.2pt,
				bend left=55
				]
				(b11.north east) to node[
				above,
				font=\scriptsize,
				text=green!45!black,
				align=center
				] {FTR $1\!\to\!2$\\[-0.5mm]$\kappa_{(1,2)}(1)=4$}
				(b21.north west);
				
				\node[storage] (s1) at (2.55,0.63)
				{Storage\\$u_2(1)=+2$};
				\node[load] (d1) at (4.25,0.63)
				{Load\\$d_2(1)=3$};
				\draw[->, purple!70!black, line width=0.9pt]
				(b21.south west) -- (s1.north);
				\draw[->, gray!75!black, line width=0.9pt]
				(b21.south east) -- (d1.north);
				
				\filldraw[
				rounded corners=3pt,
				fill=green!3,
				draw=green!40!black,
				line width=0.8pt
				]
				(5.35,0) rectangle (10.40,5.20);
				\node[font=\bfseries\small] at (7.875,4.8)
				{Period 2: discharging};
				
				\node[
				bus,
				draw=blue!65!black,
				fill=blue!12
				] (b12) at (6.35,2.55)
				{Bus 1\\$\lambda_1(2)=2$};
				\node[
				bus,
				draw=orange!75!black,
				fill=orange!15
				] (b22) at (9.20,2.55)
				{Bus 2\\$\lambda_2(2)=2$};
				
				\node[font=\small, align=center] at (6.4,3.7)
				{$g_1(2)=1$};
				\node[font=\small, align=center] at (9.2,3.7)
				{$g_2(2)=0$};
				
				\draw[
				->,
				gray!80!black,
				line width=1.2pt
				]
				(b12.east) -- node[
				below,
				yshift=-2mm,
				font=\scriptsize,
				text=gray!80!black,
				align=center
				] {$f(2)=1$}
				(b22.west);
				
				\node[storage] (s2) at (7.9,0.63)
				{Storage\\$u_2(2)=-2$};
				\node[load] (d2) at (9.60,0.63)
				{Load\\$d_2(2)=3$};
				\draw[->, purple!70!black, line width=0.9pt]
				(s2.north) -- (b22.south west);
				\draw[->, gray!75!black, line width=0.9pt]
				(b22.south east) -- (d2.north);
			\end{tikzpicture}%
		}
		\captionsetup{hypcap=false}
		\captionof{figure}{Two-bus, two-period counterexample. At the equilibrium schedule,
			the ESA charges two units in period~1, binds the line, and creates the
			LMP spread that supports its FTR payoff. It discharges two units in
			period~2, when the line is uncongested.}
		\label{fig:counterexample_two_period}
	\end{minipage}
\end{center}

	\textcolor{black}{For a given storage schedule $x$, the period-1 net demand at bus~2 is
		$3+x$. Since the line capacity is $\overline f=4$, the line is uncongested
		for $0\leq x\leq1$ and binds for $1\leq x\leq2$. Therefore, the resulting
		ISO-clearing dispatch is
		\begin{equation}
			(g_1(1),g_2(1),f(1))
			=
			\begin{cases}
				(3+x,0,3+x), & 0\leq x\leq1,\\
				(4,x-1,4),  & 1\leq x\leq2.
			\end{cases}
			\label{eq:counterexample_dispatch_t1}
		\end{equation}
		In period~2, storage discharges $x$, so the net demand at bus~2 is $3-x$.
		This demand can be served entirely by the generator at bus~1 without congesting
		the line, and hence
		\begin{equation}
			(g_1(2),g_2(2),f(2))
			=
			(3-x,0,3-x).
			\label{eq:counterexample_dispatch_t2}
		\end{equation}
	}
	\textcolor{black}{Consequently, the LMPs are
		\begin{equation}
			(\lambda_1(1),\lambda_2(1))
			=
			\begin{cases}
				(3+x,3+x),&0\leq x\leq1,\\
				(4,4+9(x-1)),&1\leq x\leq2,
			\end{cases}
\nonumber
		\end{equation}
		and
		\begin{equation}
			(\lambda_1(2),\lambda_2(2))
			=
			(4-x,4-x).\nonumber
		\end{equation}
		Moreover, the corresponding total generation cost as a function of $x$ is obtained from
		\eqref{eq:counterexample_dispatch_t1} and
		\eqref{eq:counterexample_dispatch_t2}. For $0\leq x\leq1$, the dispatch is
		$(g_1(1),g_2(1))=(3+x,0)$ and $(g_1(2),g_2(2))=(3-x,0)$, so the total
		generation cost is $12-x+x^2$. For $1\leq x\leq2$, the dispatch is
		$(g_1(1),g_2(1))=(4,x-1)$ and $(g_1(2),g_2(2))=(3-x,0)$, so the total
		generation cost is $16-9x+5x^2$. Thus,
		\begin{equation}
			C^{\mathrm{gen}}(x)
			=
			\begin{cases}
				12-x+x^2, & 0\leq x\leq1,\\
				16-9x+5x^2, & 1\leq x\leq2.
			\end{cases}
			\label{eq:counterexample_generation_cost}
		\end{equation}
	}
	
	\textcolor{black}{Let the ESA hold a period-1 FTR obligation of quantity
		$\kappa_{(1,2)}(1)=4$ from bus 1 to bus 2 and no FTR in period 2. The ESA
		objective, including its energy-market payment and subtracting its FTR payoff,
		is
		\begin{equation}
			J(x)
			=
			\begin{cases}
				2x^2-x,&0\leq x\leq1,\\
				10x^2-45x+36,&1\leq x\leq2.
			\end{cases}
			\label{eq:counterexample_esa_objective}
		\end{equation}
		Indeed, for $x\geq1$, the energy-market payment is
		$x[\lambda_2(1)-\lambda_2(2)]=10x^2-9x$, whereas the FTR payoff is
		$4[\lambda_2(1)-\lambda_1(1)]=36(x-1)$. The minimum of the first branch of
		\eqref{eq:counterexample_esa_objective} is $-1/8$, attained at $x=1/4$.
		The derivative of the second branch is $20x-45<0$ on $[1,2]$, so that branch
		is minimized at $x=2$, where $J(2)=-14<-1/8$. Hence the unique equilibrium
		schedule is
		\[
		\mathbf u^\star:
		\qquad
		u_2^\star(1)=2,
		\qquad
		u_2^\star(2)=-2.
		\]
		At this schedule, the ESA pays $22$ in the energy market but receives $36$
		from its FTR position; thus, the FTR payoff strictly dominates the adverse
		energy-market payment.
	}
	\textcolor{black}{It remains to compare welfare. Since demand utility is fixed and normalized
		to zero, welfare equals minus total generation cost. Without storage
		($x=0$), total generation cost is
		\begin{equation}
			C^{\mathrm{gen}}(0)=12,
			\nonumber
		\end{equation}
		and hence $\mathrm{SW}^{\mathrm{cen}}(\mathbf0)=-12$. At the equilibrium
		$x=2$, total generation cost is
		\begin{equation}
			C^{\mathrm{gen}}(2)=18,
			\nonumber
		\end{equation}
		so $\mathrm{SW}(\mathbf u^\star)=-18$. Finally, under centralized storage dispatch, the ISO chooses $x\in[0,2]$ to
		minimize the total generation cost in
		\eqref{eq:counterexample_generation_cost}. The first branch is minimized at
		$x=1/2$, yielding total generation cost $47/4$. The unconstrained minimizer of
		the second branch is $x=9/10$, which lies outside the interval $[1,2]$; hence
		the minimum over the second branch is attained at $x=1$. Therefore, the
		centralized optimum is attained at $x=1/2$, with total generation cost
		$47/4$.}
	Therefore,
	\[
	\mathrm{SW}(\mathbf u^\star)
	=
	-18
	<
	-12
	=
	\mathrm{SW}^{\mathrm{cen}}(\mathbf0)
	\leq
	-\frac{47}{4}
	=
	\mathrm{SW}^{\mathrm{cen}}(\mathbf B),
	\]
	which establishes the existence claim in \eqref{eq:welfare_with_ftr}.
\end{IEEEproof}
\endgroup
\section{Proofs for Section~\ref{sec:iso_binding_pattern_control}}
\subsection{Proof of Proposition~\ref{prop:linear_safe_capacity_limits} and
	Corollary~\ref{cor:safe_participation_capacity_limits}}
\label{app:proof_capacity_limits}
\begin{IEEEproof}[Proof of Proposition~\ref{prop:linear_safe_capacity_limits}]
	Let $h_k$ denote the $k$-th entry of $\mathbf h_{\mathrm{des}}$. By Assumption~\ref{assump:polyhedral_desired_set}, the containment
	\(\mathcal U(\overline{\mathbf B})\subseteq\mathcal C_{\mathrm{des}}\)
	holds if and only if, for every \(k=1,\ldots,m_{\mathrm{des}}\),
\begin{equation}
	\max_{\mathbf u\in\mathcal U(\overline{\mathbf B})}
	\mathbf f_k^\top\operatorname{vec}(\mathbf u)
	\leq h_k.
	\label{eq:appendix_support_condition}
\end{equation}
	Using the bus-wise partition~\eqref{eq:desired_row_partition} and the
	separability of the storage constraints,
\begin{align*}
	\max_{\mathbf u\in\mathcal U(\overline{\mathbf B})}
	\mathbf f_k^\top\operatorname{vec}(\mathbf u)
	&=
	\sum_{i\in\mathcal N}
	\max_{\substack{
			\mathbf0\leq\mathbf L\mathbf u_i
			\leq\overline B_i\mathbf1\\
			\mathbf1^\top\mathbf u_i=0}}
	\mathbf f_{k,i}^\top\mathbf u_i.
\end{align*}
	If \(\overline B_i>0\), the substitution
	\(\mathbf u_i=\overline B_i\mathbf w_i\) gives
	\[
	\max_{\substack{
			\mathbf0\leq\mathbf L\mathbf u_i
			\leq\overline B_i\mathbf1\\
			\mathbf1^\top\mathbf u_i=0}}
	\mathbf f_{k,i}^\top\mathbf u_i
	=
	\overline B_i\xi_{k,i}.
	\]
	The same identity holds when \(\overline B_i=0\), because then
	\(\mathbf L\mathbf u_i=\mathbf0\), and the invertibility of \(\mathbf L\)
	implies \(\mathbf u_i=\mathbf0\).  Therefore
	\[
	\max_{\mathbf u\in\mathcal U(\overline{\mathbf B})}
	\mathbf f_k^\top\operatorname{vec}(\mathbf u)
	=
	\sum_{i\in\mathcal N}\xi_{k,i}\overline B_i.
	\]
	Substituting this equality into~\eqref{eq:appendix_support_condition} for
	every row \(k\) gives
$
	\mathcal U(\overline{\mathbf B})
	\subseteq\mathcal C_{\mathrm{des}}
	\quad\Longleftrightarrow\quad
	\bm\Xi\overline{\mathbf B}\leq\mathbf h_{\mathrm{des}},
$
	as claimed.
\end{IEEEproof}

\begin{IEEEproof}[Proof of Corollary~\ref{cor:safe_participation_capacity_limits}]
	Let
	\(\mathbf0\leq\mathbf B\leq\overline{\mathbf B}\) componentwise.  Every
	\(\mathbf u\in\mathcal U(\mathbf B)\) satisfies
	\[
	\mathbf0
	\leq\mathbf L\mathbf u_i
	\leq B_i\mathbf1
	\leq\overline B_i\mathbf1,
	\qquad
	\mathbf1^\top\mathbf u_i=0,
	\]
	for each \(i\), and hence
	\(\mathcal U(\mathbf B)\subseteq\mathcal U(\overline{\mathbf B})\).
	Feasibility of \(\overline{\mathbf B}\) for
	\eqref{eq:capacity_limit_design_lp} gives
	\(\bm\Xi\overline{\mathbf B}\leq\mathbf h_{\mathrm{des}}\).  Proposition
	\ref{prop:linear_safe_capacity_limits} then yields
	\(\mathcal U(\overline{\mathbf B})\subseteq\mathcal C_{\mathrm{des}}\).
	Combining the two inclusions proves~\eqref{eq:safe_capacity_guarantee}.
\end{IEEEproof}

\end{document}